\documentclass[manuscript,screen,nonacm,11pt]{acmart}

\setcopyright{none}
\acmDOI{}

\usepackage{booktabs}
\usepackage{pgfplots}
\pgfplotsset{compat=1.15}

\newtheorem{theorem}{Theorem}
\newtheorem{lemma}{Lemma}
\newtheorem{proposition}{Proposition}
\newtheorem{corollary}{Corollary}
\newtheorem{definition}{Definition}
\newtheorem{axiom}{Axiom}
\newtheorem{conjecture}{Conjecture}
\newtheorem{remark}{Remark}
\newtheorem{prediction}{Prediction}
\newtheorem{example}{Example}

\newcommand{\csm}{\textsc{csm}}
\newcommand{\act}{\mathsf{A}}
\newcommand{\traf}{\mathsf{F}}
\newcommand{\cw}{\mathrm{vcw}}
\newcommand{\OO}{O}
\newcommand{\pJ}{\,\mathrm{pJ}}

\begin{document}

\title{The Price of Remembering: A Calibrated Energy Law for Computation}

\author{Mohamed Amine Bergach}
\email{mbergach@gmail.com}
\affiliation{%
  \institution{Illumina}
  \city{San Diego}
  \state{California}
  \country{USA}
}
\renewcommand{\shortauthors}{}

\begin{abstract}
Where does a computer's energy go? Mostly into keeping, not into
computing. A bit held in fast storage draws power for every second it
stays there, and it costs energy again each time it moves between
storage levels. We call the first cost \emph{rent} and the second
\emph{fare}, and we state one law: the energy of a computation is at
least its operations, plus rent on every live bit for as long as it
lives, plus fare on every bit moved. The model under the law prices
control as well as data. There is no free clock, and any unpriced
register would make the theorems false. One lemma does most of the
work: every use of a value is paid for by rent, by fare, or by
computing the value again. Three things follow. Exact attention
brings every past token back for every new one, so its energy grows
with the square of the context length, while a recurrent model with a
fixed state grows linearly. The square is a theorem for machines that
never re-read past tokens. Under a stated serving hypothesis it is
the fare on every past token, which passes the model's own arithmetic
near ten thousand tokens, the point where long-context serving
becomes bandwidth-bound today. Known bounds on memory over time become
joule floors: on any sequential machine with volatile working
storage, sorting $n$ items pays rent proportional to $n^2/\log n$
bit-steps on most inputs, and the bound for scrypt makes every
password guess cost joules that no amount of parallel hardware
reduces. The law is calibrated. On a synthesized 45\,nm processor
whose storage physically moves, the rent constant measured at gate
level is a clock term, $0.82$\,pJ per occupied slot per cycle, plus
$3.4$\,fJ per net transition, within $4\%$ on every rotating run on
bare or operand-isolated hardware. Four fifths of that rent is the
clock: keeping is mostly knowing when. On the same instrument,
staging a reduction across rings, the idle rings parked, beat one
flattening of it onto a single ring $83\times$ in rotation action.
We say what would prove the law wrong.
\end{abstract}

\maketitle

\section{Introduction}
\label{sec:intro}

A 64-bit addition at 45\,nm costs about a picojoule. Fetching its two
operands from off-chip memory costs a thousand to ten thousand times
more~\cite{Horowitz2014,KecklerDally11}. In a general-purpose core,
over $90\%$ of the energy of an instruction goes to overhead (fetching,
decoding, renaming, tagging, forwarding), not to the instruction
itself~\cite{Hameed10}. Engineers know this and price it in tables. We
state it as a law with lower bounds. (Throughout, ``measured'' means
extracted from gate-level simulation of a synthesized, pre-layout
netlist; we fabricated no silicon. Section~\ref{sec:instrument} states
the caveats.)

The law has three terms beyond the arithmetic. A computation pays
\emph{rent} on every bit it keeps: fast storage burns power for every
second a bit sits in it, whether or not the bit is touched, as leakage
in SRAM, as refresh in DRAM, and as physical rotation in the machine we
calibrate on. It pays \emph{fare} on every bit it moves between storage
levels, as a cache fill or a memory bus does. And it pays, far below
either, Landauer's toll on every bit it erases~\cite{Landauer61}. A
plain example: a phone number you will need again in a minute can be
kept in mind (rent), read again from the phone (fare), or worked out
again from wherever you first got it (recompute). Everything below is a
careful version of that choice.

The model beneath the law has one rule that sets it apart, and we adopt
it as a slogan: \emph{there is no free clock}. A machine's next step is
a function of the contents of its computed cells alone
(Definition~\ref{def:substrate}). Its schedule, its place in its own
program, and its sense of time are bits like any others, and they pay
rent while they are held. The rule is forced, not fussy. Give the model
any unpriced register, elapsed time included, and the theorems below
become false. The input is already held free in the environment, so
hiding it would buy nothing; what an unpriced clock hides is the
working set, the counter a scheduler keeps or the ranks a sorter has
learned so far, which a machine could keep in how long it waits
between steps and read back through its control, paying no rent on
exactly what Corollary~\ref{cor:sortjoules} charges. The same test forces the rule's second half, that the
state describes itself. A machine free to record a bit in \emph{which}
of two cells it occupies, rather than in what a cell holds, keeps its
working set in placement, one live bit among $2^m$ cells carrying $m$
bits rent-free, and again sorts for almost nothing. So in
Definition~\ref{def:substrate} a computed state is determined by its
contents read in a fixed order of the cells, and which cells are
occupied is information those contents already carry. The rule also
has a measured face: four fifths of the rent constant of
Section~\ref{sec:instrument} is the clock delivered to an occupied
slot's flops, the cost of live state knowing when it is, and the floor
an empty slot pays is the rent of the cell that records its emptiness,
the cost of the ring knowing where its packets are. The two are an
echo rather than an identity, since the clock term prices every live
flop, data included; but theory and netlist agree on the sentence that
keeping is mostly knowing when. Section~\ref{sec:choice} names what
our extraction leaves unpriced, the clock tree itself, as the open
capability.

The striking fact is that the rent functional, live bits summed over
time, which we call the \emph{action} of the computation, is already
the billing unit of four communities that do not cite one another.
Memory-hard cryptography prices ASIC adversaries by \emph{cumulative
memory complexity}, exactly $\sum_t \text{space}(t)$, and has built a
lower-bound machinery for it~\cite{AlwenSerbinenko15,AlwenBlocki16,
AlwenBlockiPietrzak18,RenDevadas17,BlockiRenZhou18}. Complexity theory
recently imported the measure and proved that sorting requires
cumulative memory $\tilde\Omega(n^2)$, motivated by cloud platforms
that bill in gigabyte-seconds~\cite{BeameKornerup23}. Computer
architecture tabulates the fare term (picojoules per bit per
millimeter, per cache level, per DRAM
interface)~\cite{KecklerDally11,Horowitz2014} and prices the rent term
in every DRAM refresh-power specification~\cite{LiuJVM12}. The theory
of energy-efficient algorithms prices the erasure
term~\cite{DemaineLMT16}. Valiant has asked for a fine-grained
\emph{energy-centric} counterpart of the Church--Turing
thesis~\cite{Valiant24}. We propose one: the four bills are one
functional, the functional has a physical constant, and lower bounds
on it are energy lower bounds that no algorithmic or architectural
cleverness evades, only trades.

\subsection{Results}
\begin{itemize}
\item \textbf{The law (Section~\ref{sec:law}).} Over leveled volatile
  substrates (levels with capacities, hold powers $p_\ell$, and boundary
  fares $c_\ell$, under a volatility axiom grounded in device physics)
  energy decomposes as operations plus rent $\sum_\ell p_\ell \act_\ell$
  plus fare $\sum_\ell c_\ell \traf_\ell$; Landauer's erasure toll sits
  far below both (Remark~\ref{rem:erasure}). Our workhorse is a
  \emph{serving lemma}: every use of a value is served by residence since
  its last use (rent), by a fetch (fare), or by recomputation
  (operations), with disjoint charges. The model bills control as well
  as data: the next step is a function of priced state alone, read as
  contents, and with any unpriced register the bridges below fail
  (Definition~\ref{def:substrate}).
\item \textbf{Bridges (Section~\ref{sec:bridges}).} Existing lower
  bounds become joule floors. Through a simulation lemma, the
  cumulative-memory bounds of \cite{BeameKornerup23} give: any
  one-station machine sorting $n$ items on a substrate whose working
  levels are volatile pays rent on its working set $\geq
  p_{\min}\tau_0\,\Omega(n^2/\log n)$ on most inputs, with $\tau_0$
  the fast level's access time, no synchrony hypothesis, and
  the input held rent-free and re-readable. The floor is on the working
  set, not on holding the input, which is what makes it a theorem, and
  it sits at the picojoule scale where machines live, not at $kT\ln 2$.
  The cumulative-memory bound for scrypt~\cite{ACPRT17} transfers with
  no loss to parallelism: every machine evaluating it through oracle
  stations, at any width and whatever its holds, pays rent $\geq
  p_{\min}\tau_0\,\Omega(n^2 w)$ per guess, the
  ASIC adversary's joule floor (Corollary~\ref{cor:mhf}). Hong--Kung I/O
  bounds~\cite{HongKung81,IronyToledoTiskin04} give the fare floor
  $\Omega(n^3/\sqrt S)$ for matrix multiplication, which we port to
  moving-storage hierarchies.
\item \textbf{The meeting bound (Section~\ref{sec:meeting}).} Exact
  causal attention over $T$ steps pays, unconditionally in the streaming
  setting, rent $\Omega(p_{\min}\,\Delta t\, d\,T^2)$ on $d$ bits per
  token, through a known space lower bound for one attention function:
  millijoules at today's constants by extrapolation, the bound being
  asymptotic in $d$. With a per-token key--value footprint of $\kappa$
  bits and a route hypothesis it pays $E \geq
  \kappa\,\frac{T(T-1)}{2}\,\gamma$, with $\gamma$ the least of the
  platform's fare, its rent per step, and its recomputation floor: tens
  of kilojoules at today's constants, the square with its constant
  filled in; a capacity form needs no timing. A recurrent state of
  $\sigma$ bits pays a memory term linear in $T$, so the separation is
  $\Theta(\kappa T/\sigma)$. With today's constants the attention floor
  overtakes the model's arithmetic near $T \approx 1$--$3 \times 10^4$,
  where long-context serving becomes bandwidth-bound in practice.
  FlashAttention, paged KV, and gradient checkpointing are placements on
  the law's frontier, not escapes from it (Figure~\ref{fig:frontier}).
\item \textbf{Structure: why hierarchies exist
  (Section~\ref{sec:structure}).} In the cyclic-storage model that
  abstracts our calibrating machine, a staged hierarchy beats a
  flattened single ring in rotation action by $\Omega(N/(s\,b))$ on
  $N$-leaf reductions, for $s$ stations and a stage size $b$ fixed by
  the window ($b = 4$ at $W = 8$), linear in $N$. Priced, the
  hierarchy's bill is parked rent, and on our instrument's constants
  that rent sits above the single ring's floor at every $N$: the
  theorem separates rotation, not energy, and we say where its
  balance point lies. The instrument's eight-seed instance, staged
  against one relay-ladder flattening of the same program, realizes
  $83\times$ in rotation action. Necessity, by
  contrast, is asymptotic: bounded reach forces value-cutwidth
  $\OO(W)$, so reductions wider than $2^{8W}$ leaves cannot seat on
  one ring at all, a threshold beyond every practical size; optimal
  seating is NP-hard to approximate within any constant; and systolic
  arrays embed exactly (Appendix~\ref{app:proofs}). Economy, not
  feasibility, is why hierarchies exist.
\item \textbf{The dependence tax (Section~\ref{sec:boundary}).} On a
  cyclic store of period $P$, an $L$-step dependent pointer chase costs
  expected time $\geq (L-1)\frac{P}{2}(1-\frac1n)$ whatever the
  placement, and a one-comparator mechanism in our instrument meets the
  bound within one cycle. This prices the boundary of static
  scheduling.
\item \textbf{The instrument (Section~\ref{sec:instrument}).} A
  synthesized 45\,nm ring processor whose storage \emph{moves} makes
  rent and fare one constant. Gate-level extraction from three real
  programs and two rotation runs by the difference method gives $E =
  \text{ops} + \text{fare} + \text{idle} + e\cdot\act$ with $e = 0.82\pJ
  + 3.4\,\mathrm{fJ} \times$ (net transitions per live slot-cycle), a
  clock term that does not depend on the data plus an activity term
  that does, fitting every static run on bare or operand-isolated
  hardware within $4\%$, from $0.90$ to $1.8\pJ$. The law is an
  equality somewhere real, and its constants are physical, not slack.
\item \textbf{Rent on choice (Section~\ref{sec:choice}).} The same
  measurements price \emph{capability}: moving a ring's freedom to
  stand still from a per-bit hold mux to a per-slot clock gate made
  it $28\%$ smaller and $91\times$ cheaper at idle in
  switching energy ($39\times$ with leakage counted), and isolating
  stations from data they may observe cut their churn from $27.0$ to
  $4.1\pJ$/cycle. A provisioned freedom bills every cycle, used or
  not; we state the principle and leave its formalization open.
\item \textbf{Testing the law (Section~\ref{sec:falsify}).} We give
  the one test that can refute the law's form (the two-term constant
  must predict every rotating run; it does, within $4\%$ on five, and
  the run it misses by $20\%$ is the un-isolated hardware, whose
  heavier nets isolation removes), two measurable predictions on its
  constants, and the counterexample surface (fast nonvolatile memory;
  reversible logic); either would reshape the law's terms, and the law
  says how.
\end{itemize}

\subsection{What kind of claim this is}
Like the Church--Turing thesis, the law is axioms plus theorems plus
evidence: axioms about physical substrates (volatility, embedding),
theorems inside the model (bridges, meeting bounds, obstructions,
separations), and measurements that pin the constants. We claim no
progress on classical open problems. We claim that one physically
calibrated functional organizes results now scattered across
cryptography, complexity, architecture, and machine-learning systems,
and that it yields new theorems where they meet. We call \eqref{eq:law}
a law in the sense of Section~\ref{sec:law}: a definition, an axiom
that its fast-level constants are positive, and a measured band. It can
be refuted through Section~\ref{sec:falsify}, and it is not a law in
the sense of a new equation of physics.

\emph{What is new, and where it sits.} The model is
Definition~\ref{def:substrate}. Its four rules are each forced, and we
show how the theorems fail without them: control is a function of
priced state alone; the state describes itself (which cells are
occupied is never information beside their contents); an unpriced
environment tape holds the input; and raw words and output words are
units, so an input bit enters the computed string only through a fire
that read its word and an output leaves as the item the cited bounds
count
(Section~\ref{sec:intro}; Remark~\ref{rem:input}; and, for the fourth,
without it one fire could gather bits of $2w$ input words, or a
partial overwrite could relabel a word's unread bits as computed, and
the simulation of Lemma~\ref{lem:sim} would need $w$ times its levels
or would not exist). The bridges rest on two lemmas the cited bounds do
not supply, a per-input form of Beame--Kornerup's theorem
(Proposition~\ref{prop:perinput}) and an oracle projection into the
parallel random-oracle model (Lemma~\ref{lem:prom}); the meeting bound
has an unconditional streaming form (Theorem~\ref{thm:uncond}); the
serving lemma's accounting is machine-checked, proved in TLAPS in its
atomic form (Appendix~\ref{app:tla}); and the law is calibrated on a
synthesized netlist whose measured constant resolves into clock plus
activity within $4\%$ (Section~\ref{sec:instrument}), a loop from cost
model to gate-level silicon that we have not seen closed elsewhere. The
structural theorems (Section~\ref{sec:structure}) and the instrument
sit at the edge of a theory venue's usual scope; they are in the paper
because the axioms are empirical claims and the instrument is their
evidence. Related work is collected in Section~\ref{sec:related}.

\section{The Law}
\label{sec:law}

\noindent\textbf{The model in plain words.} A machine has a few tiers
of storage, from fast and expensive to slow and cheap, and an unpriced
tape outside them that holds the input and receives the output. Every
bit held in a tier pays rent for as long as it stays. Every bit that
crosses from one tier to the next pays a fare. Operations run at fixed
places called stations and read what sits at their own tier. Time
passes only in holds; between holds the machine may move bits and fire
operations, and those take no time. Two rules keep the accounting
honest: the machine decides its next step from the contents of its
computed cells and from nothing else (no hidden clock, no hidden
counter), and those contents, read in a fixed order, say which cells
are occupied (no hiding a bit in \emph{where} a bit sits). The definition below makes each of these precise, and
Example~\ref{ex:add}, after the action law, shows the bill for adding
two numbers.

\begin{definition}[Leveled volatile substrate]\label{def:substrate}
\textbf{Storage.} Storage is in \emph{cells} of one bit, enumerated
within each level in a fixed order that groups them into
\emph{words} of $w$ consecutive cells, the groups fixed (words are
aligned), and may leave cells outside every word; such a cell never holds an input bit (whenever it is live it is computed). A
substrate has \emph{working} storage levels $0,\dots,L$ with
capacities $S_\ell$ (bits), access latencies $\tau_0 \leq \dots \leq
\tau_L$ (only $\tau_0$, the least, is used below), and \emph{hold powers} $p_\ell \geq 0$ (the least energy
per resident bit per unit time on the level; where a conveyor, below,
runs, the level may have one hold power while it runs and another
while it rests, and $p_{\min}$ ranges over both), and an
\emph{environment} level $L{+}1$ of
unbounded capacity and hold power $p_{L+1} = 0$, in two regions: an
\emph{input} region, which holds the input at the start and is
read-only, and an \emph{output} region, which must hold the output at
the end and is write-only.
\par\textbf{Fares.} Moving a bit across boundary $\ell$
(between levels $\ell$ and $\ell{+}1$) costs \emph{fare} $c_\ell$
(energy per bit); a bit crosses boundary $L$ downward only from the
input region, and only as part of its whole input word, which lands
in one word of level $L$ (the input is read, and may be re-read, at
random access, one word at a time; the fare is still per bit), and
upward only from computed cells and only into the output region,
from which nothing is ever read, and there too as a unit: the $w$
cells of one computed aligned word cross within one round into one
word of the output region, which is written once (an output bit is
what a fire wrote or what the program held, so an input word reaches
the output only through a fire that read it; and an output word is
an item, which is what the bounds we import count): the environment
is an input tape and an output tape, as in the sequential models we
import, not a store. The levels are written as a line for the
notation's sake; a substrate may instead have a graph of boundaries,
the environment adjacent to one designated working level, a fare and
a traffic count per edge, a move crossing one edge and charged in
either direction, and the sums of \eqref{eq:law} and the minimum
$p_{\min}$ below running over edges and levels; nothing below uses
the line except the two-level statements, which name their two
levels.
\par\textbf{Operations.} Operations fire at $s$
\emph{stations}, each at a working level, reading at most two words
resident at its level and writing bits at that level, in any number
(no result below depends on a bound), as a function of what it reads
(its operation), with per-operation energies. Lemma~\ref{lem:serve},
Definition~\ref{def:atomic}, Theorem~\ref{thm:meeting}, and
Theorem~\ref{thm:hk} place their stations at level 0, as two-level
statements do; nothing else below uses the level of a station.
\par\textbf{Steps.} A bit is \emph{live} at an instant if it
is held at a working level; the live cells with their levels and
contents, the environment's contents, and the time are the
\emph{state}, and a level never holds more than its capacity. A
\emph{step} is a \emph{hold} (time advances by the step's duration $d
> 0$; a level may carry its contents along a \emph{conveyor}, a
fixed permutation of its words, carrying each aligned word whole
with the order of its cells preserved, depending neither on the input
nor on the state, applied once per unit of time, a hold on such a
level lasting a whole number of units, as a shift register or a
rotating ring carries them; nothing is read, written, created, or
destroyed), a \emph{move} (one bit crosses one boundary into a
\emph{free} cell, one that is not live, at the far level, which must
have one within its capacity; the source
keeps its contents, so a move between working levels is a copy and
the source stays live until a fire frees it, a move from the input
region leaves the region as it was, and a move into the output
region writes it), or a \emph{fire} (one operation at one station; it may
also free live bits, whether or not it read or wrote them, which
cease to be live). A \emph{round} is a maximal
sequence of moves and fires between two holds. An \emph{execution} is
a finite sequence of states from an initial state, consecutive states
related by a step; its last state is its \emph{final} state, and an
execution may end without a final hold, so that the bits live in its
last round pay nothing, which only lowers the action. (In the
idiom of temporal logic an execution is a behavior that stutters after
its last step; holds are not stuttering steps, since they advance
time.)
\par\textbf{Action and synchrony.} A level's live count is constant
during holds, a conveyor
relocating live bits without changing their number, so the per-level
\emph{action} is the sum over holds $\act_\ell = \sum_{\text{holds}}
\mathrm{live}_\ell \cdot d = \int \mathrm{live}_\ell(t)\,dt$
(bit-seconds), and the \emph{traffic} $\traf_\ell$ is the number of
bits crossing boundary $\ell$. An execution is \emph{synchronous with
cycle $\Delta$ and width $q$} if every hold lasts at least $\Delta$
and every round contains at most $q$ fires; a clocked machine with $q$
stations and cycle time $\Delta$ produces such executions.

\par\textbf{Machines.} A \emph{machine} $M$ chooses the steps. Call
a live cell \emph{raw} if it holds an input bit not yet overwritten
by a fire and \emph{computed} otherwise. Raw words are units: an
input word enters a working level whole, and its cells are
thereafter moved between working levels, into one aligned word,
overwritten, and freed together, within one round (a fire that
overwrites a cell of a raw word overwrites every cell of it, and the
word's cells are thereafter computed, holding what the fire wrote; a
fire that reads a raw word leaves it raw; there is no relabeling, so
an input bit reaches the computed string only through a fire that
read its word), and a conveyor carries words whole, so a word of any
working level holds bits of at most one input word. The initial
contents of the computed cells are the
machine's \emph{program} and do not depend on the input. (i) The
next step, including which operation fires at which station, which
words it reads, where a move goes, whether the round ends, whether
the execution ends, the
duration of the hold that follows, and whether each level's conveyor
runs during it, is a function of the contents of
the computed cells, by which we mean the assignment of contents to
the computed set, the membership of that set included (fires grow it
by overwriting raw cells); and the assignment is
\emph{self-describing}: two computed assignments reachable in $M$'s
executions, on any inputs, whose contents agree when read in the
fixed order of the cells are the same assignment, so control factors
through the \emph{computed string} and which cells are occupied is
information the string already carries, never a register beside it.
(On a level whose conveyor runs this has a consequence: a conveyor
carrying $k$ live bits among $S$ cells would take one string through
$S$ distinct assignments, so such a level records its occupancy in
live cells; the ring does so with one live cell per slot, occupied or
not, its occupancy flop, whose rent on the empty slots is the idle
floor of Section~\ref{sec:instrument} and on an occupied slot is
inside the clock term of Table~\ref{tab:fit}.)
Raw cells do not steer control, which depends on
data only through fires, a move or a hold reads nothing, and neither
physical time nor any count of steps, rounds, or holds is an
argument. A machine that schedules by count or by clock keeps the
count in computed cells, where it pays rent like any other live
bits, and a machine that would record a bit in \emph{which} cell it
writes finds the choice in its string, where it pays the same; two
round starts with the same computed string begin identically and
diverge only through what their fires read, and since a conveyor's
permutation is fixed while
the hold's duration and whether the conveyor runs are functions of
the string, the computed
string at a hold's end is a function of the string at its start, so
a statement about the string during a hold may read it at either
end. (ii) Throughout every hold, the set of live cells,
with their levels, and the assignment of input words to raw words,
are likewise a function of the computed string: occupancy and
placement are recorded in live cells (a ring slot's occupancy flop
is such a cell), never in the choice of which cells are occupied. On
each input $x$, placed in the input region, $M$ produces one
execution $M(x)$; $M$ \emph{computes} $f$ if for every $x$ the final
output region of $M(x)$ holds $f(x)$. A randomized machine reads its
random bits from a region of the input, once; the coin region is
exempt from the input's re-readability, and a coin word moved to a
working level is a raw word like any other (Lemma~\ref{lem:sim}
queries it when a fire reads it). It computes $f$ with the
probability, over those bits, that it does so. (A machine whose
inputs are resident on a working level from the start, as a ring's
seeded packets are, is the special case whose initial state holds their payloads, raw,
at their seats, a packet's header cells being computed; its
instruction packets are its program, and its rotation is its level's
conveyor.) The
theorems below about \emph{functions} (Lemma~\ref{lem:sim},
Theorem~\ref{thm:rentbridge}, Corollary~\ref{cor:sortjoules},
Lemma~\ref{lem:prom}, Corollary~\ref{cor:mhf},
Theorem~\ref{thm:uncond}) quantify over machines; the others
(Lemma~\ref{lem:serve}, Theorem~\ref{thm:meeting} and its corollaries,
Theorems~\ref{thm:action} and~\ref{thm:halforbit}) are statements
about single executions and need none.
\end{definition}

\begin{axiom}[Volatility]\label{ax:vol}
$p_\ell > 0$ for every level fast enough to feed computation. Device
physics couples speed to leak: fast transistors
leak~\cite{RoyMM03,KimAustin03}, DRAM refreshes~\cite{LiuJVM12}, and
the instrument's rings rotate; $p$ falls steeply as $\tau$ grows,
and levels with $p \approx 0$ (flash, disk) have large $\tau$ and large
write fare.
\end{axiom}

\begin{axiom}[Embedding]\label{ax:embed}
Every physical execution of an algorithm is an execution of a machine
in the sense of Definition~\ref{def:substrate}: at every instant its
live bits reside at levels, its movements cross boundaries or ride a
conveyor, its operations fire, its control is steered by data only
through the words its fires read, and its occupancy is recorded in
live cells. In particular its behavior after any hold is a function
of the contents of its live cells as that hold ends and of the input
bits it reads afterwards: the live cells are the whole working state, a
sequencer's counters and timers included, which is why
Definition~\ref{def:substrate} gives control no argument beyond
them; a clocked machine's schedule is the contents of the cells that
hold its count.
This is a thesis in the Church--Turing sense~\cite{Valiant24}, not a
theorem; everything we prove below assumes it.
\end{axiom}

\begin{axiom}[Latency]\label{ax:lat}
A station fires at most once in any interval shorter than $\tau_0$,
and a fire that reads a bit written or moved into the station's level
at time $s$ (a bit resident there from the start counts as moved in
at time $0$) occurs at time $\geq s + \tau_0$. (The level-0 access
time, the least, bounds both a station's rate and a dependence's
spacing; the slower levels' latencies enter only through their fares
and are not used below.)
\end{axiom}

\noindent
\textbf{The action law.} For any execution,
\begin{equation}\label{eq:law}
  E \;\geq\; \sum_{\text{ops}} e_{\mathrm{op}}
  \;+\; \sum_\ell p_\ell\,\act_\ell
  \;+\; \sum_\ell c_\ell\,\traf_\ell .
\end{equation}
Read literally, with $p_\ell$, $c_\ell$, $e_{\mathrm{op}}$ the infima of the physical costs of holding, crossing, and firing, dissipated in disjoint physical events, \eqref{eq:law} is true by definition. Its empirical content is Axiom~\ref{ax:vol}, that the infima at the fast levels are strictly positive, together with the magnitudes we extract in Section~\ref{sec:instrument}, which put the floors at the picojoule scale. Its mathematical content is that lower bounds on $\act$ and $\traf$ from the \emph{structure} of a computation transfer to joules, and that there exists a machine (Section~\ref{sec:instrument}) on which we extract $\geq$ as $\approx$, so the constants are real. All constants are at a fixed operating point (voltage, temperature,
frequency): leakage moves exponentially with the first
two~\cite{RoyMM03,KimAustin03}, and voltage or frequency scaling moves
the constants, not the form. The three terms, with the erasure toll of
Remark~\ref{rem:erasure}, are four literatures (Table~\ref{tab:four}).

\begin{example}[Adding two words]\label{ex:add}
Take one working level of capacity $S_0$ bits, hold power $p_0$, and
fare $c_0$ at its boundary with the environment; one station that
adds; and an input of two $w$-bit words $x_1, x_2$. One execution: in
the first round, move $x_1$ and $x_2$ into the level, each landing in
one aligned word ($2w$ crossings, fare $2w\,c_0$); hold for a time
$\Delta \geq \tau_0$, during which $2w$ bits are live (rent
$2w\,p_0\Delta$); in the second round, fire the adder on the two raw
words, which writes the $w$-bit sum into one computed word and frees
$x_1$ and $x_2$, and move the sum, as one word, to the output region
($w$ crossings, fare $w\,c_0$). The execution ends without a final hold.
Its bill by \eqref{eq:law} is
\[
  E \;\geq\; e_{\mathrm{add}} + 3w\,c_0 + 2w\,p_0\Delta :
\]
the addition, the three word fares, and the rent of holding the
operands for the one cycle that Axiom~\ref{ax:lat} demands. A machine
that held the operands for ten cycles before adding would pay ten
times the rent for the same answer, and one that added a thousand
pairs while keeping every sum live would pay rent on all of them for
as long as it kept them. That is the whole content of the law: the
arithmetic is the small part of the bill.
\end{example}

\begin{remark}[Erasure]\label{rem:erasure}
Landauer's toll, $kT\ln 2$ per erased bit~\cite{Landauer61}, sits eight to nine orders below the
operation and fare terms on current substrates; rent, a power,
compares per use, $p_0$ times the gap, and stays orders above the
toll at every gap we price, from two to three for an SRAM bit at
nanosecond gaps to six or seven at DRAM refresh or on our
instrument. We do not add it to \eqref{eq:law}: every measured
operation energy already contains its own Landauer floor, so a
separate additive term would charge the same physics twice. It reappears in Section~\ref{sec:falsify} as the regime that reversible logic escapes to.
\end{remark}

\begin{table}[t]
\caption{One law, four literatures.}
\label{tab:four}
\small
\begin{tabular}{@{}llll@{}}
\toprule
term & prices & theory & unit \\
\midrule
ops & doing & circuit complexity & pJ/op \\
$\sum p_\ell \act_\ell$ & keeping & cumulative memory \cite{AlwenSerbinenko15,BeameKornerup23} & (W/bit)$\cdot$s $=$ J/bit \\
$\sum c_\ell \traf_\ell$ & moving & I/O complexity \cite{HongKung81,AggarwalVitter88,Kung86} & pJ/bit \\
erasures & forgetting & Landauer \cite{Landauer61,Bennett73,DemaineLMT16} & $kT\ln 2$ \\
\bottomrule
\end{tabular}
\end{table}

\begin{remark}[Levels discretize distance]\label{rem:distance}
On chip the fare of a bit is a function of the distance it
travels~\cite{KecklerDally11}, and Bilardi and Preparata make that a
consequence of the speed of light~\cite{BilardiPreparata95}. The
leveled model prices distance by the boundaries crossed, which is how
hierarchies are built and billed; the ring's window $W$
(Section~\ref{sec:structure}) is the same geometry made explicit: reach
is bounded, and what is out of reach must move.
\end{remark}

A \emph{value} is an annotation of an execution, not an object of
Definition~\ref{def:substrate}: a labeling of physical bits by (value,
bit position) over time, and of firings as copies or route firings.
The lemma's hypotheses constrain the annotation, and each application
(Definition~\ref{def:atomic}, Theorem~\ref{thm:action}) supplies one.
Our workhorse for lower bounds is a serving lemma: there are only three
ways for a value to be present when an operation needs it.

\begin{lemma}[Serving lemma: rent or fare or recompute]\label{lem:serve}
Fix an execution in which (i) distinct live values occupy pairwise
disjoint physical bits, and (ii) for each value $v$, every level-0
firing that writes bits of $v$, other than the one creating $v$,
writes bits of no
other value and is either a \emph{copy} (it reads the same bits of $v$
from a level-0-resident copy) or belongs to a \emph{recomputation
route} of $v$: the non-copy firings are partitioned into routes; a
route writes bits of one value, all of them in its last firing, reads
no bit of that value, and dissipates at least $\gamma_{\mathrm{rc}}$
per bit it writes, counted per write (a bit written twice, by two
routes, is paid twice); bits a route writes that belong to no value
(its intermediates) are unlabeled and free of (i) and (ii). (Whenever
each writing firing can be assigned the firings that computed what it
writes, as it can for every route class we price, the non-copy
firings admit such a partition.) For each value
$v$ of footprint $\beta_v$ bits let $t^v_0$ be its creation (the first
instant a bit of $v$ is at a working level; the annotation may
relabel resident bits to a new value, which is then created at that
instant) and $t^v_1 \leq \dots \leq t^v_{k_v}$ the times of the
level-0 operations reading it other than copy firings (a copy serves
a later use and is not one; counting it as a use would only raise
the sum below, since $\min(a, c) + \min(b, c) \geq \min(a + b, c)$),
and call bit $b$ of $v$ \emph{covered}
at use $i$ if at every instant of $[t^v_{i-1}, t^v_i]$ some level-0
copy of $b$ is resident. Then
\[
  E \;\geq\; \sum_{v}\; \sum_{i=1}^{k_v} \sum_{b} \mathrm{ch}(v,i,b),
\]
where $\mathrm{ch}(v,i,b) = \min( p_0\,(t^v_i - t^v_{i-1}),\, c_0,\,
\gamma_{\mathrm{rc}})$ if $b$ is covered at use $i$ and $\min(c_0,
\gamma_{\mathrm{rc}})$ otherwise; hence in particular $E \geq \sum_v \sum_i \beta_v \min(p_0 (t^v_i -
t^v_{i-1}), c_0, \gamma_{\mathrm{rc}})$; and the charged energy is
disjoint from the energies of the reading operations that are not
route firings of another value.
\end{lemma}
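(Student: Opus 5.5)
The plan is a charging argument: for each triple $(v,i,b)$ we exhibit a disjoint piece of the lower bound \eqref{eq:law}, namely rent, fare, or route dissipation, of size at least $\mathrm{ch}(v,i,b)$. Fix the execution and its annotation. For each value $v$, use $i$, and bit $b$, we look at how a level-0 copy of $b$ came to be available to the reading operation at $t^v_i$. Hypothesis (i), that live values occupy disjoint bits, makes the physical bit holding $b$ at each instant unambiguous.

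There are three cases.
\begin{itemize}
\item \textbf{Covered.} If $b$ is covered at use $i$, some level-0 copy of $b$ is resident throughout $[t^v_{i-1},t^v_i]$. The rent integral $p_0\int \mathrm{live}_0\,dt$ then contains at least $p_0(t^v_i-t^v_{i-1})$ attributable to $(v,b)$ on that interval.
\item \textbf{Not covered, fetched.} Otherwise there is an instant in $[t^v_{i-1},t^v_i]$ with no level-0 copy of $b$. Yet one is resident at $t^v_i$, so some step after that gap placed a copy at level 0. By (ii), any level-0 firing writing bits of $v$ other than its creation is either a copy or a route firing. A copy needs a level-0 source, which does not exist right after the gap; a chain of copies leads back to the gap and must begin with a non-copy write. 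So the bit reached level 0 either by a move across boundary $0$, costing fare $c_0$, or by a route.
\item \textbf{Not covered, recomputed.} A route writes $b$, costing at least $\gamma_{\mathrm{rc}}$ counted per write. Creation of $v$ itself only matters when $i=1$; there it is treated as the route case, or as a fare if $v$ enters from outside level 0.
\end{itemize}
In every case the charge is at least the corresponding $\min$. Where covered and uncovered pieces coexist, we take the minimum of the available options.

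The next step is disjointness, which turns these local charges into a single inequality.
\begin{itemize}
\item Rent pieces for different $(v,b)$ live in different physical bits by (i).
\item Rent pieces for the same $(v,b)$ but different $i$ lie on the intervals $[t^v_{i-1},t^v_i]$, which overlap only at endpoints. Holds have positive duration, so the overlap has measure zero.
\item Fare pieces are individual boundary crossings. Each is charged to the single use whose gap it ends, namely the first use after it, so no crossing is charged twice.
\item Route pieces are charged per write of a bit of $v$. Routes partition the non-copy firings and write only one value, so no write is claimed by two values. The per-write convention in the statement handles bits rewritten by two routes.
\end{itemize}
Summing gives $E\ge\sum_v\sum_i\sum_b\mathrm{ch}(v,i,b)$ via \eqref{eq:law}. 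The footprint form follows because every bit's charge is at least $\min(p_0(t^v_i-t^v_{i-1}),c_0,\gamma_{\mathrm{rc}})$, since $\min(c_0,\gamma_{\mathrm{rc}})$ dominates it.

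The last clause is the separation from the reading operations. The energy charged is rent, fare, or route dissipation of $v$. The reading operation at $t^v_i$ is not a route of $v$, because a route reads no bit of its own value. If it is also not a route firing of another value, its $e_{\mathrm{op}}$ is untouched by any charge.

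The remark about copy firings needs a separate check. If copies were counted as uses, the intervals would split, and the inequality $\min(a,c)+\min(b,c)\ge\min(a+b,c)$ shows the bound only weakens. Excluding them is therefore safe.

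I expect the main obstacle to be the uncovered case. The difficulty is showing rigorously that a copy chain must be rooted in a fetch or route strictly inside the gap, and not in a copy that existed before the gap. The approach is to take the last instant in $[t^v_{i-1},t^v_i]$ at which no level-0 copy exists. I would then follow the provenance of the copy present at $t^v_i$ backwards through copy firings, each of which needs a resident source, until reaching a non-copy event after that instant. A route whose last firing precedes the gap must be excluded; this holds because its written bit would then have been resident, by time-ordering. This step must also respect the "free cell" semantics of moves and the fact that fires may free bits.
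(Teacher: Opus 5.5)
Your proof is correct in substance, but it is organized differently from the paper's.

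\emph{Your route.} You build an explicit injection from charges to physical events. A covered charge is backed by rent on some level-0 copy of $b$ across $[t^v_{i-1},t^v_i]$. An uncovered charge is backed by one specific crossing or route write, which a backward chase through copy firings places after the last copy-free moment before use $i$.

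\emph{The paper's route.} It runs a potential argument. It shows that $E-\mathrm{Ch}-\Phi-R$ never decreases, where $\Phi$ is the pending charge summed over resident pairs $(v,b)$ and $R$ is the energy already spent by routes still in progress. It then checks the six step kinds one at a time.

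This is what removes your ``main obstacle.'' A pair counts as resident whenever some copy exists, so a copy firing creates no term. A pair that becomes resident again after losing its last copy acquires the uncovered term $\min(c_0,\gamma_{\mathrm{rc}})$, and the move or route that restores it prepays that term on the spot. No provenance is ever traced. That step-locality is what lets the paper state the argument as an inductive invariant and discharge it in TLAPS.

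Your version buys an explicit map from each charge to the event that pays it, which makes the disjointness clause immediate. Because you compare each route's total energy against its writes all at once, you also need no in-progress term.

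Two repairs are needed.
\begin{enumerate}
\item Do not treat the creating firing as a route. It is exempt from (ii) and has no $\gamma_{\mathrm{rc}}$ floor, so its energy cannot back a charge. The correct observation, which is the paper's ``term $0$ at creation,'' is that bits the creating firing places at level 0 are resident from $t^v_0$ on. So it never roots an uncovered use, and every uncovered use has a strictly later root that is a move or a route write.
\item Read ``instant'' and ``after the gap'' in step order, since several steps share one time inside a round. Your windows are disjoint, and each crossing or route write is charged to a single use, only in that order.
\end{enumerate}

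Separately, counting copies as uses would raise the right-hand side, not weaken the bound. The statement's parenthetical is noting that excluding them is the weaker claim.
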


\begin{proof}
A potential argument; it is also the inductive invariant that TLC
checks and, for atomic values, TLAPS proves in Appendix~\ref{app:tla}. Write $\mathrm{last}(v,b)$ for the
time of the previous use of bit $b$ of $v$ (its creation before the
first use), $\mathrm{Ch}$ for the sum of the charges of the uses so
far, and at each instant $t$ define the \emph{pending} charge
\[
  \Phi \;=\; \sum_{(v,b)\ \text{resident}} \pi(v,b),
\]
with $\pi(v,b) = \min(p_0 (t - \mathrm{last}(v,b)), c_0,
\gamma_{\mathrm{rc}})$ if $b$ has been resident since
$\mathrm{last}(v,b)$ and $\pi(v,b) = \min(c_0, \gamma_{\mathrm{rc}})$
otherwise, where ``resident'' means that some level-0 copy of $b$
exists and ``resident since $\mathrm{last}$'' that one has existed at
every instant since; and write $R$ for the energy so far dissipated by
the routes still in progress (a route is in progress from its first
firing until its last, which by (ii) is the one that writes). Initially $E =
\mathrm{Ch} = \Phi = R = 0$. We check that $E - \mathrm{Ch} - \Phi -
R$ never decreases along a step; at the end no route is in progress,
so $E \geq \mathrm{Ch} + \Phi \geq \mathrm{Ch}$.
\begin{enumerate}
\item[1.] \emph{Hold of duration $d$.} By (i) distinct resident pairs
  $(v,b)$ occupy distinct physical bits, so $E$ rises by at least
  $p_0 d$ per resident pair; each pending term rises by at most $p_0 d$.
\item[2.] \emph{Move of a copy of $b$ into level 0.} $E$ rises by
  $c_0$. If $(v,b)$ was not resident, a term appears; it is
  $\min(c_0, \gamma_{\mathrm{rc}}) \leq c_0$, because a pair that is
  not resident now was not resident at every instant since its last
  use. Otherwise nothing changes.
\item[3.] \emph{Route firing.} A firing of a route other than its
  last writes no bit of $v$ and raises $E$ and $R$ by the same amount.
  The last firing completes the route: $R$ falls by the route's energy
  so far, $S$, $E$ rises by the firing's own energy $e_f$, and by (ii)
  $S + e_f \geq \gamma_{\mathrm{rc}}$ per bit written, while each pair
  made resident adds a term $\leq \gamma_{\mathrm{rc}}$, as in step 2
  (a pair already resident that is rewritten adds nothing). The firing
  that creates $v$ is exempt from (ii), and the
  pairs it makes resident have $\mathrm{last} = t$ and term $0$; a
  relabeling is a creation without a firing.
\item[4.] \emph{Copy firing.} It reads a resident copy, so $(v,b)$ was
  already resident and no term appears; $E$ does not fall.
\item[5.] \emph{Eviction or freeing.} Terms vanish; $E$ is unchanged.
  A pair that loses its last copy is thereafter ``not resident since
  last'' until its next use.
\item[6.] \emph{Use at time $t$.} The read requires a resident copy, so
  $\pi(v,b)$ is present, and it equals $\mathrm{ch}(v,i,b)$ exactly:
  ``resident since last'' is ``covered''. The charge moves from $\Phi$
  into $\mathrm{Ch}$, $\mathrm{last}$ is reset, and the term becomes
  $0$. The reading operation's own energy is never used unless the
  operation is a route firing of another value, which is the
  disjointness claim.
\end{enumerate}
The weak form follows because the uncovered charge dominates the
three-way minimum.
\end{proof}

\noindent\textbf{In words.} Suppose an operation at time $t_2$ needs a
value that was last used at time $t_1$. Between the two uses one of
three things happened to each of its bits: it stayed in fast storage
the whole time, paying rent $p_0(t_2 - t_1)$; it was brought back from
a slower level, paying the fare $c_0$; or it was computed again,
paying at least $\gamma_{\mathrm{rc}}$. The lemma says that the
cheapest of the three is a floor on what that use cost, that the
floors of different uses are paid by different physical events, and
so that they add up. The phone number of the introduction is the case
of one value used twice.

The lemma prices a tradeoff, not a verdict: a computation may pay any of
the three currencies per use, and the architectures of Section~\ref{sec:meeting} differ exactly in which they choose. The floor is
the min. Two of its three constants are substrate constants; the third
is not: $\gamma_{\mathrm{rc}}$ is a property of the routes the
execution uses, and the lemma is exactly as strong as the route
hypothesis one is willing to state. Unconditionally, a route must at
least write its output, so the level-0 per-bit write energy
($\sim$0.1--1\,pJ/bit) is a floor beneath any $\gamma_{\mathrm{rc}}$;
the $10^3$\,pJ/bit of Table~\ref{tab:attention} is the cost of one
route class, re-projection, and is a hypothesis about the execution
(Definition~\ref{def:atomic}), not a law of the substrate; no lower bound of that size exists for computing a linear map. Hypothesis (i) is
real, not bookkeeping: an execution storing $x$ and $x \oplus y$ holds
both values in fewer bits than their combined footprints. It is
licensed exactly when the live values are \emph{jointly} incompressible, the regime of every application below, and Corollary~\ref{cor:aggregate} gives the form that needs only the joint
state's incompressibility.

\section{Bridges: Old Bounds Become Joule Floors}
\label{sec:bridges}

The bridge has two steps. First, any machine of
Definition~\ref{def:substrate} can be redrawn as a \emph{branching
program}, a decision tree that reads one input word per level, so that
the tree's memory at each level is at most the machine's live bits
(Lemma~\ref{lem:sim}). Second, known lower bounds on the memory of
such trees, summed over their levels, become lower bounds on live bits
summed over holds, which is rent (Theorem~\ref{thm:rentbridge}). A
reader who takes the simulation on trust may skip to
Theorem~\ref{thm:rentbridge}.

\begin{lemma}[Simulation]\label{lem:sim}
Let $M$ be a machine computing $f$ (Definition~\ref{def:substrate})
whose executions are synchronous with cycle $\Delta$ and width $q$.
Call a fire \emph{querying} if it reads a raw word, and let $F(x)$
be the number of rounds of $M(x)$ containing a querying fire. Then
there is a leveled branching program computing $f$ with $M$'s
success probability ($1$ for a deterministic $M$), at most one query
per level, whose nodes are encoded by pairs of strings: the
machine's computed string at the start of the node's round, and the
\emph{answers}, the contents of the distinct raw words read by fires
so far in that round, in the order of their first reading. The nodes
are distinct as pairs within each level apart from one designated
empty node per level, distinguished from every other node, from
which no output is produced; the root, alone at its level, is
encoded as the empty pair. On every
input $x$ the run produces its last output within $2qF(x)$ levels
(an output on an edge counts at the edge's source level) and, once
$M(x)$ has halted, visits only empty nodes; its node lengths sum to
at most $2q \sum_t |\mathrm{live}_t(x)| \leq 2q\,\act/\Delta$, where
$\mathrm{live}_t(x)$ is the live set during the $t$-th hold and
$\act = \sum_\ell \act_\ell$; hence its cumulative memory in the
sense of~\cite{BeameKornerup23}, the sum over levels of the
logarithm of the width, is at most $2q \sum_t (\max_x
|\mathrm{live}_t(x)| + \lceil \log_2 (\max_x |\mathrm{live}_t(x)| +
2) \rceil + 1)$, the logarithm covering the pair's split.
\end{lemma}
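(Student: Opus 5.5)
The simulation will be built by unrolling $M$ round by round, using as the state of the branching program exactly the information that Definition~\ref{def:substrate}(i) says determines control: the computed string at the start of a round, together with the answers gathered so far in that round. Within a round, $M$ performs at most $q$ fires, and each fire reads at most two words, so at most $2q$ raw words are read per round. I therefore allot $2q$ levels of the branching program to each round and spend one level per raw-word read.

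At a node $(\sigma,a)$, I replay $M$'s round deterministically from $\sigma$ and the answers $a$. By the self-describing clause, $\sigma$ fixes the computed assignment, including which cells are computed. By clause (ii), the live set and the assignment of input words to raw words are also functions of $\sigma$. The replay runs until the next fire that reads a raw word whose contents are not yet in $a$.
\begin{itemize}
\item If there is such a fire, the node queries that input word. The input region is read-only and re-readable, so the word's contents are the current raw contents, since a raw word is never partially overwritten. The edge labelled by the answer goes to $(\sigma, a\cdot\text{answer})$ one level down.
\item Fires that reread an already-answered word, and moves, need no query.
\item A move into the output region emits that output word on the edge.
\end{itemize}
When the round ends, the hold is replayed as well. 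Both the conveyor permutation and the hold duration are functions of the string, so the string at the hold's end, $\sigma'$, is a function of $(\sigma,a)$. We then pass through unused levels, padded with empty nodes or with dummy non-query steps, to the node $(\sigma',\varepsilon)$ at the first level of the next round block. For randomness, coin words are simply queried like any other raw word. After $M$ halts, every node routes to the designated empty node.

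Correctness follows because the run on $x$ traces $M(x)$ exactly, with the same outputs and the same success event. For the level count, rounds without a querying fire can be collapsed, since they are deterministic functions of the string. I will pack consecutive nonquerying rounds into the preceding node's transition, so only the $F(x)$ querying rounds consume $2q$ levels each, giving the bound $2qF(x)$.

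For the size bound, a node at a level inside round $t$ has $|\sigma|\le|\mathrm{live}_t(x)|$. The answers are contents of raw words that are live during that round, and they are distinct, so $|\sigma|+|a|\le|\mathrm{live}_t|$, up to the alignment of which hold the round sits next to. Summing over the $2q$ levels per hold gives $2q\sum_t|\mathrm{live}_t(x)|$. Each hold lasts at least $\Delta$, so this is at most $2q\,\act/\Delta$.

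Distinctness within a level holds by construction, since nodes are identified with their pairs. To count the width, I encode a pair of total length at most $m$ by the concatenation together with a split point. This needs $m+\lceil\log_2(m+2)\rceil$ bits, plus one bit to separate the empty node. Taking the logarithm of the width at each level therefore yields the stated cumulative memory bound with $m=\max_x|\mathrm{live}_t(x)|$.

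The main obstacle is the level-alignment bookkeeping when rounds are collapsed. Collapsing the non-querying rounds makes the hold index $t$ of a level depend on $x$, so the per-level bound $\max_x|\mathrm{live}_t(x)|$ must be matched carefully to the rounds. I would first try charging each query level to the hold that immediately precedes its round. This uses the fact that answers are words live during that round, so they are counted in $\mathrm{live}_t$.

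A second delicate point is the claim $|\sigma|+|a|\le|\mathrm{live}_t|$. It needs raw words read by fires in a round to still be live and disjoint from the computed cells. This follows because a raw word is raw until overwritten and is freed whole within a round. I also have to check that a word read after being freed mid-round is not double-counted, which is handled by listing only distinct words in order of first reading.
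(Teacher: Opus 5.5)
Your construction is essentially the paper's. A node is the computed string at the start of its round together with the answers so far, self-description and clauses (i)--(ii) make the replay and each query definite, rounds without a querying fire fold into edges, a querying round costs at most $2q$ levels, and the width is counted by writing the pair behind its split.

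\textbf{The gap.} The step that does not go through as you justify it is $|\sigma|+|a|\le|\mathrm{live}_t(x)|$. A round takes no time, so ``live during that round'' is not the live set of any hold. Moreover, ``raw until overwritten, freed whole'' does not stop a fire from reading a word that a move brought in from the input region earlier in the same round. If that round also frees the word, its $w$ bits are live in no hold at all, and $|\sigma|+|a|$ can then exceed the preceding hold's live count; the per-run sum bound fails with it. The missing ingredient is Axiom~\ref{ax:lat}: a fire reads a bit moved into its level at time $s$ only at time $\geq s+\tau_0$. So every raw word answered in a round was resident throughout the hold preceding that round. Raw and computed cells are disjoint, and $\sigma$ is taken at the round's start, so the round's own writes never enter it. Together these put the pair inside that hold's live count. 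That is exactly how the paper argues, and you never invoke the axiom.

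\textbf{Smaller points.}
First, if you keep fixed $2q$-level blocks, pad with a non-querying node carrying the current pair, never with the designated empty node, which would merge distinct runs. (The paper instead levels by queries and needs no padding inside a round.)

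Second, the statement asks that the root, alone at its level, be encoded as the empty pair. Your root's string is input-independent, since nothing raw is read before the first querying round, so you may re-encode it. Proposition~\ref{prop:perinput} uses that length $0$.

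Third, the alignment you flag is real, and the paper's proof does not settle it either. Its width count takes $K=\max_x|\mathrm{live}_t(x)|$ per level as though each level belonged to the same hold on every input. Leveling by queries, or your blocks indexed by querying rounds, does not give that. Indexing levels by holds, $2q$ per hold whether or not its round queries, makes the width form immediate. But it then bounds the last output only by $2q$ times the number of holds, not by $2qF(x)$. The energy floors use only the per-run length sum and the output level (Theorem~\ref{thm:rentbridge}). The width form enters only the matrix-multiplication remark of Corollary~\ref{cor:sortjoules}, where no floor is claimed.
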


\begin{proof}
By (i) of the machine definition the next step is a function of the
computed contents, which the initial state fixes independently of the
input, and by Axiom~\ref{ax:lat} no fire reads a bit written or moved
to its level in its own round; so between two queries the run evolves
as a function of the computed string at the start of the current
round and of the raw words read in it so far, the same for any two
inputs that agree on them. Level the program by queries, one per
distinct raw word a fire reads in a round, and compose the evolution
through rounds without queries, moves, holds, fires that read no raw
word, and reads of a word already read in the round, whose value the
answers hold, into the edges: a fire reads at most two words, each
holding bits of at most one input word since raw words are units, so
a round contributes at most $2q$ levels, and a round without a
querying fire contributes none. The pair encoding a node
determines, by self-description, (i), and (ii), the machine's
control, which cells are live, and which input word each raw word
holds (the computed string determines its assignment, the assignment
the raw placement), so the node is well defined and each query is a
definite input position: the string determines the round's first
query, the first answer with the string its second, and so on, which
is also how the answers are parsed. A word read in an earlier round
is queried afresh when a fire reads it again: the program is not
read-once, and a node need not remember an answer whose trace has
left the computed contents. Two distinct nodes at one level differ
in their pair, which is their encoding (distinct assignments have
distinct strings, by self-description); the root is the only node at
level $0$, since the computed string at the start of the first
querying round is the same on every input (control up to the first
query is a function of the program, and a randomized machine's coins
are raw words queried like any other), so we encode it as the empty
pair, which lowers the length sum and keeps the nodes distinct within
their levels; and both parts of every other node's pair were resident
throughout the hold preceding the round: the round's own writes are
invisible to its own fires by Axiom~\ref{ax:lat} and never enter the
encoding, a conveyor relocates without reading or writing, the
computed cells are disjoint from the raw cells, and the answers list
distinct resident raw words, so the encoding has at most
$|\mathrm{live}_t(x)|$ bits. Every output is computed by fires, since
an output word enters the output region only from one computed word,
whole and within one round, into a word written once; so the program
emits it as an item, the position and the word, from the values the
pair determines, on the edge along which the moves that carry it
into the output region occur, counting at that edge's source level,
and its outputs are the items the bounds we cite count, each
produced once. The end of a round is a
control decision the pair determines, and so is halting: after
$M(x)$ halts, a run passes through the designated empty node of each
later level, so the program is leveled and the padding adds neither
queries nor node length. The program never reads the output region,
which by Definition~\ref{def:substrate} the machine cannot either;
so it computes $f$ whenever $M$ does. On $x$ the last output falls
within $2qF(x)$ levels, the node lengths sum to at most $2q\sum_t
|\mathrm{live}_t(x)|$, and since hold $t$ lasts $d_t \geq \Delta$,
$\act = \sum_t |\mathrm{live}_t|\, d_t \geq \Delta \sum_t
|\mathrm{live}_t|$. A node's pair injects into one string, its parts
concatenated behind the length of the first, so with $K = \max_x
|\mathrm{live}_t(x)|$ a level holds at most $2^{K + \lceil \log_2
(K+2) \rceil + 1}$ nodes, the empty node included, whose logarithm
is at most $K + \lceil \log_2 (K + 2) \rceil + 1$.
\end{proof}

The synchrony hypothesis is the machine's to choose: holds of
$\tau_0/1000$ with a fire every thousandth round are synchronous with
cycle $\tau_0/1000$. For one station Axiom~\ref{ax:lat} alone spaces
the queries, and the lemma has a form with no hypothesis on the
holds.

\begin{lemma}[Simulation, one station]\label{lem:simtau}
Let $M$ have one station and compute $f$, with no hypothesis on its
holds. Then there is a leveled branching program as in
Lemma~\ref{lem:sim}, with the same encoding and distinctness
properties, in which on every input $x$ the last output falls within
$2F(x)$ levels and the node lengths sum to at most $3\act/\tau_0$;
moreover $\act \geq w\,\tau_0\,F(x)$.
\end{lemma}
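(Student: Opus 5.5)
Reuse the construction of Lemma~\ref{lem:sim} verbatim: same leveling by queries, same node encoding (computed string at the round's start, plus the answers so far in the round), same designated empty nodes and empty root. Everything in that proof about well-definedness, distinctness, outputs and halting used only Definition~\ref{def:substrate}(i)--(ii), self-description, and Axiom~\ref{ax:lat}, never synchrony. So those properties carry over unchanged, and only the counting needs redoing. With one station, a round containing a querying fire has width at most one in the relevant sense. A fire reads at most two words, so the round contributes at most $2$ levels, and the last output falls within $2F(x)$ levels.

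For the length sum, replace the synchrony bound $d_t\ge\Delta$ by latency. By Axiom~\ref{ax:lat}, the single station fires at most once in any interval shorter than $\tau_0$. Index the querying rounds $r_1<r_2<\dots<r_{F(x)}$. Each node created in round $r_j$ has length at most $|\mathrm{live}|$ during the hold preceding $r_j$, as in Lemma~\ref{lem:sim}. The difficulty is that this preceding hold may be arbitrarily short: many tiny holds can separate fires. What latency does give is that consecutive querying fires are at least $\tau_0$ apart in time. Their rounds occur at distinct instants because the station fires at most once per round-time (rounds take no time). So the total hold time between $r_{j-1}$ and $r_j$ is at least $\tau_0$.

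I would then charge the nodes of round $r_j$ to the action accrued over the window $[t_{r_j}-\tau_0,\,t_{r_j}]$. The main obstacle is that the live count may rise within that window: a round between the holds can move in bits. The encoding, however, has size at most the live set of the \emph{last} hold before $r_j$, and that last hold may be short. To handle this I would instead bound the node length by the raw words actually answered (at most two words, $2w$ bits) plus the computed string. I would argue the computed string at $r_j$'s start was resident, by self-description and the fact that fires only happen at the station, since the previous fire at time $\ge t_{r_j}-\tau_0$... The computed string can only change by fires, and the previous fire is at least $\tau_0$ earlier. Moves of raw words only add raw cells, and moves between working levels copy computed cells, so the computed count only grows at fires. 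Hence the computed part was live throughout the preceding $\tau_0$.

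The answers are raw words read by the fire. Axiom~\ref{ax:lat} requires each to have been moved in at least $\tau_0$ earlier, so they too were live through the whole window. Each node of round $r_j$ therefore has length at most $\frac1{\tau_0}\int_{t_{r_j}-\tau_0}^{t_{r_j}}\mathrm{live}$. A round has at most $2$ query levels plus the node at its start, so at most $3$ nodes per window. The windows are disjoint, since querying fires are $\ge\tau_0$ apart. Summing gives the bound $3\act/\tau_0$. The same windows give $\act\ge w\tau_0F(x)$, because each window contains at least one raw word of $w$ bits live throughout it.
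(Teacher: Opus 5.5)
Your reuse of Lemma~\ref{lem:sim}'s leveling, the $2F(x)$ level count, the disjoint windows $[t_{r_j}-\tau_0, t_{r_j}]$, and the bound $\act \geq w\tau_0 F(x)$ are all sound. The last holds because the answered raw word is live from its crossing onward. The gap is the step that bounds the round-start encoding. You claim the computed string at the start of $r_j$ was live throughout the preceding window because ``the computed count only grows at fires''. The clause you give for it says the opposite. A move between working levels is a copy into a free cell (Definition~\ref{def:substrate}: the source keeps its contents), so it creates a new live computed cell with no fire.

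Here is a run that breaks the bound. After its previous fire, a machine holds $K$ live bits for time $\tau_0-\epsilon$. In one round it then copies computed bits back and forth across a boundary until $K+m$ bits are live. It holds for $\epsilon$ and fires. With the verbatim encoding this node has length about $K+m$, while the window's action is about $K\tau_0+m\epsilon$. Letting $\epsilon\to 0$ and $m$ grow, no bound of the form $c\,\act/\tau_0$ survives. So the encoding of Lemma~\ref{lem:sim} cannot be reused unchanged. Synchrony entered that proof precisely to make the last hold before each round long, and without it the round-start string is the wrong thing to record.

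The missing idea is to move the encoding point back in time. Between querying fires $k$ and $k{+}1$ (from the start, for the first), no fire reads a raw word and raw cells steer nothing. So the run from any hold $h$ of that interval up to fire $k{+}1$ is a function of the computed string at $h$. Choose $h_k$ to be a hold of least live count in the interval. Encode the first node of fire $k{+}1$ by the computed string at $h_k$, and the second by that string together with the first answer. Fold the evolution into the edges as before; distinctness again follows from self-description. The two nodes then have lengths at most $\min_k|\mathrm{live}|$ and $\min_k|\mathrm{live}|+w$. The intervals are disjoint, each with hold time $\geq\tau_0$ at live count $\geq \min_k|\mathrm{live}|$, so $\act \geq \tau_0\sum_k \min_k|\mathrm{live}|$. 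Together with $\act\geq w\tau_0F(x)$, the length sum is at most $2\sum_k\min_k|\mathrm{live}| + wF(x) \leq 3\act/\tau_0$. The $w$ has to be charged separately because the raw word need not yet be live at $h_k$.

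An alternative closer to your windows also works. Only fires free cells, so the live count is nondecreasing across the holds of the fire-free window before a querying fire. You could therefore encode at that window's first hold, where the answered raw words are already live.
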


\begin{proof}
A round is instantaneous and a station fires at most once in any
interval shorter than $\tau_0$ (Axiom~\ref{ax:lat}), so one station
fires at most once per round and consecutive fires, querying fires
among them, are separated by a time $\geq \tau_0$ spanned by holds;
a fire reads at most two raw words. Between querying fires $k$ and
$k{+}1$ (before the first, from the start; that interval too lasts
$\geq \tau_0$, the word read at the first fire having crossed at
some $s \geq 0$ and being read at $\geq s + \tau_0$) choose a hold
$h_k$ at
which the live count is least over the interval's holds. Encode the
first node of fire $k{+}1$ by the computed string at $h_k$ and, if
the fire reads two raw words, the second by that string with the
first word's contents as the answer; before the first query the
string is the same on every input, and the root is the empty pair
as before. From $h_k$ up to the fire the run is a function of the
string alone, raw cells steering nothing and every fire in between
reading no raw word, and past the fire it is a function of the
pair, so the evolution folds into the edges as in
Lemma~\ref{lem:sim}, and the nodes are distinct as pairs within
their levels by self-description. The first node has length at most
the least live count over the interval and the second at most $w$
more. The raw word the fire reads last crossed into the station's
level at some time $s$ and is read at time $\geq s + \tau_0$, so its
$w$ cells are live through a stretch of length $\geq \tau_0$ inside
the interval, from $s$ or from the interval's start, whichever is
later, to the fire: $\act \geq w\tau_0$ per querying fire, and $\act
\geq \tau_0 \sum_k \min_k |\mathrm{live}|$ since each interval lasts
$\geq \tau_0$ at a live count at least its minimum and the intervals
are disjoint. The node lengths therefore sum to at most $2 \sum_k
\min_k |\mathrm{live}| + w F(x) \leq 3\act/\tau_0$.
\end{proof}

\begin{theorem}[Rent bridge]\label{thm:rentbridge}
Let $f$, a distribution $\mu$ on its inputs, and $\delta : (0, 1]
\to (0, 1]$ be such that every leveled branching program computing
$f$ with success probability $\gamma \geq 1/\mathrm{poly}$ over $\mu$
and its coins, whose nodes carry encodings, pairs of strings as in
Lemma~\ref{lem:sim}, distinct within each level apart from at most
one outputless empty node per level, has a set of inputs of
$\mu$-probability $\geq \delta(\gamma)$ on each of which the run
produces its last output at level $\geq T_f$ or has node lengths
summing to $\geq M_f$. Then on any substrate whose working levels are
all volatile, with $p_{\min} = \min_{\ell \leq L} p_\ell > 0$, every
machine computing $f$ with success probability $\gamma$ pays, in
synchronous executions of cycle $\Delta$ and width $q$, on a set of
inputs (of input and coin pairs, for a randomized machine) of
probability $\geq \delta(\gamma)$,
\[
  E \;\geq\; p_{\min}\,\Delta\,\min\!\Big(\frac{M_f}{2q},\;
  \frac{w\,T_f}{2q}\Big),
\]
however the input is held: the environment holds it rent-free and it
may be re-read at will. If $M$ has one station, then in every
execution, with no synchrony hypothesis,
\[
  E \;\geq\; p_{\min}\,\tau_0\,\min\!\Big(\frac{M_f}{3},\;
  \frac{w\,T_f}{2}\Big)
\]
on a set of inputs of probability $\geq \delta(\gamma)$.
\end{theorem}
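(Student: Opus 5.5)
The plan is to compose the simulation lemmas with the hypothesis on branching programs, and then turn live bits into rent through volatility. Take a machine $M$ computing $f$ with success probability $\gamma$. In the synchronous case, apply Lemma~\ref{lem:sim} to obtain a leveled branching program $B$ with the same success probability. Its nodes are encoded as pairs, distinct within each level, with at most one outputless empty node per level, so $B$ satisfies the theorem's hypothesis. Hence there is a set $G$ of inputs, or of input and coin pairs, of $\mu$-probability $\geq \delta(\gamma)$ on which one of two things happens: the run of $B$ produces its last output at level $\geq T_f$, or its node lengths sum to $\geq M_f$. One bookkeeping point: the coins of a randomized $M$ are raw words of the input region, queried like any other. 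So a run of $B$ on a pair (input, coins) is exactly the simulation of $M$ on that pair, and the probability spaces match.

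Fix $x \in G$ and split into the two cases.

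\emph{Node-length case.} Lemma~\ref{lem:sim} bounds the node lengths by $2q\,\act(x)/\Delta$. Therefore $\act(x) \geq \Delta M_f/(2q)$.

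\emph{Depth case.} The lemma puts the last output within $2qF(x)$ levels, so $F(x) \geq T_f/(2q)$. I still need action per querying round. In a synchronous execution each round with a querying fire reads a raw word of $w$ live cells, and that word was live during the hold that precedes the round. This holds because by Axiom~\ref{ax:lat} nothing a round moves in is read in the same round, so the word crossed in an earlier round and sat through a hold of length $\geq \Delta$. Distinct querying rounds are preceded by distinct holds. Each such hold therefore carries at least $w$ live bits for time $\geq\Delta$, giving $\act(x) \geq w\Delta F(x) \geq w\Delta T_f/(2q)$.

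In either case, the law \eqref{eq:law}, with every term nonnegative and $p_\ell \geq p_{\min}$ on every working level (Axiom~\ref{ax:vol}), gives $E \geq \sum_\ell p_\ell \act_\ell \geq p_{\min}\act(x)$. Only working levels contribute, since $p_{L+1}=0$; this is why the bound holds however the input is held or re-read.

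For one station, repeat the argument with Lemma~\ref{lem:simtau}, which needs no synchrony. In the node-length case, $3\act/\tau_0 \geq M_f$. In the depth case, $2F(x) \geq T_f$ together with $\act \geq w\tau_0 F(x)$ gives $\act \geq w\tau_0 T_f/2$. Multiplying by $p_{\min}$ yields the second display.

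The main obstacle I expect is the depth case of the synchronous bound. I must check that each querying round really has its own preceding hold carrying the queried word's $w$ bits. Two points need care. The first is the first querying round: the word must have crossed in an earlier round, which Axiom~\ref{ax:lat} forces, with $s \geq 0$. The second is rounds that read the same raw word repeatedly: disjointness comes from counting holds rather than words, which is fine because each round has its own predecessor hold.

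The remaining check is that the program supplied by Lemma~\ref{lem:sim} meets the exact encoding conditions in the hypothesis, namely empty-node conventions and item outputs. That lemma was stated to match them, so I expect this step to be routine.
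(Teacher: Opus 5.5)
Your proposal is correct and follows the paper's proof essentially step for step: you compose Lemma~\ref{lem:sim} (Lemma~\ref{lem:simtau} for one station) with the branching-program hypothesis, and you use Axiom~\ref{ax:lat} to keep the queried raw word's $w$ cells live through the hold preceding each querying round, so that $\sum_t |\mathrm{live}_t(x)| \geq w F(x)$. You then price the resulting action at $p_{\min}$ through Axiom~\ref{ax:vol}, as the paper does, and your care about the first querying round and about counting holds rather than words is exactly what the paper's one-line appeal to the latency axiom relies on.
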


\begin{proof}
By Lemma~\ref{lem:sim} the simulating program produces its last
output on $x$ within $2qF(x)$ levels, $F(x)$ the number of rounds
containing a querying fire, with node lengths summing to at most
$2q\sum_t|\mathrm{live}_t(x)|$. By Axiom~\ref{ax:lat} a querying
fire's raw operand entered the fire's level at least $\tau_0$ before
it, hence in an earlier round, so that word, a unit of $w$ raw cells, is
resident throughout the preceding hold, and every round counted by
$F(x)$ has $|\mathrm{live}_t(x)| \geq w$:
$\sum_t|\mathrm{live}_t(x)| \geq w F(x)$. On each input of the
$\delta$-set, either the node lengths sum to $\geq M_f$, and then
$\sum_t |\mathrm{live}_t(x)| \geq M_f/(2q)$, or the last output falls
at level $\geq T_f$, and then $F(x) \geq T_f/(2q)$ and
$\sum_t|\mathrm{live}_t(x)| \geq w\,T_f/(2q)$. By Axiom~\ref{ax:vol}
every live bit pays at least $p_{\min}$ per unit time, and $\act \geq
\Delta\sum_t|\mathrm{live}_t|$. For one station use
Lemma~\ref{lem:simtau} instead: on the $\delta$-set either the node
lengths sum to $\geq M_f$, so $\act \geq \tau_0 M_f/3$, or the last
output falls at level $\geq T_f$, so $F(x) \geq T_f/2$ and $\act
\geq w\tau_0 T_f/2$.
\end{proof}

\begin{corollary}\label{cor:sortjoules}
Beame--Kornerup~\cite{BeameKornerup23} prove (their Theorem 3.3) that
any branching program computing $\mathrm{Sort}_{n,n^3}$, $n$ items of
$3\log_2 n$ bits, with probability $\geq n^{-O(1)}$ has time
$\Omega(n^2/\log^2 n)$ or cumulative memory $\Omega(n^2/\log n)$, the
latter being the sum over levels of the logarithm of the width; the
disjunction collapses on random-access machines, which hold a word at
every step. Their proof gives the per-input form
Theorem~\ref{thm:rentbridge} needs, with $\delta = 15/16$ for a
program correct on every input (Proposition~\ref{prop:perinput},
Appendix~\ref{app:perinput}: the union bound runs over the at most
$(s{+}1) 2^s$ pair-encoded nodes of total length $s$ on each
level of a block instead of over the
level's width, and the time branch is read on each run's own output
span). Hence on
any substrate whose working levels are volatile, a deterministic
machine with one station sorting $n$ items, one per input word ($w
\geq 3 \log_2 n$), pays rent on its
working set, whatever its holds (a randomized one, correct with
probability $\gamma \geq 1/\mathrm{poly}$, pays the same on input
and coin pairs of probability $> \tfrac{15}{16}\gamma$, by the second
clause of Proposition~\ref{prop:perinput}), on at least fifteen sixteenths of the inputs of the
form $(\langle y_i, i\rangle)_{i \leq n}$ with $y$ a list of $n$
distinct integers from $[n^2]$ and $\langle \cdot, \cdot \rangle$
the pairing $[n^2] \times [n] \to [n^3]$ through which
\cite{BeameKornerup23} pass from ranking to sorting, under the
uniform distribution on those lists,
\[
  E \;\geq\; p_{\min}\,\tau_0\,\Omega(n^2/\log n) ,
\]
$\tau_0$ the access time of its fastest level (Axiom~\ref{ax:lat}; in
synchronous executions of cycle $\Delta$ and width $q$ the first
form of Theorem~\ref{thm:rentbridge} gives $p_{\min}\Delta\,\Omega(n^2/(q \log n))$): an energy floor independent of the
algorithm, at the substrate's scale rather than Landauer's, on the
\emph{working set}, which the input, held rent-free and re-readable,
does not pay (a word-granularity floor would need an argument not
in~\cite{BeameKornerup23}). The floor is structural before it is
practical: at SRAM-class rent and nanosecond access times, $p_{\min}\tau_0
\approx 10^{-18}$\,J per bit-step, it passes the sorter's own
operation cost, picojoules times $n \log n$, only near $n \approx
10^9$; the meeting bound of Section~\ref{sec:meeting} is where the
law's floors reach engineering scales. Their matrix-multiplication
bound (their Theorem 6.9: cumulative memory $\Omega(n^6 \log d/T)$
over a field of size $d$, in a disjunction with a time bound)
transfers by Lemma~\ref{lem:sim} only to the width form, $\sum_t
\OO(\max_x |\mathrm{live}_t(x)| + 1)$, the profile of live counts over
inputs; the width form bounds no single run's action, so we claim no
energy floor for matrix multiplication, and we have not carried its
proof to the per-input form.
\end{corollary}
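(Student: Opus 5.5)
The plan is to obtain Corollary~\ref{cor:sortjoules} as an instance of the one-station clause of Theorem~\ref{thm:rentbridge}. Take $f = \mathrm{Sort}_{n,n^3}$ and let $\mu$ be the uniform distribution on inputs $(\langle y_i,i\rangle)_{i\le n}$ with $y$ a list of $n$ distinct integers from $[n^2]$. The time threshold is $T_f = c\,n^2/\log^2 n$ and the memory threshold is $M_f = c'\,n^2/\log n$, read as a sum of node lengths. The theorem's hypothesis asks for something specific about leveled branching programs with pair-encoded nodes that are distinct within each level, with one outputless empty node per level allowed. On a $\mu$-set of probability $\ge\delta$, each run must either produce its last output at a level $\ge T_f$ or have node lengths summing to $\ge M_f$. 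This is exactly what Proposition~\ref{prop:perinput} supplies, with $\delta = 15/16$ in the deterministic case and $\delta > \tfrac{15}{16}\gamma$ over input and coin pairs in the randomized case. I would only check that its quantifiers match. The program must be the one produced by Lemma~\ref{lem:simtau}, which uses the same encoding and distinctness properties. Its outputs must also be the items that Beame--Kornerup count: each output word is emitted once, as a position together with a word. The hypothesis $w \ge 3\log_2 n$ ensures that one item fits in one input word, so a query reads one item.

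Once the hypothesis is verified, the one-station clause gives $E \ge p_{\min}\tau_0\min(M_f/3,\, wT_f/2)$ on the $\delta$-set. The key step is to compare the two branches. Since $w \ge 3\log_2 n$, we have $wT_f/2 \ge \Omega(\log n)\cdot n^2/\log^2 n = \Omega(n^2/\log n)$. So the time branch is no weaker than the memory branch, and the minimum is $\Omega(n^2/\log n)$. This is the sense in which the disjunction collapses: each querying fire holds a whole raw word of $w$ bits for at least $\tau_0$, and Lemma~\ref{lem:simtau}'s bound $\act \ge w\tau_0F(x)$ turns time into rent. For the synchronous remark I would instead apply the first clause, which gives $p_{\min}\Delta\min(M_f/(2q),\, wT_f/(2q)) = p_{\min}\Delta\,\Omega(n^2/(q\log n))$. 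Rent is charged only on live cells at working levels, and the environment has $p_{L+1}=0$. The floor is therefore on the working set, even though the input may be re-read at will. No extra argument is needed for this point, since Theorem~\ref{thm:rentbridge} already holds "however the input is held".

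The main obstacle is Proposition~\ref{prop:perinput} itself. Beame--Kornerup state an averaged or width-based bound, while the corollary needs a per-input one. I would follow their block decomposition. Cut each run into blocks of $\Theta(\text{time}/\text{blocks})$ levels. In their argument, a node at a block boundary with small memory can correctly output only few ranks, except with small probability. Their union bound ranges over the width of the level; I would replace it with a union over the nodes whose encodings have total length at most $s$, of which there are at most $(s{+}1)2^s$. This makes the failure probability depend on the run's own node length rather than on the maximum width. A Markov-type averaging over blocks then yields the $15/16$ set. The time branch is read on each run's own output span, so runs that output early are charged by their own length. The remaining numerical claims in the corollary ($10^{-18}$\,J per bit-step, crossover near $n\approx10^9$) are arithmetic. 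The closing statement about matrix multiplication is a remark: Lemma~\ref{lem:sim} bounds only the width form, so it bounds no single run's action, and nothing needs proving there.
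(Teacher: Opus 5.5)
Your proof is the paper's: the one-station clause of Theorem~\ref{thm:rentbridge}, applied to the program of Lemma~\ref{lem:simtau}, with Proposition~\ref{prop:perinput} supplying the per-input dichotomy and $w \geq 3\log_2 n$ making the time branch $wT_f/2$ no weaker than the memory branch, so both branches give $p_{\min}\tau_0\,\Omega(n^2/\log n)$. Your sketch of the proposition is looser than the appendix's proof, and where it differs it would lose the factor $H$: the appendix charges each run at its own least-length level in each interval of length between $H$ and $2H$, $H = \lfloor \alpha n/4 \rfloor$, since a fixed block boundary would not give $\mathrm{cm}_P(x) \geq H\sum_i \sigma_i(x)$; so its union bound covers every level of every interval and every length $s$, with $k(s) = \lceil\log_2 n\rceil\,(s + 3\lceil\log_2(s+2)\rceil + \lceil\log_2(32TH)\rceil)$ after truncating at level $n^3$ so that the total stays below $1/16$, and that union bound, not a Markov step, is where the $15/16$ comes from.
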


\noindent\textbf{In words.} However a one-station machine sorts, and
however it paces itself, on most inputs it must keep at least a
constant times $n^2/\log n$ bit-steps of working state alive, a step
being one access time of its fastest level, and each bit-step costs
at least $p_{\min}\tau_0$
joules. Reading the input again is free; remembering what has been
learned from it is not. For a million items on SRAM the floor is
still below the cost of the comparisons; the point is that it exists,
that no algorithm removes it, and that the same argument gives large floors elsewhere
(Corollary~\ref{cor:mhf} and Section~\ref{sec:meeting}).

\begin{remark}[Where the input lives]\label{rem:input}
The environment level carries weight. Were the $n$ input words
resident on a volatile working level from the start, as they are on a
ring, then for \emph{every} function that depends on all of them a
one-station execution would pay rent $\geq p_{\min}\tau_0\, w\,
n^2/4$ with no synchrony hypothesis: one station's fires are $\geq
\tau_0$ apart (Axiom~\ref{ax:lat}) and each reads at most two words,
so the $k$-th word read waits at least $\lceil k/2 \rceil \tau_0$, and
$\sum_k \lceil k/2 \rceil \geq n^2/4$. That floor, which is the seed-residency argument
of Theorem~\ref{thm:action}, would dominate
Corollary~\ref{cor:sortjoules} by the factor $w$ and the polylogarithm
and would owe nothing to sorting; for matrix multiplication it would be
$\Omega(n^4)$ against $\Omega(n^6/T) \leq n^4$ for every feasible $T
\geq n^2$. Hong--Kung's blue pebbles are unbounded and unpriced for the
same reason: the theorem is about the working set, and the bridge has
content only where the input is not charged for waiting.
\end{remark}

\begin{remark}[Parallelism changes the answer, and is priced]
\label{rem:parallel}
The sequential scope matters: a sorting
\emph{network} of depth $\OO(\log n)$ on $n$ wires has cumulative
memory only $\Theta(n \log n)$ word-steps, far below $n^2$.
Parallelism genuinely lowers rent, by provisioning $n$ comparators, the capability currency of Section~\ref{sec:choice}. For
parallel substrates the right residency measure is the
\emph{parallel} cumulative complexity developed for memory-hard
functions~\cite{AlwenSerbinenko15}, and the bridge holds with the
parallel bound in place of $M_f$ once it is stated per input; which
explicit functions keep their cumulative memory
high under parallelism is precisely what that line
establishes~\cite{AlwenBlocki16,ACPRT17}.
\end{remark}

\begin{remark}
On mixed substrates a nonvolatile level ($p \approx 0$) converts rent to
fare at its boundary: parking the live set on disk is free to hold and
costly to touch. The optimization over placements is
Lemma~\ref{lem:serve}; either currency yields a positive floor. The
cryptographic line anticipated the trade: cumulative memory prices the
adversary's silicon~\cite{AlwenSerbinenko15}, bandwidth-hardness prices
its transfers~\cite{RenDevadas17,BlockiRenZhou18}; in our reading both price the same computation in different currencies.
\end{remark}

The cryptographic line also receives a floor back. Its cumulative
bounds are proved in the parallel random-oracle model, and the machine
of Definition~\ref{def:substrate} projects onto that model exactly,
because its control reads nothing but priced state.

\begin{lemma}[Oracle projection]\label{lem:prom}
Extend a machine with an \emph{oracle station} for a function $H$: a
fire that reads one $w$-bit block resident at its level and writes
$H$ of it, at any per-fire energy, with $H$ available only there. Every
execution on input $x$, coin region included, projects to a parallel
random-oracle algorithm~\cite{AlwenSerbinenko15} computing the same
output, with one parallel step per round containing oracle fires, the
step's queries being that round's oracle inputs, and the state
entering it the machine's computed string during the preceding hold
(its raw cells hold input words, supplied by the hardwiring);
$x$ is hardwired into the algorithm's inter-step functions, which the
model does not charge, so one algorithm serves one input. Its
cumulative memory complexity, the sum of the state sizes over
steps~\cite{AlwenSerbinenko15,ACPRT17}, is therefore at most the sum
of $|\mathrm{live}_t|$ over the holds preceding those rounds, hence
at most $\sum_t |\mathrm{live}_t|$, with no dependence on the width
$q$ or on the length of the input. \emph{Grid form.} Fix $\theta \in
[0, \tau_0)$ and call $[g\tau_0 + \theta, (g{+}1)\tau_0 + \theta)$
the $g$-th \emph{window}. The execution also projects to a parallel
random-oracle algorithm $A_\theta$, defined from $M$, $x$, and
$\theta$ alone, with one parallel step per window containing an
oracle fire, the step's queries being the window's oracle inputs and
its state the computed string at the window's first instant; its
cumulative memory complexity is at most $\sum_g
|\mathrm{live}(g\tau_0 + \theta)|$, the counts taken at instants
inside holds, whose average over $\theta$ uniform in $[0, \tau_0)$ is
$\act/\tau_0$, with no hypothesis on the holds and none on the
width.
\end{lemma}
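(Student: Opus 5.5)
The projection is built the same way as Lemma~\ref{lem:sim}, with the oracle taking the place of the raw-word queries. The parallel random-oracle model of~\cite{AlwenSerbinenko15} charges nothing for the inter-step functions, and it allows hardwiring. So it is enough to show two things. First, between consecutive rounds that contain oracle fires, the machine's evolution is a function of three things only: the computed string during the preceding hold, the input $x$ (which is hardwired), and the oracle answers of the round. Second, the state carried across a step has size at most $|\mathrm{live}_t|$.

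\emph{Round form.} I would fix $x$, coins included, and list the rounds $r_1<r_2<\dots$ of $M(x)$ that contain an oracle fire. Let $\sigma_j$ be the computed string during the hold just before $r_j$. By (i) of Definition~\ref{def:substrate} and self-description, $\sigma_j$ determines the control, the live set, and the raw placement, and hence the raw contents, since $x$ is hardwired. By Axiom~\ref{ax:lat} no fire in $r_j$ reads a bit written in $r_j$. So the oracle inputs of $r_j$ are blocks resident during that hold, and they are all computable from $\sigma_j$ and $x$. This lets me define the algorithm's step $j$ as follows. Its state is $\sigma_j$. Its query batch is the list of those blocks. Its transition function receives the answers and simulates, with $x$ hardwired, every move, fire and hold up to the hold before $r_{j+1}$. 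That simulation is a function of $(\sigma_j,x,\text{answers})$, because control reads nothing else, and $H$ is unavailable outside oracle stations. The output moves are handled the same way. Correctness is then immediate, because the simulated run is $M(x)$ itself. The cost is $\sum_j |\sigma_j| \le \sum_j |\mathrm{live}_{t_j}| \le \sum_t |\mathrm{live}_t|$. The computed cells are a subset of the live ones, and each round $r_j$ is preceded by a distinct hold. No factor of $q$ appears, because one round is one parallel step whatever its width.

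\emph{Grid form.} Here the steps are indexed by windows instead of rounds. The window $g$ contains an oracle fire at time $t\ge g\tau_0+\theta$. By Axiom~\ref{ax:lat} the block it reads was at its level by time $t-\tau_0<(g{+}1)\tau_0+\theta-\tau_0=g\tau_0+\theta$. So the block is resident at the window's first instant, and it is determined by the computed string there (part (ii) together with $x$). Two consequences follow. Every fire in the window reads only material present at $g\tau_0+\theta$. Also, a station fires at most once per window, so the window's query set is a single batch, determined by the string at that instant and $x$. The string at a window's start is a function of the string at the previous oracle window's start, of the answers, and of $x$, because the evolution in between is oracle-free and control factors through the string. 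Taking the instant $g\tau_0+\theta$ inside a hold, which holds for almost every $\theta$, the state size is at most $|\mathrm{live}(g\tau_0+\theta)|$. Averaging over $\theta$ gives $\frac1{\tau_0}\int_0^{\tau_0}\sum_g|\mathrm{live}(g\tau_0+\theta)|\,d\theta=\frac1{\tau_0}\int|\mathrm{live}(t)|\,dt=\act/\tau_0$.

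\emph{Main obstacle.} I expect the delicate step to be in the grid form, where an answer from window $g$ may be needed by a fire late in the same window. Such a fire would read a block written within the window, so it would belong to the window's batch only if that batch were not closed. Axiom~\ref{ax:lat} rules this out: the written block is not readable until $\tau_0$ later, which is in a later window. This is why the windows have length exactly $\tau_0$. A related point is that the definition of $A_\theta$ may use only $M$, $x$ and $\theta$. It must not depend on hold boundaries, and it doesn't, since it reads the string at fixed instants.
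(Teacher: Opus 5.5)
Your proposal is correct and follows the paper's proof essentially step for step. In the round form the state is the computed string of the preceding hold, clauses (i) and (ii) with the hardwired input fix the round's queries, Axiom~\ref{ax:lat} makes the round a parallel step, and the cost is a sum over a subset of holds; in the grid form the latency axiom places every bit a window's fires read, earlier answers included, in the string at the window's first instant, and the average over $\theta$ is the same exchange of integrals. There are two small slips: a station firing at most once per window is not what makes the window one batch (your closing paragraph gives the right reason, that no fire reads an answer written in its own window), and for the finitely many $\theta$ at which that instant falls in a round, the paper reads the string just before it rather than setting those $\theta$ aside, so the bound holds for every $\theta$ and not only almost every one.
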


\begin{proof}
By clause (i) the machine's continuation from a hold is a function of
its computed string and of the oracle answers and input words read
afterwards (self-description recovers the assignment from the
string), and by clause (ii) the string determines the live set and
placement; so the computed string during the hold before an
oracle-bearing round, with the input, determines that round's queries
and the whole run to the next such round. The projected algorithm
carries the computed string alone as its state and simulates the
machine between queries, which the oracle model does not charge;
wherever the machine reads an input word, its coins included, the
simulation supplies the word from the algorithm's own description,
into which $x$ is hardwired. The move is sound because the model's
lower bounds hold for every fixed algorithm, with the probability
over $H$ alone, so an algorithm built for one input is as bound as a
uniform one; carrying $x$ in the state instead would add a term
$S\,|x|$ over the $S$ oracle-bearing rounds that the coin region,
whose size Definition~\ref{def:substrate} does not bound, could make
dominant. By Axiom~\ref{ax:lat} a
query's input block is resident through the preceding hold and an
answer is readable only in later rounds, so a round is a parallel
step: its queries do not depend on one another's answers. The step
states are computed strings at distinct holds, each of length at most
the live count there, and a sum over a subset of
holds is at most $\sum_t |\mathrm{live}_t|$.

For the grid form, an oracle answer written at time $s$ is read only
at times $\geq s + \tau_0$ (Axiom~\ref{ax:lat}: read where it was
written it waits $\tau_0$, and moved elsewhere it waits $\tau_0$
after the move), so no fire in a window reads an answer written in
that window, and a window's queries do not depend on one another's
answers: a window is a parallel step. A bit is live from its writing
to its last read, so every answer written before a window's first
instant and read after it lies in the computed string at that
instant (for the finitely many $\theta$ at which that instant is a
round, read the string just before it). Hence the run from the
instant through the window is a function of that string and of the
hardwired input, and of no answer, which gives the step's queries;
and the run on to the next window's first instant is a function of
the same and of the window's answers, which gives the next state.
The state's length is at most the live count at the instant, taken
inside a hold (all but finitely many $\theta$), and live counts are
piecewise constant, so $\int_0^{\tau_0} \sum_g |\mathrm{live}(g
\tau_0 + \theta)|\,d\theta = \int |\mathrm{live}(t)|\,dt = \act$.
\end{proof}

\begin{corollary}[The adversary's joule floor]\label{cor:mhf}
Alwen, Chen, Pietrzak, Reyzin, and Tessaro~\cite{ACPRT17} prove that
every parallel random-oracle algorithm computing
$\mathrm{scrypt}^H_n$ on $w$-bit blocks has cumulative memory
complexity $\Omega(n^2 w)$ with high probability over $H$, at any
parallelism, even amortized over evaluations (the failure
probability grows with the algorithm's number of oracle queries,
which for a machine is the number of oracle fires of the execution,
finite and counted). Hence on any substrate
whose working levels are volatile, every machine computing
$\mathrm{scrypt}^H_n$ through oracle stations, however many, pays in
every execution, whatever its holds and its width,
\[
  E \;\geq\; p_{\min}\,\tau_0\,\Omega(n^2 w)
\]
with high probability over $H$, hence in expectation: the security
parameter of a memory-hard
function is, on volatile hardware, joules per guess, and provisioning
more hashing cores does not lower it, since neither the number of
oracle stations nor the pace of the holds enters the bound.
\end{corollary}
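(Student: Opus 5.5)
The plan is to compose the grid form of Lemma~\ref{lem:prom} with the cumulative-memory bound of \cite{ACPRT17}, and then an averaging step. Fix a machine $M$ with oracle stations for $H$ computing $\mathrm{scrypt}^H_n$, and fix an input $x$ with its coin region.

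\textbf{Projection.} For each offset $\theta \in [0,\tau_0)$, the grid form gives a parallel random-oracle algorithm $A_\theta$. It depends only on $M$, $x$, $\theta$, and not on $H$. It computes the same output as $M(x)$, and its cumulative memory complexity is at most $\sum_g |\mathrm{live}(g\tau_0+\theta)|$. This requires no hypothesis on the holds or on the number of oracle stations. That is the reason for using the grid form rather than the round form, which would carry a dependence on the synchrony pattern through $\Delta$. The number of oracle queries $A_\theta$ makes equals the number of oracle fires of $M(x)$. These are finite, so the failure probability in \cite{ACPRT17} is controlled by a quantity intrinsic to the execution.

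\textbf{Applying the imported bound.} \cite{ACPRT17} holds for every fixed algorithm with probability over $H$ alone. Hence, for each fixed $\theta$, with high probability over $H$ we get $\mathrm{cmc}(A_\theta) \geq c\,n^2 w$. This is the main obstacle: the event depends on $\theta$, a continuum of algorithms, so a union bound is not available directly. I would first note that $\sum_g|\mathrm{live}(g\tau_0+\theta)|$ and the algorithm $A_\theta$ change only at finitely many $\theta$, namely where a grid instant crosses a round or hold boundary of $M(x)$. For a fixed $H$, then, $[0,\tau_0)$ splits into finitely many intervals on which $A_\theta$ is constant.

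To avoid a union bound, I would instead take expectations over $\theta$ uniform in $[0,\tau_0)$. By Fubini, with probability at least $1-\epsilon$ over $H$, the set of bad $\theta$ has measure at most $\sqrt{\epsilon}\,\tau_0$ outside an event of probability $\sqrt{\epsilon}$, by Markov. On good $H$, the average over $\theta$ is then at least $(1-\sqrt\epsilon)\,c\,n^2 w$. The needed measurability holds because the piecewise structure is finite. The identity $\int_0^{\tau_0}\sum_g|\mathrm{live}(g\tau_0+\theta)|\,d\theta=\act$ from Lemma~\ref{lem:prom} then gives $\act \geq \tau_0(1-\sqrt\epsilon)\,c\,n^2 w$.

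\textbf{Converting to joules.} Every live bit sits on a working level, and by Axiom~\ref{ax:vol} it pays at least $p_{\min}$ per unit time in either conveyor state. Hence $\sum_\ell p_\ell\act_\ell \geq p_{\min}\act$, and \eqref{eq:law} yields $E \geq p_{\min}\tau_0\,\Omega(n^2w)$ with high probability over $H$. The expectation statement follows because $E\geq 0$: the expectation is at least the bound times the good-event probability. Finally, the number of oracle stations and the hold pattern appear nowhere in the chain. Width is absent from the grid-form projection, and $\Delta$ never enters. This is the parallelism-independence claim. The amortized form from \cite{ACPRT17} transfers the same way, with the projection applied to the execution that evaluates many instances.
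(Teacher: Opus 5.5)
Your proposal is correct and follows the paper's own argument. The paper's steps are the same: the grid projection with $x$ and the coins hardwired, the bound of \cite{ACPRT17} for each fixed offset with the query count independent of $\theta$, an exchange of $\mathbb{E}_H$ and $\mathbb{E}_\theta$ licensed by piecewise constancy in $\theta$, then Markov, then Axiom~\ref{ax:vol}. The only difference is that the paper takes ``half the offsets good with probability $1-2\varepsilon$'' where you take threshold $\sqrt\varepsilon$; your Markov sentence should read that the bad set has measure at most $\sqrt\varepsilon\,\tau_0$ with probability at least $1-\sqrt\varepsilon$ over $H$. The paper also adds a clause for machines correct only with probability $\gamma$ over their coins, which you omit but the statement as phrased does not require.
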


\begin{proof}
By the grid form of Lemma~\ref{lem:prom} each execution projects, its
password and any coins hardwired and for each offset $\theta$, to an
algorithm $A_\theta$ whose cumulative memory is at most $\sum_g
|\mathrm{live}(g\tau_0 + \theta)|$, with the same number of oracle
queries for every $\theta$. A machine computing $\mathrm{scrypt}^H_n$
in the sense of Definition~\ref{def:substrate} is correct on every
input, its coin strings included, so every $A_\theta$ computes
$\mathrm{scrypt}^H_n$, and the cited bound gives, for each $\theta$,
$\Pr_H[\mathrm{cmc}(A_\theta, H) < c\,n^2 w] \leq \varepsilon$ with
$\varepsilon$ negligible. For fixed $H$ and $x$ the map $\theta
\mapsto \mathrm{cmc}(A_\theta, H)$ is piecewise constant, its
breakpoints the round times modulo $\tau_0$, so averaging over
$\theta$ and exchanging the order, $\mathbb{E}_H[\Pr_\theta[\mathrm{cmc}(A_\theta, H) < c\,n^2
w]] \leq \varepsilon$, so with probability $\geq 1 - 2\varepsilon$
over $H$ at least half the offsets have $\mathrm{cmc}(A_\theta, H)
\geq c\,n^2 w$, and then $\act/\tau_0 = \mathbb{E}_\theta[\sum_g
|\mathrm{live}(g\tau_0 + \theta)|] \geq \mathbb{E}_\theta[\mathrm{cmc}(A_\theta,
H)] \geq c\,n^2 w/2$; Axiom~\ref{ax:vol} prices it, at every coin
string. A machine
correct only with probability $\gamma$ over its coins has a coin
string at which its success over $H$ is at least $\gamma$; the cited
bound says that success with small cumulative memory has negligible
probability over $H$, so the projections there pay the floor on the
success event, of probability at least $\gamma$ over $H$ less that
negligible term, and the expected energy at that coin string is at
least $\gamma$ times the floor, up to the same term.
\end{proof}

\noindent\textbf{In words.} A password-cracking chip evaluating scrypt
must keep a large table alive for most of the evaluation, and the bound
is on memory over time, not on time alone; so adding hashing cores
does not reduce the joules per guess, it only spends them faster. At
SRAM-class rent and nanosecond access times ($p_{\min}\tau_0 \approx
10^{-18}$\,J per bit-step) and the common password-hashing parameters
$n = 2^{20}$, $w = 8192$, the floor is of order $10$\,mJ per
evaluation, reading the cited bound's constant, which~\cite{ACPRT17}
do not state, as one, and kilowatt-hours per $10^9$ guesses on the
same reading, before the hashing itself is charged. The cracker's escape to a nonvolatile level
converts the rent to fare at its boundary (Lemma~\ref{lem:serve}), and
pricing that leg is exactly
bandwidth-hardness~\cite{RenDevadas17,BlockiRenZhou18}. The serving
minimum therefore appears at engineering scale twice, on opposite
sides of the ledger: in the datacenter it prices serving a context
(Section~\ref{sec:meeting}); against the adversary it prices a guess.

\begin{theorem}[Fare bridge: matrix multiplication]\label{thm:hk}
On a two-level substrate whose stations reside at a fast level of
capacity $S$ words (the slow level may be the environment), any computation of $n \times n$ matrix
multiplication by its $n^3$ elementary products performs $\Omega(n^3/\sqrt S)$ boundary crossings of $w$-bit
words, hence pays fare $E \geq c_0
w\,\Omega(n^3/\sqrt S)$; a streaming blocked schedule achieves it.
\end{theorem}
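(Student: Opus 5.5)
The plan is to port the classical Hong--Kung red--blue pebbling argument, in the segment form of Irony, Toledo, and Tiskin~\cite{IronyToledoTiskin04}, to the substrate of Definition~\ref{def:substrate}, and then match it with a blocked schedule. Each elementary product $a_{ik}b_{kj}$ must be formed by a fire at a fast-level station. That fire reads two words resident at level $0$. The partial sum $c_{ij}$ it feeds must also be resident, or else created there and later either shipped out or combined.

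\textbf{Lower bound.} I will cut the execution into consecutive segments, each containing exactly $S$ word crossings of the boundary, except possibly the last. In a segment, the words available at level $0$ are at most the $S$ words resident at its start plus at most $S$ words that cross in during it. Likewise, the partial-sum words that leave level $0$, by crossing out or by staying resident at the segment's end, number at most $2S$. So each segment touches at most $2S$ distinct $a$-entries, $2S$ distinct $b$-entries, and $2S$ distinct $c$-accumulators. One caveat is that computed words may hold encodings. Since the theorem restricts to computations \emph{by the $n^3$ elementary products}, I will annotate every resident word as carrying at most one matrix entry or partial sum (the standard Hong--Kung convention). This is the step that needs care in the model, and I will state it as the reading of the hypothesis.

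By the Loomis--Whitney inequality, the set of triples $(i,k,j)$ whose product is formed in a segment has size at most $\sqrt{|A|\,|B|\,|C|} \le (2S)^{3/2}$, where $|A|$, $|B|$, $|C|$ are the numbers of touched $a$-entries, $b$-entries, and accumulators. All $n^3$ products must be formed, so there are at least $n^3/(2S)^{3/2}$ segments. That gives traffic at least $S\,(\lceil n^3/(2S)^{3/2}\rceil - 1) = \Omega(n^3/\sqrt S)$ words whenever $n^3 \gg S^{3/2}$. In the remaining regime, the trivial bound of reading $2n^2$ inputs and writing $n^2$ outputs already dominates, because $n^3/\sqrt S \le n^2$ there. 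Every word crossing moves $w$ bits, each at fare $c_0$, and the law \eqref{eq:law} converts this traffic into $E \ge c_0 w\,\Omega(n^3/\sqrt S)$. Moving between the environment and the working level uses the same edge charge, which is why the slow level may be the environment.

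\textbf{Upper bound.} The streaming blocked schedule tiles $C$ into $b\times b$ blocks with $b=\lfloor\sqrt{S/3}\rfloor$. For each block it streams the matching $b\times b$ blocks of $A$ and $B$ over $k$, accumulates the result in place, and writes the block out once. Its traffic is $2n^3/b + n^2 = O(n^3/\sqrt S)$ words, and it respects both capacity and the word-unit rules. The inputs are re-read from the environment as raw words, and each output word leaves whole in one round, as Definition~\ref{def:substrate} demands.

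\textbf{Main obstacle.} I expect the main obstacle to be justifying the per-segment ``$2S$ touched entries'' count inside this model. Copies are free of capacity only through the source staying live, conveyors relocate words, and fires may write many bits. I will first try to charge every resident word to one matrix entry via the value annotation used for Lemma~\ref{lem:serve}. If that fails, I will restrict to computations in which products and partial sums occupy distinct words, which the hypothesis ``by its $n^3$ elementary products'' is meant to license.
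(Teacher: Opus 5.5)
Your proposal matches the paper's proof: the same cut into segments of $S$ word crossings, the same bound of $2S$ distinct $A$-entries, $B$-entries and $C$-contributions per segment under the classical one-entry-per-word accounting, Loomis--Whitney giving $(2S)^{3/2}$ products per segment, and the same $\sqrt{S/3}\times\sqrt{S/3}$ tiling for the matching upper bound. Your use of the trivial $3n^2$ traffic bound when $n^3 \lesssim S^{3/2}$ is a small addition. It closes the regime the paper's sketch glosses over, where its $S\,n^3/(2S)^{3/2} - S$ alone does not give $\Omega(n^3/\sqrt S)$.
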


\begin{proof}[Proof sketch]
Partition the execution into maximal segments of $\leq S$ crossings;
during a segment the fast level sees at most $2S$ distinct words of
each of the three classes, $A$-entries, $B$-entries, and $C$-contributions (partial sums evicted or written out count against the segment's crossings, as in the classical accounting), so by
the Loomis--Whitney argument of
\cite{HongKung81,IronyToledoTiskin04} it completes $\leq (2S)^{3/2}$
elementary products; there are $\geq n^3/(2S)^{3/2}$ segments. The
blocked upper bound promotes $\sqrt{S/3}\times\sqrt{S/3}$ tiles. The port to \emph{moving}-storage hierarchies (rings) is Appendix~\ref{app:hk}.
\end{proof}

\section{The Meeting Bound: Attention Pays $\Theta(T^2)$, State Pays $\Theta(T)$}
\label{sec:meeting}

When a language model writes the next word of an answer, it looks
back at every word of the conversation so far, through numbers it
stored for each of them (the key--value cache). That is exact causal
attention~\cite{Vaswani17}: with $T$ words, word $t$ must meet all
$t-1$ before it, and the meetings over a whole answer number about
$T^2/2$. Each meeting needs the old word's stored numbers to be
present, and the serving lemma prices presence. So the energy of a
long conversation grows with the square of its length, while a model
with a fixed-size memory grows linearly; this section states that as
theorems. Attention is the dominant workload of the decade, its memory
wall~\cite{WulfMcKee95} the one practitioners now hit, and the cleanest
modern instance of the law: the computation is
\emph{pairwise-complete}, like matrix multiplication's $n^3$ products.

\begin{definition}[$\kappa$-atomic meetings]\label{def:atomic}
An execution computes \emph{exact causal attention} over $T$ steps
$\kappa$-atomically if it keeps representations of its past tokens that are pairwise
disjoint as physical bits, of at least $\kappa$ bits each, and for
every pair $t < t'$, step $t'$
performs operations that together read every bit of token $t$'s representation at level 0, and recomputes a representation, when it does, only along routes dissipating at least $\gamma_{\mathrm{rc}}$ per bit (hypothesis (ii) of Lemma~\ref{lem:serve}; Table~\ref{tab:attention} prices the re-projection route).
Steps are time-disjoint (every operation of step $t'$ follows every
operation of step $t'{-}1$), and the \emph{inter-step time} $\Delta t$
is the least time from the last operation of a step to the first of
the next.
This hypothesis plays the role Hong--Kung's ``all $n^3$ elementary
products'' plays for matrix multiplication: it scopes the theorem to
the standard algorithm family. Quantization shrinks $\kappa$;
sparsity, windowing, and algebraic reformulation delete meetings and
exit exactness.
\end{definition}

\begin{remark}[The hypothesis is grounded, not circular]
\label{rem:kappa}
Two external results pin it. Space lower bounds by reduction from
communication complexity show that exact attention admits no
sublinear compression of its aggregate key--value state in the
$d = \Omega(\log n)$ regime: $\Omega(nd)$ bits are necessary and
$O(nd)$ numbers suffice for attention-based generation by one
attention function~\cite{HarisOnak25}, and the barrier extends to
tensor attention~\cite{KVlimits25}; the footprint is forced, not
chosen, up to the precision factor. And subquadratic-\emph{time} exact attention is
SETH-hard at large entries~\cite{AlmanSong23}: within the exact
regime, the meetings too reflect known algorithmic reality.
Disjointness, not just size, is what the charges need; Corollary~\ref{cor:aggregate} gives the form of the bound that needs only the \emph{joint} state's incompressibility.
\end{remark}

Before the conditional bound, the unconditional one: the space barrier
and the rent bridge together give a theorem about the \emph{function},
with no hypothesis on the algorithm.

\begin{theorem}[Unconditional meeting floor]\label{thm:uncond}
Call an execution of Definition~\ref{def:substrate} \emph{streaming}
if, for every $t$, the first crossing of a bit of token $t$ into a
working level comes after the last write of output $t{-}1$ into the
output region, as autoregressive decoding requires, and every bit of
a token crosses boundary $L$ exactly once (a token, once read, is
never re-read); this is a constraint on the execution, the
environment being the static tape of Definition~\ref{def:substrate}.
Let $\Delta t$ be the least time, over all streams and all steps,
between the first crossings of consecutive tokens. Then $\Delta t
\geq \tau_0$: output $t{-}1$ depends on token $t{-}1$ and an output
bit is what a fire wrote, so some fire reads a word of token $t{-}1$,
at least $\tau_0$ after that word crossed (Axiom~\ref{ax:lat}) and
hence after the token's first crossing, and token $t$'s first
crossing comes after that. (A machine allowed to read tokens ahead
could take every token in one round and make the interval below
empty; one token after the previous output is what makes the floor a
theorem. The words of one token may cross in different rounds; the
proof sends their value bits along with the message.) On any substrate whose working levels are
volatile, every streaming machine computing exact causal attention
over $T$ steps, for one attention function of key--value dimension
$d \geq C_u \log T$, pays, in expectation over the token
distribution $\mu$ constructed in the proof (keys fixed as
near-orthogonal projections of the unit vectors, values uniform
bits, queries zero), and hence on some stream,
\[
  E \;\geq\; p_{\min}\,\Delta t\,\Omega(d\,T^2).
\]
The mechanism is general: if under a distribution $\mu$ every
streaming machine has, for $t' = 2, \dots, T$, an expected minimum
of at least $s(t')$ live bits over the holds of the interval from
the first crossing of token $t'{-}1$ to the first crossing of token
$t'$, it pays
$E \geq p_{\min}\,\Delta t \sum_{t'=2}^{T} s(t')$; the proof
establishes that property for the $\mu$ above with $s(t') =
\Omega(t' d)$, through the Index reduction of Haris and
Onak~\cite{HarisOnak25}.
\end{theorem}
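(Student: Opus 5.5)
The proof has two parts: the general mechanism, which is pure accounting, and a space lower bound that feeds it. For the mechanism, fix a stream and write $I_{t'}$ for the interval from the first crossing of token $t'{-}1$ to the first crossing of token $t'$. The bound $\Delta t \geq \tau_0$ is already argued in the statement. By the definition of $\Delta t$, each $I_{t'}$ has length at least $\Delta t$. The intervals are disjoint, and time passes only in holds, so $I_{t'}$ is covered by holds, or by pieces of holds. Every live bit pays at least $p_{\min}$ per unit time by Axiom~\ref{ax:vol}, and the live count on each covered piece is at least its minimum over the interval's holds. Hence the rent accrued inside $I_{t'}$ is at least $p_{\min}\,\Delta t\,\min_{h \in I_{t'}}|\mathrm{live}_h|$. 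Summing over $t'$ and taking expectation over $\mu$ gives $E \geq p_{\min}\,\Delta t \sum_{t'} s(t')$ by linearity. An expectation lower bound then yields some stream attaining it.

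The substance is to show $s(t') = \Omega(t' d)$ under the $\mu$ of Haris and Onak. The plan is a one-way communication reduction from Index. Take any hold $h$ in $I_{t'}$. Alice is the prefix of the stream: tokens $1,\dots,t'{-}1$, whose values $v_1,\dots,v_{t'-1}$ are uniform bits and jointly carry $\Theta(t' d)$ bits of entropy. Bob is the continuation: he holds the future tokens, chosen so that a later query $q$ selects a single earlier key. The keys are fixed, near-orthogonal projections of unit vectors with $d \geq C_u\log T$, so the softmax concentrates and the output reveals the chosen $v_j$ coordinate.

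The message is the computed string at $h$, together with the contents of the raw words live at $h$. By Definition~\ref{def:substrate}(i),(ii) and self-description, this pair determines the machine's whole future given the future input. Streaming is what makes the simulation valid. Every bit of tokens $< t'$ crosses boundary $L$ exactly once, and all those first crossings precede $h$ or occur at it. (Any remaining words of token $t'{-}1$ that cross later are appended to the message as the statement indicates; this adds only $O(d)$ bits.) Bob therefore never needs to consult Alice's portion of the environment, and he can run the machine from $h$ using only the message and his own tokens. Index then forces the expected message length to be $\Omega(t' d)$, which bounds the live count at $h$ from below. Since $h$ is arbitrary, this holds for the minimum over holds. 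The lower bound survives randomization over coins and the error probability by the standard distributional form of Index.

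The step I expect to be the main obstacle is reconciling the fixed distribution $\mu$ (queries zero) with the need for Bob's future query to pick out an index. Under all-zero queries the outputs are plain averages of values, and those alone cannot expose $v_j$.

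The first thing to try is to use the fact that the machine must compute exact attention correctly on every input, not just on inputs in $\mu$'s support. The state at $h$ is a function of the prefix only. On the hard Index instance, Bob extends the prefix with a query supported off $\mu$, and correctness there decodes $v_j$. The expectation is taken over $\mu$'s prefixes, and those are exactly Alice's inputs, so the bound is still in expectation over $\mu$.

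Summing $\Omega(t' d)$ over $t' = 2,\dots,T$ then gives $\Omega(dT^2)$ and completes the theorem.
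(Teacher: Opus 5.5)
Your proposal is correct and follows essentially the paper's proof. It uses the same accounting over disjoint intervals of length $\geq \Delta t$ and the same one-way Index reduction from Haris--Onak: Alice sends the computed string, the raw contents, and the value bits of token $t'{-}1$'s uncrossed words, and Bob decodes one value bit through a query token off $\mu$'s support, using the machine's correctness on every input. One step needs to be stated more carefully. ``Since $h$ is arbitrary'' gives only $\min_h \mathbb{E}$ if $h$ is fixed in advance, so, as the paper does, let Alice choose the hold of least live count as a function of her prefix. She can, since the interval's holds are determined before any bit of token $t'$ is read, and Bob needs no timestamp; this bounds the expected minimum directly.
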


\begin{proof}
Fix a stream, a hold of the interval, and its last instant $u$. After
$u$ the machine reads from the input region only the words of token
$t'{-}1$ that have not yet crossed and the tokens $t', t'{+}1, \dots$,
so by Definition~\ref{def:substrate} every later output is a function
of the contents of the live cells at $u$, which by self-description
and (ii) determine which cells are live and which input word each raw
word holds, of those words, and of those tokens: two pasts with the
same contents at $u$ and the same uncrossed words receive the same
outputs from then on, and no timestamp is needed, since control reads
no clock and a conveyor's permutation is fixed. The protocol
of~\cite{HarisOnak25} makes this a one-way
message: Alice inserts tokens $1, \dots, t'{-}1$, runs the machine to
the end of a hold of the interval at which the live count, constant
through a hold, is least over its holds, and sends the computed
string, then the raw contents in the fixed cell order, then which
words of token $t'{-}1$ have not crossed by $u$ and the value bits of
those words, each prefix-free coded (by
self-description and (ii) the computed string determines the live
set and which input word each raw word holds, so the message
determines the state; the overhead is logarithmic in the length and
absorbed into the $\Omega$); Bob inserts a query token and reads one
value bit
off the exact attention output. The expected message length under
$\mu$ is therefore at least the distributional one-way complexity of
$\mathrm{Index}$ on $(t'{-}1)d$ uniform bits, which is $\Omega(t' d)$
(their exact-attention space bound, in the distributional form the
same information argument gives: an exact machine's message
determines every value bit); since the keys are fixed for the whole
stream and the values are uniform, one $\mu$ serves every $t'$. The
last part of the message is at most the $d$ value bits of token
$t'{-}1$ and a bitmap of its at most $3d\nu/w$ words, Bob filling the
keys and queries of the uncrossed words from $\mu$, under which they
are fixed; so the live count at $u$ is at least the message length
less $d + 3d\nu/w$ and the coding overhead, $\OO(d)$ in all. So the
minimum live count over the interval's holds is $\Omega(t' d) -
\OO(d) = \Omega(t' d)$ in expectation, for $t'$ above a constant that
the sum absorbs, the $s(t')$ of the general mechanism, throughout the
holds of an interval of length $\geq
\Delta t$, and the holds carry the interval's full duration, rounds
being instantaneous, each live bit at hold power $\geq p_{\min}$; the
intervals of distinct $t'$ are disjoint, and linearity of expectation
sums them.
\end{proof}

The proof asks of Bob only that he recover each value bit with
constant advantage, so the theorem holds as stated for a randomized
streaming machine whose outputs are $(1 \pm \eta)$-approximations of
the attention output with probability at least $9/10$ over its coins,
shared by Alice and Bob, which is the form in which Haris and Onak
state their bound (in bits, for approximate generation): under
uniform values and shared coins, a prefix-free message from which Bob
recovers each of $m$ independent bits with constant advantage has
expected length $\Omega(m)$, its entropy being at least its
information about the bits, $I(X; M \mid R) = \Omega(m)$, which is
the distributional form the expected minimum needs. We state the
exact form because exactness is what the meeting bound below prices.

With past tokens re-readable from the environment the floor becomes a
serving-lemma minimum of rent, re-read fare and recomputation, and
pricing the recomputation is exactly where the route hypothesis of
Definition~\ref{def:atomic} enters: the bound below is the sharpening.
It charges every past token's whole footprint $\kappa$, in fare, where
the unconditional floor charges $d$ bits of one attention function in
rent, and at Table~\ref{tab:attention}'s constants the two differ by
about seven orders of magnitude (four to five if the unconditional
floor is composed over the model's $640$ attention functions, one per
layer and key--value head, a composition the cited theorem does not
itself provide). The
unconditional floor is also asymptotic in $d$: its constant comes from
the Johnson--Lindenstrauss step of the reduction, which at $T = 2^{17}$
needs $d$ in the thousands (the standard bound $d \geq 8 \ln T /
\varepsilon^2$ at the reduction's $\varepsilon = 0.1$ gives $d
\approx 9 \times 10^3$), and its hard instance's softmax
exponentials, $e^{2\ln T/(1-2\varepsilon)} = T^{2/(1-2\varepsilon)}$,
exceed fp16's range (the query multiplier itself, $2\ln
T/(1-2\varepsilon)$, is small), so at $d = 128$
the theorem gives the form of the floor, not its value in the
arithmetic the systems use. (The theorem's ``exact'' is exactness
over the reals, a statement about the function; Bob's decoding needs
only that the softmax weight on the queried token exceed one half.)

\begin{theorem}[Meeting bound]\label{thm:meeting}
On a two-level substrate (level-0 capacity $S_0$, hold power $p_0$, fare
$c_0$, recomputation floor $\gamma_{\mathrm{rc}}$ for the routes of
Definition~\ref{def:atomic}), any $\kappa$-atomic execution of exact
causal attention with minimum inter-step time $\Delta t$ pays
\[
  E \;\geq\; \kappa \,\frac{T(T-1)}{2}\,\gamma, \qquad
  \gamma = \min\big(c_0,\; \gamma_{\mathrm{rc}},\; p_0\,\Delta t\big),
\]
with charges disjoint from the meetings' own operation energies.
($\Delta t$ is a parameter of the execution: a machine that overlaps
consecutive steps drives it toward zero, and
Corollary~\ref{cor:capacity} is the form that binds such a machine.)
\end{theorem}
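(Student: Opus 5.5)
The plan is to apply the serving lemma (Lemma~\ref{lem:serve}) with an annotation built from Definition~\ref{def:atomic}, and then count uses. The values are the past tokens' representations. Token $t$'s representation is one value $v_t$ of footprint $\beta_{v_t} \geq \kappa$, created when its bits first appear at a working level. Firings that rewrite bits of $v_t$ are labeled as copies when they read a resident copy, and as recomputation routes otherwise. Definition~\ref{def:atomic} supplies exactly hypotheses (i) and (ii): representations are pairwise disjoint as physical bits, and recomputation runs only along routes dissipating $\geq \gamma_{\mathrm{rc}}$ per written bit. Intermediates of the attention arithmetic (scores, softmax weights, partial outputs) stay unlabeled, so they fall outside (i) and (ii).

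The key step is to identify, for each token $t$ and each later step $t' > t$, at least one use of $v_t$ inside step $t'$, and to lower-bound the gap preceding it. By atomicity, step $t'$ reads every bit of $v_t$ at level 0. Fix bit $b$ and let $u_{t'}$ be the first non-copy read of $b$ during step $t'$. I will restrict the sum in Lemma~\ref{lem:serve} to these distinguished reads. Dropping the other reads of the same bit is harmless, since $\min(a,c)+\min(b,c) \geq \min(a+b,c)$, exactly as the lemma's parenthetical remarks. Two consecutive distinguished reads, or creation followed by the first one, lie in different steps. Steps are time-disjoint and separated by at least $\Delta t$, so the gap is $\geq \Delta t$. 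Thus each charge is at least $\min(p_0\Delta t, c_0, \gamma_{\mathrm{rc}}) = \gamma$ per bit, whether the bit is covered or uncovered; in the uncovered case the charge is already $\min(c_0,\gamma_{\mathrm{rc}}) \geq \gamma$. Summing over $\kappa$ bits, over pairs $t<t'$, gives $\kappa\,T(T-1)/2$ charges of $\gamma$ each. The bound rests on the weak form of the lemma applied bitwise, and disjointness from the meetings' operation energies is the lemma's last clause.

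I expect the main obstacle to be the creation instant and the per-bit gap. If $v_t$ is created during step $t$ itself, its first distinguished read in step $t{+}1$ is still $\geq \Delta t$ after creation, because creation is an operation of step $t$. But if a representation is recreated by relabeling, or recomputed within step $t'$, the lemma's gap is measured from the previous use of that bit, not from the step boundary. I will handle this by noting three facts. A route that rewrites $b$ inside step $t'$ before $u_{t'}$ is itself paid at $\gamma_{\mathrm{rc}}$ through the potential term of step 3 of the lemma's proof. That payment is accounted in the pending charge that $u_{t'}$ then collects, so the charge remains $\geq \gamma$. Finally, a relabeling creates a new value, and Definition~\ref{def:atomic} forbids one token's representation from being relabeled from another's bits, since that would violate disjointness. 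Should that argument leak, the fallback is to treat a relabeled representation as uncovered at its first subsequent use, which only raises the charge to $\min(c_0,\gamma_{\mathrm{rc}})$.

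Finally I check that the first-read choice does not double count across tokens. The lemma's disjointness clause handles this: route firings of one value that read another are excluded from the reading-operation energy, but they are charged to the value they write, once per write. Summation over $v$ is therefore legitimate.
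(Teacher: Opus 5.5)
Your proposal is correct and follows the paper's proof: one value per token representation, hypotheses (i) and (ii) of Lemma~\ref{lem:serve} read off Definition~\ref{def:atomic}, and each bit charged only at its first read in each step. The lemma's sum dominates this restricted sum, by $\min(a+b,c) \leq \min(a,c)+\min(b,c)$ on covered gaps and because a bit uncovered on a sub-gap is uncovered on the union. So consecutive charged uses are $\geq \Delta t$ apart, and each of the $T(T-1)/2$ meetings is charged $\geq \kappa\gamma$, disjointly from the readers' own energies.

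Your ``obstacle'' paragraph is not needed, and its fallback would not be sound as stated: a relabeled value starts with pending term zero, not as uncovered. The worry is moot anyway, because the annotation is yours to choose and it never relabels. A recomputation inside step $t'$ also cannot push the charge at $u_{t'}$ below $\gamma$, since that charge depends only on coverage and on the gap since the bit's previous use, which lies in an earlier step.
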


\begin{proof}
1. Token $t$'s representation is a value of footprint $\geq \kappa$
created at step $t$ and read at steps $t{+}1, \dots, T$
(Definition~\ref{def:atomic}); charge, for each of its bits, the
first read at each step (a bit read several times within a step is
charged once there), so that consecutive charged uses lie in
consecutive steps, their gaps are $\geq \Delta t$, and the count is
still $\kappa$ per meeting. Lemma~\ref{lem:serve} charges every use;
over this sub-sequence the charges only fall, since $\min(a + b, c)
\leq \min(a, c) + \min(b, c)$ for the covered terms and a bit
uncovered over a sub-gap is uncovered over the union, so the lemma's
sum dominates the one below. 2. Definition~\ref{def:atomic} supplies hypotheses (i) and
(ii) of Lemma~\ref{lem:serve}. 3. The lemma charges each of the
$\sum_{t=1}^{T-1} (T - t) = T(T-1)/2$ (token, step) meetings at least
$\kappa \min(p_0 \Delta t, c_0, \gamma_{\mathrm{rc}})$, disjointly and
apart from the meetings' own operation energies.
\end{proof}

\begin{corollary}[Capacity form, timing-free]\label{cor:capacity}
Let $m = \lceil S_0/\kappa \rceil$. Then
\[
  E \;\geq\; \kappa\,\frac{(T-m)(T-m-1)}{2}\,
  \min\big(c_0, \gamma_{\mathrm{rc}}\big),
\]
independent of $\Delta t$: a floor no scheduling evades once the context exceeds the fast level.
\end{corollary}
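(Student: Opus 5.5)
The plan is to reuse the proof of Theorem~\ref{thm:meeting}: apply Lemma~\ref{lem:serve} to the same $\kappa$-atomic annotation. The rent branch of the minimum, $p_0\Delta t$, cannot be used, so the timing-dependent branch is replaced by a pigeonhole argument on level-0 capacity. At any instant level 0 holds at most $S_0$ bits. Since representations are pairwise disjoint and each has at least $\kappa$ bits, at most $S_0$ bit-pairs $(v,b)$ are resident at once. The lemma gives every use of bit $b$ of token $t$ a charge of $\min(c_0,\gamma_{\mathrm{rc}})$ unless $b$ is covered, that is, unless some level-0 copy of $b$ persists throughout the gap since its previous use. So it suffices to show that at least $\kappa(T-m)(T-m-1)/2$ (bit, meeting) pairs are uncovered, and then sum.

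For the counting, consider step $t'$, and for each earlier token $t<t'$ its first read at step $t'$ (charged as in step 1 of the proof of Theorem~\ref{thm:meeting}). A bit that is covered at its step-$t'$ use was resident continuously from its step-$(t'-1)$ use (or from its creation) until the step-$t'$ use. I will pick an instant separating the two steps, namely the boundary between the last operation of step $t'-1$ and the first of step $t'$; steps are time-disjoint, so this instant exists. Every bit covered at its step-$t'$ use is resident at that instant. By capacity, at most $S_0$ bits are resident there. Hence, of the $\kappa(t'-1)$ charged bit-uses at step $t'$, at most $S_0$ are covered, and at least $\kappa(t'-1)-S_0 \geq \kappa(t'-1-m)$ are uncovered, using $m\kappa \geq S_0$.

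Summing over $t'$ gives $\sum_{t'} \kappa\max(0,t'-1-m) = \kappa\sum_{j=1}^{T-1-m} j$, which is $\kappa(T-m)(T-m-1)/2$. Each uncovered bit-use is charged at least $\min(c_0,\gamma_{\mathrm{rc}})$, and disjointness of the charges comes from Lemma~\ref{lem:serve} as before. No $\Delta t$ enters anywhere, since the argument uses only the ordering of steps and not their spacing.

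The step that needs care is the claim that a covered bit is resident at the chosen separating instant even when its previous use was at an earlier step or at creation. This holds because the coverage interval $[t^v_{i-1}, t^v_i]$ contains the separating instant whenever the use is at step $t'$ and the previous use (or creation) is no later than step $t'-1$. That ordering follows from charging only the first read per step. I would also check that the representation created at step $t'-1$ fits the count: its first meeting is at step $t'$, its creation lies before the boundary, and so it is handled the same way. If one is worried that copies might double-occupy capacity, the count only improves, since residence requires at least one level-0 cell per resident pair.
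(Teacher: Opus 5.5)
Your proposal is correct and is essentially the paper's own proof. Fix step $t'$ and note that every bit charged the covered amount there is resident at level~0 in the common state separating step $t'-1$ from step $t'$, which exists because steps are time-disjoint. There are at most $S_0 \le m\kappa$ such bits. Every other bit met at step $t'$ is uncovered and is charged $\min(c_0,\gamma_{\mathrm{rc}})$ by the sharp form of Lemma~\ref{lem:serve}, and summing $\kappa\max(0,t'-1-m)$ over $t'$ gives the bound.

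Your extra checks are the points the paper's proof relies on, either implicitly or in passing: the first-read-per-step subsequence inherited from Theorem~\ref{thm:meeting}, the token created at step $t'-1$, and the fact that copies can only help the capacity count.
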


\begin{proof}
Fix step $t'$. Every bit charged the covered amount at step $t'$
(Lemma~\ref{lem:serve}, sharp form) was resident at every instant
since its previous use at step $t'{-}1$; all those intervals contain
the gap between the last read of step $t'{-}1$ and the first read of
step $t'$ (the steps being time-disjoint), so all such bits are resident at level 0 in one common
state, the one after the last operation of step $t'{-}1$, and there are at most $S_0 \leq m\kappa$ of them (we count
every level-0 bit as key--value capacity, which only weakens the
bound). Every other bit met at step $t'$ is uncovered and charged
$\geq \min(c_0, \gamma_{\mathrm{rc}})$. Summing $\kappa \max(0,
t'-1-m)$ over $t' = 2, \dots, T$ gives the claim.
\end{proof}

\begin{corollary}[Aggregate form]\label{cor:aggregate}
Suppose instead that at every step $t'$ the execution reads at level 0
a joint representation $J_{t'}$ of tokens $1, \dots, t'{-}1$, completed
at step $t'{-}1$, of at least $K_{t'} \geq (t'-1)\,\kappa'$ bits, and
that hypothesis (ii) of Lemma~\ref{lem:serve} holds for the values
$J_{t'}$. Then $E \geq \kappa' \,\frac{T(T-1)}{2}\,\gamma$ with
$\gamma$ as in Theorem~\ref{thm:meeting}.
\end{corollary}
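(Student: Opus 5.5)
The plan follows the proof of Theorem~\ref{thm:meeting}, with the values $J_{t'}$ taking the place of the per-token representations. We apply Lemma~\ref{lem:serve} to an annotation whose values are the joint representations. For each $t' = 2,\dots,T$, label the bits of $J_{t'}$ as one value, created when its last bit is written at step $t'-1$ (or at the relabeling instant, if the execution reuses resident bits). Its uses are the level-0 reads at step $t'$, and we keep, for each bit, only the first read within the step. Hypothesis (ii) is assumed in the statement. The work is in hypothesis (i), because $J_{t'}$ and $J_{t'+1}$ may share physical bits; in practice $J_{t'+1}$ is $J_{t'}$ plus token $t'$.

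The first attempt at (i) is a chaining annotation. Relabel each physical bit, at the moment it is first read in step $t'$, from $J_{t'}$ to $J_{t'+1}$ whenever the bit also belongs to $J_{t'+1}$. Live values then stay disjoint at every instant. A bit of $J_{t'}$ now has a creation time or relabeling time that lies at the end of step $t'-1$, so its single use at step $t'$ has gap at least $\Delta t$ and is charged at least $\min(p_0\Delta t, c_0, \gamma_{\mathrm{rc}})$. The catch is that a relabeling is a creation with pending term $0$, and it happens at a use. So each physical bit is charged once per step in which it is read, which is exactly what we want: step $t'$ reads at least $K_{t'}$ bits of $J_{t'}$ at level 0, each charged once there. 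Summing,
\[
E \;\ge\; \sum_{t'=2}^{T} K_{t'}\,\gamma \;\ge\; \kappa' \sum_{t'=2}^{T}(t'-1)\,\gamma \;=\; \kappa'\,\frac{T(T-1)}{2}\,\gamma .
\]
Charges for different $(t',b)$ are disjoint by Lemma~\ref{lem:serve}.

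The main obstacle is making the chained annotation legal within the lemma. A use and a relabeling happen at the same instant, so we must order them: first the use of $J_{t'}$, which moves its pending charge into $\mathrm{Ch}$ and resets it to $0$, then the creation of $J_{t'+1}$ at that bit with $\mathrm{last}=t$. Step 6 followed by step 3 of the potential argument does this, and the combined step keeps $E-\mathrm{Ch}-\Phi-R$ nondecreasing. We also need the gap for the next use of that bit, at step $t'+1$, to be at least $\Delta t$. That holds because steps are time-disjoint and the relabeling happened during step $t'$.

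A bit of $J_{t'+1}$ written fresh at step $t'$ is created by the step's last write, which also lies in step $t'$, so its gap to the use at step $t'+1$ is again at least $\Delta t$. If it is a rewrite of an existing value rather than a creation, hypothesis (ii) prices it as a copy or a route. Bits of $J_{t'}$ not carried over are freed or left unlabeled. If $K_{t'}$ counts bits that are read but never charged because they are shared, the relabeling rule charges them anyway, since the rule applies at every read. Because each bit is charged once per step, no double counting across steps arises, and the bound follows.
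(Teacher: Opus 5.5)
Your plan is the paper's in outline: relabel the surviving bits of $J_{t'}$ into $J_{t'+1}$ so that hypothesis (i) holds, get every gap $\geq \Delta t$ from the time-disjointness of the steps, and sum $K_{t'}\gamma$. The step that does not go through as written is your relabeling schedule.

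Lemma~\ref{lem:serve} lets an annotation relabel resident bits to a \emph{new} value, which is created at that one instant. Its charges measure gaps from a single creation time $t^v_0$ per value, and hypothesis (ii) exempts only the one firing that creates $v$. You relabel bit by bit, at each bit's first read in step $t'$. That creates $J_{t'+1}$ at the earliest such read and then adds bits to an already existing value at later instants, which the lemma does not license.

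The more serious problem is the order within a step. The statement allows step $t'$ to read $J_{t'}$ before it finishes writing $J_{t'+1}$, as it must when $J_{t'+1}$ is computed from $J_{t'}$. In that case, under your annotation, $J_{t'+1}$ already exists when its new bits are written. The firings that complete it are then non-creating writes to $J_{t'+1}$, and in general they are neither copies nor routes; a route may not even read bits of the value it writes, and such a firing may read the relabeled ones. So hypothesis (ii), which you take as given, fails for your annotation. Your sentence that (ii) ``prices it as a copy or a route'' assumes exactly what is at issue.

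The paper's proof avoids this by choosing one creation instant. $J_{t'+1}$ is created at the later of its completion and the last read of $J_{t'}$ within step $t'$, and the surviving bits of $J_{t'}$ are relabeled then. Three things follow from that choice:
\begin{itemize}
\item Every write that builds $J_{t'+1}$ has already happened, so (ii) applies in the form assumed, with the completing firing or the relabeling as the creation.
\item Every read of $J_{t'}$ in the step has been served, so the relabeling cuts off no use. Your use-then-relabel ordering is then needed only at that single instant.
\item The instant lies inside step $t'$, so the reads of step $t'{+}1$ still follow it by at least $\Delta t$.
\end{itemize}
With the creation instant chosen this way, the rest of your argument is exactly the paper's: each step $t'$ is charged at least $K_{t'}\gamma$, and $\sum_{t'=2}^{T}(t'-1)\kappa'\gamma = \kappa'\,\frac{T(T-1)}{2}\,\gamma$.
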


\begin{proof}
Each $J_{t'}$ is a value created at step $t'{-}1$ and read once, at
step $t'$, after a gap $\geq \Delta t$. $J_{t'+1}$ is completed
during step $t'$ while $J_{t'}$ is being read there, so the annotation
chooses the creation instant: it creates $J_{t'+1}$ at the later of
its completion and the last read of $J_{t'}$ within step $t'$,
relabeling the bits of $J_{t'}$ that survive, as
Lemma~\ref{lem:serve} allows, so the two are never simultaneously
live and hypothesis (i) holds; the relabeled bits start their terms
at zero, and the creation precedes step $t'{+}1$, whose reads follow
the inter-step gap, so the gap to the use is still $\geq \Delta t$.
Lemma~\ref{lem:serve} charges each $\geq K_{t'}\gamma$,
and $\sum_{t'=2}^{T} (t'-1)\kappa' = \kappa' T(T-1)/2$. This is the
form licensed by the aggregate space bound of~\cite{HarisOnak25}, with
$\kappa'$ the incompressible bits per token of the joint state.
\end{proof}

\begin{proposition}[State pays linearly]\label{prop:ssm}
A recurrent model holding a state of $\sigma$ bits at level 0 (state-space
models and their kin~\cite{GuDao23}) admits executions whose memory cost
is $T(p_0\,\sigma\,\Delta t + c_0\,\iota)$ for per-step I/O $\iota$, linear in $T$; conversely a model whose state is $\sigma$ incompressible bits live at working levels pays rent $\Omega(T\,p_{\min}\,\sigma\,\Delta t)$ on the same
accounting. Hence the attention/state separation on the memory terms
is $\Theta(\kappa T/\sigma)$. This compares the costs of two
functions, not of two algorithms for one: a $\sigma$-bit state
provably cannot compute what attention computes beyond $\sigma$
(copying, for instance~\cite{Jelassi24}; see
also~\cite{SanfordHT23}), so it is a separation of what is remembered,
priced, and not a separation in the complexity-theoretic sense.
\end{proposition}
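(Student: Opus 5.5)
The plan has three parts: an upper bound by an explicit schedule, a lower bound by pricing residency, and the ratio. Throughout we reuse the two-level accounting of Theorem~\ref{thm:meeting}.

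\emph{Upper bound.} We exhibit the obvious execution on a two-level substrate with $\sigma \le S_0$. The $\sigma$-bit state stays resident at level~0 for the whole run. At step $t$ we move in the $\iota$ bits of per-step input and output: token $t$ downward, output $t$ upward, each a whole word as Definition~\ref{def:substrate} requires. We then fire the recurrence update, which overwrites the state in place, and hold for $\Delta t$. The rent per step is $p_0(\sigma + O(\iota))\Delta t$ and the fare is $c_0\iota$. Summing over $T$ steps gives a memory bill of $T(p_0\sigma\Delta t + c_0\iota)$, after absorbing the transient token bits into $\iota$ or the constant. Operation energies are excluded, since the claim concerns memory terms only.

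\emph{Lower bound.} Here the hypothesis ``$\sigma$ incompressible bits live at working levels'' must be read as in Corollary~\ref{cor:aggregate}. At every instant between the completion of step $t-1$'s state and its read in step $t$, the live working set carries at least $\sigma$ bits. A mechanical route is to cite the general mechanism of Theorem~\ref{thm:uncond} with $s(t') = \sigma$: the intervals between consecutive token crossings are disjoint, each lasts at least $\Delta t$, and each carries $\ge\sigma$ live bits in every hold. That yields rent $\ge p_{\min}\Delta t\,\sigma(T-1) = \Omega(T p_{\min}\sigma\Delta t)$. An alternative route applies Lemma~\ref{lem:serve} to the values $J_t$ = state after step $t$, each read once after a gap $\ge\Delta t$. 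That charges $\sigma\min(p_0\Delta t, c_0, \gamma_{\mathrm{rc}})$ per step. The statement, however, asserts pure rent because the bits are assumed live, so I would use the residency route and keep the serving form as a remark.

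\emph{Separation.} Theorem~\ref{thm:meeting}, in its rent branch with $\gamma = p_0\Delta t$, gives $\Theta(\kappa T^2 p_0\Delta t)$ for attention. This is matched above by the streaming schedule that keeps the whole KV cache resident, whose rent is $\sum_t t\kappa p_0\Delta t$. Dividing by $\Theta(T\sigma p_0\Delta t)$ gives $\Theta(\kappa T/\sigma)$, provided the fare term $c_0\iota$ is of lower order, i.e., $\iota \lesssim \sigma p_0\Delta t/c_0$ or counted on both sides alike. The closing sentence about copying is interpretive: it cites \cite{Jelassi24} and needs no proof.

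\emph{Main obstacle.} The difficulty is making ``incompressible live bits'' rigorous without an algorithm-independent lower bound. The state size is a property of the model, not of the function, so the lower bound is only as strong as the annotation hypothesis. I would state it explicitly as the analogue of $\kappa$-atomicity, with the live state never below $\sigma$ bits between steps, and note that Corollary~\ref{cor:aggregate} supplies the serving-lemma version when only joint incompressibility is assumed.
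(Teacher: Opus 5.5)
Your proposal is correct and is essentially the argument the paper folds into the statement itself, which has no separate proof: the resident-state schedule gives the upper bound, rent on $\sigma$ live bits over $T-1$ disjoint inter-step intervals gives the converse, and dividing against Theorem~\ref{thm:meeting} gives the ratio, with platform constants absorbed into the $\Theta$.

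One refinement: you need not restrict to the rent branch (today fare binds), and if you want the per-bit prices to cancel rather than hide in the $\Theta$, use your serving-lemma route on the $J_t$, which charges the state $\sigma\min(p_0\Delta t, c_0, \gamma_{\mathrm{rc}})$ per step with the same $\gamma$ as the meeting bound, because the residency route yields only $p_{\min}$, not the $p_0$ that your division by $\Theta(T\sigma p_0\Delta t)$ assumes.
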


\begin{table}[t]
\caption{The meeting bound with today's constants (illustrative, order-of-magnitude). The SRAM rent is for high-performance arrays (low-power arrays sit 1--2 orders lower); $\gamma_{\mathrm{rc}}$ prices the re-projection route at $\sim$1\,pJ/FLOP and is a hypothesis on the execution (Definition~\ref{def:atomic}), not a substrate floor; the state-model row and the unconditional row assume $\Delta t = 20$\,ms per token, the latter at DRAM self-refresh rent on $d = 128$ bits per token, an extrapolation of the cited theorem, whose constant is asymptotic in $d$ (Section~\ref{sec:meeting}), and, in the next row, composed over the model's $640$ attention functions (one per layer and key--value head), which is an assumption; remaining sources cited in text.}
\label{tab:attention}
\small
\begin{tabular}{@{}ll@{}}
\toprule
$\kappa$ (70B-class, 80 layers, GQA, fp16) & $\approx 320$\,KB/token $= 2.6$\,Mb \\
fare $c_0$, HBM-class~\cite{OConnor17,KecklerDally11} & $\approx 4$\,pJ/bit \\
rent $p_0$, SRAM-class hold & $\sim$1\,nW/bit \\
rent, DRAM self-refresh~\cite{LiuJVM12} & $\sim$0.1--1\,pW/bit \\
$\gamma_{\mathrm{rc}}$, KV re-projection & $\sim 10^3$\,pJ/bit \\
\midrule
fare floor at $T{=}128$k, whole generation & $\kappa T^2 c_0/2 \approx 9{\times}10^4$\,J \\
unconditional floor, streaming (Theorem~\ref{thm:uncond}): & \\
\quad one attention function (extrapolated) & $\approx 2$--$20$\,mJ \\
\quad composed over $640$ of them (assumed) & $\approx 1.4$--$14$\,J \\
fare at the last token & $\kappa T c_0 \approx 1.4$\,J \\
model arithmetic per token (constant in $T$) & $\approx 0.1$--$0.3$\,J \\
crossover $T^*$ (fare $=$ arithmetic) & $\approx 1$--$3 \times 10^4$ \\
all-SRAM rent alternative ($\kappa T = 42$\,GB) & $\approx 340$\,W standing \\
state model, $\sigma = 10^8$--$10^9$\,b memory term & $\approx 0.002$--$0.02$\,J/token \\
\bottomrule
\end{tabular}
\end{table}

\noindent
\textbf{Reading the numbers} (Table~\ref{tab:attention}, drawn as
Figure~\ref{fig:frontier}; fares and ops
from \cite{KecklerDally11,Horowitz2014}, refresh from \cite{LiuJVM12}).
Under the route hypothesis of Definition~\ref{def:atomic} (recomputation by re-projection), today $\gamma = c_0$: the binding currency is fare, and the bound says
energy per generated token grows \emph{linearly in context length} with
slope $\kappa c_0$ while the model's arithmetic stays constant, so the KV term overtakes near $T^* \approx$ tens of thousands of tokens, which
is precisely where long-context serving becomes bandwidth-bound in
practice. The known systems are placements on the serving lemma's
frontier: FlashAttention streams the meetings (fare-payer, and
fare-optimal: Saha and Ye prove its I/O count optimal for the batched
computation by red--blue pebbling~\cite{SahaYe24}, the fare leg of the
bound for prefill, where ours prices decode in three
currencies)~\cite{Dao22}; paged KV manages where the rent is
paid~\cite{KwonSOSP23}; full-SRAM residency is the rent-payer (priced:
infeasible capacity, hundreds of watts standing); recomputing KV each
step is the ops-payer ($\gamma_{\mathrm{rc}} \sim 10^3\pJ$/bit; it loses by orders); gradient checkpointing is the same trade run during
training, action exchanged for
recomputation~\cite{ChenXZG16,GriewankWalther00}. Batching $B$
sequences amortizes the weights' fare but not the key--value fare,
which is per sequence: the bound predicts that decode becomes
bandwidth-bound in the key--value term at large batch, which is what
serving systems observe. Recurrent state
\emph{exits} the meeting count itself (Proposition~\ref{prop:ssm}):
the attention-versus-state debate is, in our account, an energy theorem, with the seed-residency floor of Theorem~\ref{thm:action} as its miniature in the model of our instrument. The bound also tells the auditor what a sub-floor measurement
would \emph{mean}: the system is not $\kappa$-atomic, so its effective footprint is smaller (compression) or its meetings fewer (approximation).

\begin{figure}[t]
\centering
\begin{tikzpicture}
\begin{axis}[
  width=0.96\columnwidth, height=8.4cm,
  xmode=log, ymode=log,
  xmin=1e3, xmax=1e6, ymin=3e-10, ymax=3e3,
  xlabel={context length $T$ (tokens)},
  ylabel={joules per generated token},
  xtick={1e3,1e4,1e5,1e6},
  ytick={1e-9,1e-7,1e-5,1e-3,1e-1,1e1,1e3},
  tick label style={font=\scriptsize},
  label style={font=\footnotesize},
]
\fill[black!8]  (axis cs:9.5e3,3e-10) rectangle (axis cs:2.9e4,3e3);
\fill[black!15] (axis cs:1e3,0.1)    rectangle (axis cs:1e6,0.3);
\fill[black!15] (axis cs:1e3,0.002)  rectangle (axis cs:1e6,0.02);
\addplot[domain=1e3:1e6, samples=2, line width=0.9pt] {1.048e-5*x};
\addplot[domain=1e3:1e6, samples=2, dashed, line width=0.7pt] {2.62e-3*x};
\addplot[domain=1e3:1e6, samples=2, densely dotted, line width=0.9pt] {5.24e-5*x};
\addplot[domain=1e3:1e6, samples=2, dashdotted, black!60, line width=0.6pt] {2.56e-12*x};
\addplot[only marks, mark=*, mark size=1.8pt] coordinates {(1.3e5,1.37)};
\addplot[only marks, mark=square*, mark size=1.7pt] coordinates {(1.3e5,6.8)};
\addplot[only marks, mark=triangle*, mark size=2.2pt] coordinates {(1.3e5,341)};
\node[font=\scriptsize, anchor=west] at (axis cs:1.15e3,12) {recompute};
\node[font=\scriptsize, anchor=west] at (axis cs:1.15e3,3.1e-2) {rent per step};
\node[font=\scriptsize, anchor=west] at (axis cs:7e4,0.36) {fare};
\node[font=\scriptsize, anchor=north west] at (axis cs:1.15e3,2.4e-9) {unconditional floor};
\node[font=\scriptsize, anchor=west] at (axis cs:3e4,0.16) {model arithmetic};
\node[font=\scriptsize, anchor=west] at (axis cs:1.15e3,6.4e-3) {recurrent-state memory term};
\node[font=\scriptsize, anchor=south] at (axis cs:1.66e4,4e-10) {$T^*$};
\node[font=\scriptsize, anchor=north west, align=left] at (axis cs:1.5e5,0.95) {FlashAttention,\\paged KV};
\node[font=\scriptsize, anchor=east] at (axis cs:1.12e5,9.0) {full-SRAM residency};
\node[font=\scriptsize, anchor=east] at (axis cs:1.15e5,470) {KV recomputed each step};
\end{axis}
\end{tikzpicture}
\caption{The serving frontier of exact attention at the constants of
Table~\ref{tab:attention} (illustrative, order of magnitude). Every
past token's $\kappa$ bits must be served at every step, by fare
($\kappa c_0 T$ per generated token), by rent since the previous step
($\kappa p_0 \Delta t\,T$), or by recomputation
($\kappa\gamma_{\mathrm{rc}}T$), so each pure strategy is a line of
slope one and a platform pays the lowest, today fare
(Theorem~\ref{thm:meeting}); the marked systems are placements on the
frontier, not measurements. The fare line passes the model's own
arithmetic in the band $T^* \approx 1$--$3\times10^4$, where
long-context serving becomes bandwidth-bound in practice; a recurrent
state exits the meeting count and stays flat
(Proposition~\ref{prop:ssm}); the unconditional streaming floor
(Theorem~\ref{thm:uncond}; one attention function at DRAM-refresh
rent, drawn at the upper end of Table~\ref{tab:attention}'s band,
$1$\,pW per bit, an extrapolation) shows what is proven with no serving
hypothesis, six to seven orders below the fare leg.}
\Description{Log-log plot of joules per generated token against
context length, with three slope-one serving lines for fare, rent per
step, and recomputation, shaded bands for model arithmetic and
recurrent-state memory, the crossover strip near ten to thirty
thousand tokens, the unconditional streaming floor far below, and
three marked systems placed on the frontier.}
\label{fig:frontier}
\end{figure}

\section{Structure: Why Hierarchies Exist}
\label{sec:structure}

The bridges and the meeting bound price \emph{how much} keeping and
moving cost. The structural half asks \emph{why a hierarchy of
levels exists at all}. Necessity is the asymptotic answer and arrives
late; the practical answer is that a hierarchy converts rotation into
parking: what is not needed now sits still on a cheap shelf instead
of riding the conveyor, and a machine with only one conveyor has no
shelf. In the model that abstracts our calibrating machine, storage
that circulates, where layout (``seating'') is exact and every
construction the compiler realizes is machine-checked, we prove the
separation in rotation action (Theorem~\ref{thm:action}), price
parking on the instrument at about a fiftieth of rotation per
slot-cycle, and leave open whether the trade beats the best single
ring in energy (Open Problem~9); the
seating theory around the main theorem is stated below in brief and
developed, statements and proofs, in Appendix~\ref{app:proofs}.

\begin{definition}[Cyclic-storage machine]\label{def:csm}
A \csm{} stores packets (values, instructions, relays, or empty) in
rings (cyclic arrays of slots advancing one position per cycle) and computes at fixed stations: an instruction packet arriving at a
station reads operands $\leq W$ slots ahead of itself and rewrites a
packet $\leq W$ ahead; a \emph{relay} copies a packet a bounded
distance; a \emph{transfer} moves a packet between rings at a scheduled
phase; a ring may be \emph{parked} (its slots hold without advancing,
and its stations fire nothing: a station fires on an instruction
packet arriving in front of it, and on a parked ring nothing
arrives).
A ring is a storage level in the sense of
Definition~\ref{def:substrate}, its stations at that level, whose
conveyor is its rotation, at
hold power the shift constant $e$ while it advances, and a hierarchy of rings is a substrate with one level per ring and one boundary, at the landing fare charged per crossing in either direction, between each pair of rings joined by a transfer station, a graph of boundaries as Definition~\ref{def:substrate} allows, a transfer crossing one of them (realized as moves and two occupancy fires, Section~\ref{sec:instrument}); parked, it is
the same level with its conveyor at rest, at hold power
$p_{\mathrm{park}} > 0$ by Axiom~\ref{ax:vol} (Definition~\ref{def:substrate} gives a level one hold power running and one at rest), on our instrument the clock-gated slot of Table~\ref{tab:events}, switching plus leakage. A \emph{seating} is an injective assignment of
a program's packets to slots. The \emph{rotation action} of a run is
$\act = \sum_t \ell_t$, live packets summed over advancing cycles (the model form of the law's rent, and, because the storage moves, of its fare too), and its \emph{parked action} $\act_{\mathrm{park}}$
is the same sum over parked cycles, priced at $p_{\mathrm{park}}$
rather than at the shift constant $e$; we report both in every action comparison below. Settle semantics re-fires idempotently every revolution,
so a run's length is fixed by its \emph{output convention}, a designated (ring, slot, cycle), and every action comparison we make is between minimal correct runs, which are finite executions in
the sense of Definition~\ref{def:substrate}.
\end{definition}

Four facts about single-ring seating, stated here in brief and
developed, statements, definitions, and proofs, in
Appendix~\ref{app:proofs}, set the stage. \emph{Necessity is
asymptotic}: a correct single-ring seating with window $W$ forces
routed cyclic value-cutwidth $\leq 2W$ and linear value-cutwidth
$\leq 4W$ (Theorem~\ref{thm:necessity}), so graphs of larger cutwidth
cannot seat at all. \emph{Sufficiency nearly meets it}: strand load
$\lambda \leq \lfloor (W-5)/2 \rfloor$ seats on one ring by an
explicit lattice, one strand short of the counting ceiling
(Theorem~\ref{thm:lattice}), and at $W = 8$ the lattice's capacity is
exactly the discipline our compiler realizes. \emph{Optimal layout is
intractable}: the minimal window is NP-hard to approximate within any
constant factor, by a sandwich $\mathrm{bw} + 1 \leq W^* \leq
5\,\mathrm{bw} + 1$ against caterpillar bandwidth
(Theorem~\ref{thm:nphard}), while seatability at fixed $W$ is
decidable in $n^{\OO(W)}$ time (Remark~\ref{rem:saxe}, a sketch).
And \emph{systolic arrays embed}:
every unidirectional moving-results design seats on a chain of rings
at constant window and constant slowdown
(Proposition~\ref{prop:systolic}), our streaming tap and two-stage
pipeline being its one- and two-cell instances, while
stationary-result designs map to period gearing, the accumulator
recurring at the ring period as the stream passes at the beat period,
which is exactly our measured dot product and MAC chain. The \csm{}
is, in this sense, the stored-program closure of the systolic array:
the rhythm survives; the function becomes data.

\begin{corollary}[Wide reductions force a hierarchy, asymptotically]\label{cor:trees}
The balanced $N$-leaf reduction has value-cutwidth $\cw =
\Theta(\log N)$: every value has one consumer, so it is the edge
cutwidth of the complete binary tree, which is
$\Theta(\log N)$~\cite{Lengauer82} and dominates vertex separation
$=$ pathwidth $\geq \tfrac12 \log_2 N$~\cite{EllisSudboroughTurner94}),
so by Theorem~\ref{thm:necessity} (against $\cw \leq 4W$) it has
\emph{no} single-ring seating once $N > 2^{8W}$, while a staged
hierarchy seats it (Theorem~\ref{thm:action}). We state the threshold
with its constant because the constant decides what the corollary
says: at our instrument's $W = 8$ it is $N > 2^{64}$. Below it, which
is every reduction anyone will run, single-ring seatings exist and the
hierarchy is a matter of rotation action (Theorem~\ref{thm:action}),
whose energy form is Open Problem~9; beyond it the hierarchy is
necessary. Necessity is the asymptotic fact; the separation in
rotation is the practical one.
\end{corollary}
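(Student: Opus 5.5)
The corollary has three parts: a cutwidth computation for the reduction tree, a threshold obtained by comparing it with the necessity bound, and an existence claim for staged seatings. I would prove them in that order.

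\emph{Step 1: identify value-cutwidth with tree cutwidth.} The balanced $N$-leaf reduction's dataflow graph is the complete binary tree on $2N-1$ nodes, with the leaves as inputs and each internal node combining its two children. Every value has exactly one consumer, its parent. So under any linear order of the operations, the values live across a cut are exactly the tree edges crossing it. Hence $\cw$ equals the edge cutwidth of the complete binary tree, and $\cw = \Theta(\log N)$ by~\cite{Lengauer82}.

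\emph{Step 2: the explicit lower bound.} For the threshold only the lower bound matters, and I want it with a constant. Edge cutwidth dominates vertex separation, since every vertex counted at a cut has an incident edge crossing that cut. Vertex separation equals pathwidth. Pathwidth of the complete binary tree of height $h = \log_2 N$ is at least $\lceil h/2\rceil$~\cite{EllisSudboroughTurner94}. Together these give $\cw \geq \tfrac12\log_2 N$.

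\emph{Step 3: the threshold.} Theorem~\ref{thm:necessity} says any correct single-ring seating with window $W$ has linear value-cutwidth at most $4W$. If $N > 2^{8W}$, then $\cw \geq \tfrac12\log_2 N > 4W$, so no single-ring seating exists. At $W = 8$ this is $N > 2^{64}$, which is pure arithmetic.

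\emph{Step 4: existence of the staged seating.} I would cite the construction of Theorem~\ref{thm:action}, which stages the reduction through rings of stage size $b$ with bounded window. Each ring holds only a bounded-cutwidth piece, and values pass between stages by transfers. This shows the hierarchy exists for every $N$.

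The one delicate point is Step 1. I must check that the notion of value-cutwidth in Theorem~\ref{thm:necessity} counts a value as live from its producer to its single consumer, so that it coincides with edge cutwidth rather than vertex separation. If it instead counts something closer to vertex separation, Step 2's pathwidth bound applies to it directly. Either way the threshold constant survives.
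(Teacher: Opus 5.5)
Your proposal is correct and follows the paper's own argument. Because each value has one consumer, value-cutwidth is the edge cutwidth of the complete binary tree; this is exactly how Definition~\ref{def:vcw} routes a value, which settles your delicate point in favor of Step~1. Cutwidth dominates vertex separation $=$ pathwidth $\geq \lceil h/2\rceil$, comparing with the $4W$ bound of Theorem~\ref{thm:necessity} gives $N > 2^{8W}$, and the staged seating is taken from Theorem~\ref{thm:action}. The only difference is that you cite the $\lceil h/2\rceil$ pathwidth bound directly, whereas the paper derives it from Ellis--Sudborough--Turner's three-branch criterion via $\mathrm{vs}(T_h) \geq \mathrm{vs}(T_{h-2}) + 1$ at a child of the root.
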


\begin{theorem}[Action separation]\label{thm:action}
Let the \emph{$N$-leaf reduction} be the balanced binary tree of
$N{-}1$ binary operations over $N$ seed values, $N/2$ of which consume
two seeds each. Any single-ring program computing it with $s$ stations at which a seed-consuming operation can fire (one, the multiplier, for the products of our instrument's kernels), where inputs enter only as seeded packets and kills after last use are allowed, has rotation action $\act \geq N^2/(4s)$. A staged
hierarchy on $\OO(N/b)$ rings of period $\OO(b)$, one $b$-leaf subtree
per ring with roots carried upward by scheduled transfers, for the
stage size $b = 2^{\lambda+1}$ that the window's strand capacity
$\lambda = \lfloor (W-5)/2 \rfloor$ permits ($b = 4$ at $W = 8$),
achieves rotation action $\act = \OO(N b)$, parked action
$\act_{\mathrm{park}} = \OO(N^2)$, and $\OO(N/b)$ transfers. Hence the
rotation-action ratio is $\Omega(N/(s\,b))$, linear in $N$ for fixed
$W$. Priced, with parked rent at the resting hold power of
Definition~\ref{def:csm}, the hierarchy's bill is asymptotically its
parked rent, $p_{\mathrm{park}}\,c_2 N^2$ for a constant $c_2$ of the
construction, against the single ring's floor $e\,N^2/(4s)$ in
rotation; the ratio of that floor to the hierarchy's bill tends to
$e/(4s\,p_{\mathrm{park}}\,c_2)$, so the theorem separates energy
only where $c_2 < e/(4s\,p_{\mathrm{park}})$, which is about $13$ at
$s = 1$ on our instrument ($e/p_{\mathrm{park}} \approx 54$,
Table~\ref{tab:events}, leakage included). Table~\ref{tab:parked}
reports $c_2$ on the instrument; it is above that value.
\end{theorem}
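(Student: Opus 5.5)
The proof has two halves: a lower bound for any single ring and a matching construction for the hierarchy. The pricing clause then follows from both.

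\textbf{Lower bound.} This is the seed-residency argument of Remark~\ref{rem:input}, carried into the \csm{}.

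On a single ring every seed is a packet live from cycle $0$, and the ring advances every cycle while anything is live. A seed therefore contributes one unit of rotation action for each cycle until its last use, after which it may be killed.

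A seed is consumed only by one of the $N/2$ seed-consuming operations. Each fires at one of the $s$ eligible stations, and a station fires at most once per cycle (Axiom~\ref{ax:lat} with $\tau_0$ one cycle). Hence in the first $k$ cycles at most $sk$ seed-consuming operations have fired, and so at most $2sk$ seeds have been released.

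Order the seeds by the time of their last use. The $j$-th seed released is then live for at least $\lceil j/(2s)\rceil$ cycles. Summing over the seeds gives
\[
\act \;\geq\; \sum_{j=1}^{N}\Big\lceil \frac{j}{2s}\Big\rceil \;\geq\; \frac{N^2}{4s}.
\]
No claim about seating or window is needed here. The one care point is that settle semantics and the output convention do not shorten residency: minimal correct runs still must hold each seed until its consumer fires.

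\textbf{Upper bound.} I would build the hierarchy explicitly.

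Take $N/b$ leaf rings, each of period $\OO(b)$, each seating one $b$-leaf subtree by the lattice of Theorem~\ref{thm:lattice} at strand load $\lambda$, so $b=2^{\lambda+1}$. Above them put a tree of aggregation rings that receive roots by scheduled transfers.

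The schedule activates one leaf ring at a time and parks all others.
\begin{itemize}
\item An active ring runs for $\OO(b)$ cycles with $\OO(b)$ live packets. This contributes $\OO(b^2)$ rotation action per ring, for $\OO(Nb)$ in total.
\item It then transfers its root upward and parks. Transfers total $\OO(N/b)$, since the transfer tree has $\OO(N/b)$ edges.
\item The upper levels, re-staged the same way, form a geometric series contributing $\OO(N)$ further rotation action.
\end{itemize}

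Parked action comes from the $\OO(N)$ waiting seeds, each parked on one of the $N/b$ rings. They wait for up to $\OO(N/b)\cdot\OO(b)=\OO(N)$ cycles, which gives $\act_{\mathrm{park}}=\OO(N^2)$. I would fix the constant $c_2$ by writing the sum out: roughly half the seeds wait half the total schedule.

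Dividing the two bounds gives the rotation ratio $\Omega(N/(sb))$.

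\textbf{Pricing clause.} The hierarchy's bill is $e\,\OO(Nb) + p_{\mathrm{park}}c_2N^2$ plus $\OO(N/b)$ fares. This is asymptotically $p_{\mathrm{park}}c_2N^2$. Comparing it against $eN^2/(4s)$ yields the limit ratio $e/(4s\,p_{\mathrm{park}}c_2)$, and the instrument value $e/p_{\mathrm{park}}\approx54$ gives the threshold $c_2<13.5$ at $s=1$.

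\textbf{Main obstacle.} I expect the construction to be the hard part, specifically showing that the transfer phases and the parking discipline respect the window $W$ and the transfer-station geometry. The approach I would try first is to reuse the machine-checked two-stage pipeline from the appendix as the inductive gadget, and argue that each stage's seating is independent of the others.
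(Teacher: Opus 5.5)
Your lower bound is the paper's seed-residency count. You order seeds by release time where the paper orders the seed-consuming fires, and both give $N^2/(4s)$. Your hierarchy is also the paper's construction in outline.

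\textbf{The lower bound.} There is one slip. In a \csm{} a single ring may park, so ``the ring advances every cycle while anything is live'' is false in the model. The bound survives only because a parked ring fires nothing, so firings and residency must both be counted in advancing cycles.

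\textbf{The upper bound.} There are two real gaps.
\begin{enumerate}
\item You assert that a $b$-leaf subtree seats by Theorem~\ref{thm:lattice} at $b = 2^{\lambda+1}$ without computing its strand load. The fact needed is that the post-order linearization of a balanced $b$-leaf tree has strand load exactly $\log_2 b - 1$, peaking at the last join of the deepest right spine. This is also why the stage size must be fixed by $W$ rather than grow with $N$.
\item The gadget you propose for your main obstacle does not fit. The two-stage pipeline is the $n=2$ case of Proposition~\ref{prop:systolic}. That chain lands its transfers correctly because all its rings advance in lockstep at a phase chosen once. Under your own schedule, the receiving ring is parked while each child runs and transfers, so there is no common rotation to phase against. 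The idea that closes this is that settle semantics makes a seating rotation-invariant. Rotate each receiving ring's seating, once, until every child's landing falls at the transfer's destination offset, in the seed slot of the cell that consumes it. The stages then become independent, and no induction is needed.
\end{enumerate}

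\textbf{The pricing clause.} Parked action counts every live packet on a parked ring, not just the waiting seeds. Instruction, result, relay, and transfer packets sit there too, about four packets per seed on the instrument. This leaves $\act_{\mathrm{park}} = \OO(N^2)$ intact, since $\OO(N)$ packets wait over an $\OO(N)$-cycle run, but it changes the constant.

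The statement's last sentence, that $c_2$ exceeds $e/(4\,p_{\mathrm{park}}) \approx 13$ at $s = 1$, is not derivable from the construction's sum. It is a measurement read off Table~\ref{tab:parked}, where $c_2$ runs from $28$ to $101$ and is still rising. That value comes from about four packets per seed parked through a run of about $28$ cycles per seed. Your heuristic, half the seeds waiting half the schedule, gives $c_2 \approx 28/4 = 7$ at that run length. Carried out, it would put $c_2$ on the wrong side of the threshold and predict the opposite of the statement's conclusion.
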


\begin{proof}[Proof of the lower bound]
All $N$ seeds are live from cycle 0, and each lives at least until the
operation consuming it fires; a station fires at most once per cycle,
so the $r$-th seed-consuming operation fires no earlier than
advancing cycle $\lceil r/s \rceil$, $r = 1, \dots, N/2$ (a parked
ring accrues no rotation action and fires nothing,
Definition~\ref{def:csm}). Then $\act \geq 2
\sum_{r=1}^{N/2} \lceil r/s \rceil \geq \tfrac{N}{2}(\tfrac{N}{2}+1)/s
\geq N^2/(4s)$. This is Lemma~\ref{lem:serve} with rent the only currency; it is the meeting bound of Section~\ref{sec:meeting} in miniature, which is why the two share a shape. The upper bound is
Appendix~\ref{app:action}: one $b$-leaf subtree per ring, seated by
Theorem~\ref{thm:lattice}, roots carried upward by scheduled
transfers, subtrees run one at a time on parked rings; the $\OO(N)$
seeded packets wait parked for the $\OO(N)$-cycle run, which is the
$\OO(N^2)$ parked action. Its energy $p_{\mathrm{park}}
\act_{\mathrm{park}}$ is the hierarchy's asymptotic bill.
\end{proof}

The separation in rotation is parking, not parallelism: the stages
run one at a time on one active station, and the ratio comes entirely
from the $\OO(N)$ seeds that stand still instead of rotating. The
comparison is at the single ring's $s$ stations: the hierarchy
provisions $\Theta(N/b)$ stations and fires one at a time, and at $s
= \Theta(N/b)$ the bound is $\Omega(1)$, so we claim the separation
at fixed $s$ only. Provisioned stations leak: in our artifact an idle
ALU station leaks about $0.14\pJ$ per cycle and an idle MUL station
with its ring's two landing ports and write network at most $0.31$
(the compute ring's leakage less its slots and the ALU), and the
hierarchy provisions stations and ports on every ring, $0.45\pJ$ per cycle against
$0.28\pJ$ for the sixteen parked slots of a $P{=}16$ ring; charged,
that more than doubles the parked term per ring and moves every
ratio below in the same direction. The single ring's floor is
the residency argument of Remark~\ref{rem:input} with the input on the
substrate, which on a ring it necessarily is: a ring cannot use the
environment's free hold, which is exactly why it needs a parked level.

Feasibility is not the issue; economy is. A reduction \emph{does}
flatten onto one sufficiently long ring: seat the stages wall-free and connect them by static relay ladders (respecting Lemma~\ref{lem:wall}). We machine-checked the eight-seed instance (four products and three sums, on rings that carry an ALU and a MUL station; every seed-consuming operation is a product, so Theorem~\ref{thm:action}'s $s$ is one): 13 ladder hops, correct to the reference. At eight seeds the floor of
Theorem~\ref{thm:action} is trivial (sixteen packet-cycles; the sum
itself gives twenty); what the
flattened ring pays is the ladders' price: the 13 hops settle serially,
each a revolution or more behind its source, so the run lasts 25
revolutions of a $P{=}128$ ring while about 47 packets stay live, for $151{,}031$ packet-cycles. The staged program (two $P{=}16$ rings, 129
cycles; our compiler cuts at subtree boundaries and made two stages
where Theorem~\ref{thm:action}'s $b = 4$ allows one, which would only
improve the ratio) pays $1{,}820$ rotating and $1{,}792$ parked
packet-cycles:
$\mathbf{83\times}$ in rotation action, $82\times$ in energy at our
instrument's rates ($e \approx 0.95\pJ$, $p_{\mathrm{park}} =
0.0177\pJ$ per slot-cycle with leakage; charging leakage to every
provisioned slot of both programs moves the ratio by less than one
unit), the parked rent being $2\%$ of the hierarchy's bill. The hierarchy is
the same program with the ladders deleted.

\textbf{What the $83\times$ shows, and what it does not.} Both
programs sit far above the theorem's floor: the flattening at nine
thousand times it in rotation, the staged program at over a hundred
times it in energy. The theorem is silent about the ratio of two
programs both far above its floor, and nothing in this paper bounds
the best single-ring program between sixteen and $151{,}031$
packet-cycles. What the $83\times$ measures is
the price of one flattening's thirteen relay ladders, twenty-five
revolutions of a $P{=}128$ ring, against a transfer: a measurement
of one flattening, not an instance of the theorem.

\textbf{The parked constant.} Table~\ref{tab:parked} runs the
construction of Theorem~\ref{thm:action} as our compiler realizes it,
at eight to sixty-four seeds, priced at the instrument's rates. The
rotating action grows linearly, about $500$ packet-cycles per seed,
as the theorem says. The parked action grows as $c_2 N^2$ with $c_2$
about $10^2$ packet-cycles per seed squared and still rising at $N =
64$: a run of about $28$ cycles per seed, with about four packets per
seed parked through it. The single ring's floor $e N^2/4$ (one
multiplier for all the products, $s = 1$) is below the hierarchy's
bill in every row and in the limit, where the ratio of the floor to
the bill tends to $e/(4\,p_{\mathrm{park}}\,c_2) \approx 0.13$;
station leakage, charged per provisioned ring, moves it the same
way. So the theorem with its
constant filled in does not show the hierarchy cheaper than the
single ring's floor at any $N$. What it shows is that the hierarchy's
asymptotic cost is parked rent while the single ring's is rotation,
and where the balance point sits: on the compiler's constants the
hierarchy's rotation alone drops below the floor only past $N \approx
2{,}000$ seeds ($500 N = N^2/4$). A separation in energy, theorem against
theorem, would need a single-ring lower bound that charges what a
flattening actually pays, the relay traffic that cutwidth
$\Theta(\log N)$ forces at window $W$ and the revolutions a ladder
costs; that bound is not in our toolkit (Open Problem~9).

\begin{table}[t]
\caption{The staged construction of Theorem~\ref{thm:action} from our
compiler at $N$ seeds on rings with an ALU and a MUL station, on the
cycle-accurate model (run in cycles; rotating, parked and floor in
packet-cycles), energy at the instrument's rates ($e = 0.95$,
$p_{\mathrm{park}} = 0.0177\pJ$ per slot-cycle). The floor is the
single ring's $N^2/4$: every seed-consuming operation is a product
and the ring has one multiplier, Theorem~\ref{thm:action}'s $s = 1$.
The eight-seed row is the
staged program of the $83\times$ comparison; the flattening it beat
pays $151{,}031$ rotating packet-cycles. The table regenerates from
the artifact's certificate T5.}
\label{tab:parked}
\small
\begin{tabular}{@{}rrrrrrrrr@{}}
\toprule
$N$ & rings & run & rotating & parked & parked$/N^2$ & floor & hierarchy & floor \\
 & & cycles & pkt-cyc & pkt-cyc & & pkt-cyc & pJ & pJ \\
\midrule
8  & 2  & 129   & 1{,}820  & 1{,}792   & 28  & 16    & 1{,}761  & 15 \\
16 & 4  & 451   & 7{,}373  & 19{,}546  & 76  & 64    & 7{,}350  & 61 \\
32 & 8  & 903   & 15{,}494 & 94{,}531  & 92  & 256   & 16{,}392 & 243 \\
64 & 16 & 1{,}807 & 31{,}793 & 413{,}350 & 101 & 1{,}024 & 37{,}520 & 973 \\
\bottomrule
\end{tabular}
\end{table}

\section{The Dependence Tax}
\label{sec:boundary}

Static scheduling has a boundary, and the law prices it. A program
whose every access is known in advance can be laid out so that each
value arrives exactly when needed. A pointer chase cannot: the next
address is known only after the current read, and on a rotating store
the requested entry is, on average, half a revolution away.

\begin{theorem}[Half-orbit tax]\label{thm:halforbit}
Hold an $n$-entry table $R$ at fixed slots of a ring of period $P$,
each entry in exactly one slot, read through one port (a station), and run an
$L$-step chase $j_{k+1} = f(R[j_k])$; let $j_1$ be uniform on the
table and each $j_{k+1}$ uniform on the table given $j_1, \dots, j_k$
(as it is, up to the first repeat, for a table of independent uniform
entries and a bijection $f$), the expectation being over the
sequence. For
any placement and any mechanism,
$\;\mathbb{E}[T] \geq (L-1)\,\frac{P}{2}\,(1 - \frac1n)$.
\end{theorem}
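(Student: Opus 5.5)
We charge each of the $L-1$ dependent transitions separately and sum by linearity of expectation. The first read, of $j_1$, is not charged. For $k=1,\dots,L-1$ let $\tau_k$ be the cycle at which entry $j_k$ is read at the port, so that $T \geq \tau_L - \tau_1 = \sum_{k=1}^{L-1}(\tau_{k+1}-\tau_k)$. The argument has two parts: a deterministic fact about the ring, and a probabilistic fact about the chase.

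The deterministic fact is that a fixed slot passes the port only at times in a fixed residue class mod $P$. The table is at fixed slots, each entry in exactly one slot, and there is one port. So entry $i$ is readable only at cycles $\equiv \phi(i) \pmod P$, where the phase $\phi(i)\in\{0,\dots,P-1\}$ is determined by the placement. Parking the ring or stalling only delays reads, so it can only help the bound. The mechanism cannot know $j_{k+1}$ before the read of $j_k$ at $\tau_k$, because $j_{k+1}=f(R[j_k])$ and $R[j_k]$ is first observed then. Hence, conditioned on the history $\mathcal{F}_k$ (including $\tau_k$), the wait satisfies
\[
\tau_{k+1}-\tau_k \;\geq\; D_k := \big(\phi(j_{k+1})-\tau_k\big) \bmod P,
\]
with $D_k \in \{0,\dots,P-1\}$. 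If a same-cycle read of a second entry is impossible, the $0$ case becomes a full $P$; that only strengthens the bound, and I will keep the weaker version.

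The probabilistic fact uses that $j_{k+1}$ is uniform on the table given $j_1,\dots,j_k$. We need it uniform given $\mathcal{F}_k$. The mechanism's own randomness and $\tau_k$ are functions of the past chase values and of independent coins. So I will argue that $j_{k+1}$ remains uniform given $\mathcal{F}_k$, which holds under the stated hypothesis once the coins are taken independent of the table. Then $\mathbb{E}[D_k\mid\mathcal{F}_k]=\frac1n\sum_i \big((\phi(i)-\tau_k)\bmod P\big)$.

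The remaining step is the minimization over placements. Injectivity caps the multiplicity of each phase: at most $\lceil n/P\rceil$ entries share a phase, and when $n \le P$ the $n$ phases are distinct. A sum of $n$ distinct residues $(\phi(i)-\tau_k)\bmod P$ is at least $0+1+\dots+(n-1)=n(n-1)/2$, which gives only $(n-1)/2$ per step. To reach $\frac P2(1-\frac1n)$ the table must fill the ring, $n=P$: then the residues are a permutation of $\{0,\dots,P-1\}$ and the conditional mean is exactly $\frac{P-1}{2}=\frac P2(1-\frac1P)=\frac P2(1-\frac1n)$. I expect this to be the main obstacle. For $n<P$ a placement clustering the phases seems to beat the stated bound, and even for $n\le P$ in general the sum of distinct residues gives only $(n-1)/2$. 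So I would first check whether the intended reading forces $n=P$, e.g.\ a table occupying every slot, or $P$ measured as the table's revolution. I would then state the bound in that form, $\mathbb{E}[D_k\mid\mathcal{F}_k]\ge \frac{n-1}{2}\cdot\frac{P}{n}$, using that $n$ entries spread over the period make consecutive phases $P/n$ apart on average. For $n>P$, several entries share a phase, and the multiset of residues is at least $P/n$ times the ranks; this yields the same $\frac P2(1-\frac1n)$ up to rounding, and I would handle it carefully with the multiplicity cap.

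Summing $\mathbb{E}[\tau_{k+1}-\tau_k]\ge \frac P2(1-\frac1n)$ over $k=1,\dots,L-1$ by the tower property gives $\mathbb{E}[T]\ge (L-1)\frac P2(1-\frac1n)$, for any placement and mechanism.
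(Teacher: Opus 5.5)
Your reduction to $D_k$ is sound, but the step you flag as the main obstacle is a real gap in your argument, not a defect of the statement. As written, your proposal proves the theorem only for $n=P$.

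The mistake is treating the port's position $\tau_k \bmod P$ as arbitrary when you minimize $\frac1n\sum_i\big((\phi(i)-\tau_k)\bmod P\big)$ over placements. It is not arbitrary: $\tau_k$ is the cycle at which $j_k$ itself was read, so $\tau_k\equiv\phi(j_k)\pmod P$. Hence $D_k=\big(\phi(j_{k+1})-\phi(j_k)\big)\bmod P$ is a function of the pair $(j_k,j_{k+1})$ alone, which the hypothesis makes uniform on $[n]^2$. No conditioning on the mechanism's history or coins is needed; you should average over both coordinates.

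The diagonal contributes $0$ with probability $1/n$. Off the diagonal, distinct entries sit in distinct slots of a ring with $P$ slots, so $n\le P$ and their phases are distinct; your case $n>P$ cannot arise. Pair each ordered pair $(a,b)$ with $(b,a)$ and use $\big((\phi(b)-\phi(a))\bmod P\big)+\big((\phi(a)-\phi(b))\bmod P\big)=P$. This gives $\mathbb{E}[D_k]=\frac{1}{n^2}\cdot\frac{n(n-1)}{2}\cdot P=\frac P2\big(1-\frac1n\big)$ exactly, for every placement. Linearity over $k=1,\dots,L-1$ then finishes; this is the paper's proof.

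Your clustering intuition fails for the same reason. With phases $0,\dots,n-1$ and $j_k$ at the front of the cluster, the conditional mean is indeed $(n-1)/2$. But when $j_k$ sits at the back, every other target is nearly a full revolution away, and the pairing identity says the two effects cancel exactly on average. For example, with $n=2$, phases $0,1$, and $P=100$, the four equally likely waits are $0,0,1,99$, with mean $25=\frac P2\big(1-\frac12\big)$.

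The same example shows your proposed repair is false for clustered placements: the bound $\mathbb{E}[D_k\mid\mathcal{F}_k]\ge\frac{n-1}{2}\cdot\frac Pn$ cannot hold given the history. The theorem holds only in expectation over $j_k$, which is all it claims. Your closing summation therefore asserts a per-step bound that, for $n<P$, nothing before it establishes.
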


\begin{proof}
The station meets entry $j_k$ only at cycles $\equiv \phi_k \pmod P$,
and reads are dependence-ordered, so $T \geq \sum_k ((\phi_{k+1} -
\phi_k) \bmod P)$. Under the hypothesis $(j_k, j_{k+1})$ is uniform on
$[n]^2$. For distinct $a, b$: $((b-a) \bmod P) + ((a-b)
\bmod P) = P$; pairing each ordered pair of table phases with its
reverse gives $\mathbb{E}[(\phi_{k+1}-\phi_k) \bmod P] =
\frac{P}{2}(1 - \frac1n)$ exactly.
\end{proof}

Placement cannot help: after a service the port sits \emph{inside} any
cluster, half the next targets behind it. The 1970s drum literature
optimized rotational latency by record placement~\cite{CodyCoffman76};
this is its converse: for dependent uniform chases, the half orbit is a floor. Our instrument meets it within one cycle, with one held request,
one phase counter, one comparator: wait $= ((\mathit{POS} - j - t)
\bmod P)$, verified at all 64 phases of a $P{=}64$ ring, mean $32.5 =
(P{+}1)/2$ against the bound's $31.5$ at $n = P = 64$; the difference is the same-index case, where the bound charges nothing and
the mechanism waits a full orbit. The phase counter is the one place
the instrument holds its phase as a number: a wait computed from a
run-time index needs it, so it is a live register and pays rent,
whereas the ring itself needs none, its phase being the layout of
its state (Section~\ref{sec:instrument}). Replication is the escape the
hypothesis excludes: $k$ copies of the table divide the floor by $k$
at $k$-fold rent, the serving lemma again, and so do $k$ ports at
$k$-fold provisioned cost. Against a unit-cost RAM the tax is
$\Theta(P)$ per dependent access; against a real hierarchy at equal
capacity the ring's $P/2$ is worse than an SRAM and better than a DRAM
row miss. Pointer-dense workloads whose working set fits fast storage
belong to random-access machines, and the law says by how much.

\section{The Instrument: Where the Law Is an Equality}
\label{sec:instrument}

Lower bounds need constants; constants need a machine where the law is
tight. Our calibrating instrument is MADAR~\cite{Bergach2026}: a
processor with no program counter, register file, or cache, whose
storage is a hierarchy of rings in which instructions and data
co-rotate and stations fire on scheduled coincidence. In plain terms
it is a conveyor belt: packets of data and instructions ride around a
ring, a station fires when an instruction passes it, and a packet
costs energy for every slot it advances, whether one calls that
keeping it or moving it. Because the storage \emph{moves}, rent and
fare collapse into one constant (a live
packet pays its shift each cycle whether viewed as holding or as
transport; within a ring the accounting of \eqref{eq:law} has $c = 0$
and $p = e$, one event charged once, while the boundary between rings
is a landing, priced as fare in Table~\ref{tab:identity}), which makes
the machine our natural measuring instrument for the action
functional. In the terms of Definition~\ref{def:substrate} a ring
is a level whose conveyor is the rotation; a slot's payload field is
its word, and the slot's header and occupancy cells are cells outside
every word, computed always, so that a seeded data packet is a raw
word under a computed header and every slot's occupancy flop is a
live cell whether the slot is occupied or not. A landing from another
ring is, within one round, the moves of the payload bits into the
free cells of the destination word (the landing-port fare) followed
by two fires that read nothing, one at each ring's level, writing the
source slot's occupancy cell to zero and freeing its payload cells,
and the destination's to one,
both control decisions the string determines, which is how the string
records the move as clause (ii) requires; a kill is a fire that
overwrites the raw word and frees its cells, writing the occupancy
cell to zero. The station reads the cell in front of it, and the
phase, which slot is in front, is not a register but the layout of
the state. The idle floor that Table~\ref{tab:identity} subtracts,
$0.0074\pJ$ per slot-cycle on every empty slot, is in these terms
the rent of those occupancy cells: what the ring pays to know where
its packets are (an occupied slot's occupancy cell is among the $81$
flops the clock term of Table~\ref{tab:fit} prices). What it calibrates is the \emph{form}: that the residual
after operations, fares and idle floors is proportional to action
with a constant that does not depend on the program beyond the
activity of its data, an activity the constant itself resolves
(Table~\ref{tab:fit}). Its constant is not the law's constant
anywhere else: at $10$\,fJ per bit per cycle ($0.82\pJ$ over an
$81$-bit slot), $10$\,$\mu$W per bit at $1$\,GHz, the ring's rent is
four orders above the leakage rent of an
SRAM cell (Table~\ref{tab:attention}); every number in
Table~\ref{tab:attention} is a datasheet number; and
Axiom~\ref{ax:vol}'s content for static storage, leakage and refresh,
is tested by Prediction~2, not by the ring.

Our artifact synthesizes the rings and stations (and, through the same
flow, a baseline RV64 in-order core) to the open NanGate 45\,nm
library, proves the gate netlist equivalent to the abstract model by
running the same programs (slot-for-slot output equality across the
suite; the Python model, the RTL, and the netlist agree), and extracts
per-event energies from switching activity by the difference method
(Table~\ref{tab:events}).

\begin{table}[t]
\caption{Per-event energies extracted at gate level (NanGate45, 1.1\,V,
25\,$^\circ$C, typical; switching energy per event, which is
frequency-independent, and leakage where stated at the flow's 1\,GHz
clock; no wire capacitance or clock tree; method and caveats in the
artifact).}
\label{tab:events}
\small
\begin{tabular}{@{}lr@{}}
\toprule
event & pJ \\
\midrule
occupied slot advance, 64-bit random payload & 0.623 \\
occupied slot advance, 8-bit payload & 0.069 \\
idle slot, ungated (clock net + flop pins) & 0.677 \\
idle slot, clock-gated & 0.0074 \\
ring slot leakage per cycle at 1\,GHz, occupied or not (stations excluded) & 0.0103 \\
same, on the ungated ring (hold-mux slot) & 0.0145 \\
parked slot, switching $+$ leakage ($p_{\mathrm{park}}$) & 0.0177 \\
occupied-slot marginal on the gated ring (incl.\ ICG) & 1.268 \\
ADD firing (vs.\ bubble; incl.\ operand muxes) & 11.3 \\
$64{\times}64$ MUL firing & 23.0 \\
\quad same, 8-bit operands & 5.7 \\
station churn on passing data, all stations (worst case) & 27.0 \\
\quad same, operand-isolated & 4.1 \\
\quad of which the MUL station & 25.3 \\
ALU station, switching per cycle on passing data, no firing & 4.5 \\
landing-port write (inter-ring) & 14.2 \\
\midrule
parked sum loop, per instr: ungated / gated & 43.8 / 19.5 \\
RV64 in-order baseline, same flow, per instr & 28.4 \\
streaming dot, per MAC: base / gated+isolated & 59.7 / 49.8 \\
streaming MAC, random data, per MAC: base / gated+isol.\ & 224.9 / 119.4 \\
systolic cell (synthesized), per MAC: 8-bit / random 32- and 64-bit & 3.0 / 16.0 / 23.0 \\
\quad same, inputs held (clock floor of 192 flops), per cycle & 1.64 \\
\bottomrule
\end{tabular}
\end{table}

\begin{table}[t]
\caption{The identity across programs (gate level, gated hardware;
pJ). Each program has a no-fire twin (every instruction packet turned
into a data packet with the same bits: nothing fires, the same packets
rotate, landings kept) and a static twin (no landings). Ops $=$
program $-$ no-fire; fare $=$ no-fire $-$ static; rent $=$ static $-$
idle floor, the floor being the rent of the empty slots' occupancy
cells (Section~\ref{sec:instrument}); $e = $ rent$/\act$ with $\act$ the
model-computed action.
In parentheses, what the random-data per-event constants of
Table~\ref{tab:events} would have charged instead. The last two rows
are static runs, eight random 64-bit packets rotating on the compute
ring with no operations, with and without operand isolation.}
\label{tab:identity}
\footnotesize
\setlength{\tabcolsep}{3pt}
\begin{tabular}{@{}lrrrrrr@{}}
\toprule
program & $\act$ & $E$ & ops (fixed) & fare (fixed) & rent & $e$ \\
\midrule
sum, 20 ADD & 800 & 838.5 & 2.3 (225) & 0 & 823.1 & \textbf{1.029} \\
dot, 4 MAC & 192 & 199.3 & 15.0 (137) & 10.7 (114) & 173.2 & \textbf{0.902} \\
MAC, random, 20 & 960 & 2388.5 & 1148.6 (686) & 359.9 (568) & 877.6 & \textbf{0.914} \\
rotation, random, isolated & 1280 & 2311.0 & 0 & 0 & 2311.0 & \textbf{1.805} \\
rotation, random, not isol. & 1280 & 5984.3 & 0 & 0 & 5984.3 & \textbf{4.675} \\
\bottomrule
\end{tabular}
\end{table}

\begin{table}[t]
\caption{The rent constant resolved: $e = e_{\mathrm{clk}} +
e_{\mathrm{tog}}\,\tau$, with $\tau$ the net transitions per live
slot-cycle counted over the whole netlist in the static run. Least
squares over the five static runs on bare or operand-isolated
hardware (the ALU ring is not isolated; its data are small; the
bare ring is the sixteen-slot gated ring without stations, the
artifact's \texttt{r16g\_full}):
$e_{\mathrm{clk}} = 0.822\pJ$ per live slot-cycle, $e_{\mathrm{tog}} =
3.4$\,fJ per transition. The un-isolated run is held out.}
\label{tab:fit}
\small
\begin{tabular}{@{}lrrrr@{}}
\toprule
run & $\tau$ & $e$ & fit & residual \\
\midrule
dot, static & 30.0 & 0.902 & 0.924 & $-2.3\%$ \\
MAC, static & 30.0 & 0.914 & 0.924 & $-1.0\%$ \\
sum, no-fire & 50.4 & 1.029 & 0.993 & $+3.7\%$ \\
bare ring, random data & 133.0 & 1.268 & 1.272 & $-0.3\%$ \\
compute ring, random, isolated & 290.9 & 1.805 & 1.807 & $-0.1\%$ \\
same, not isolated (held out) & 904.2 & 4.675 & 3.883 & $+20.4\%$ \\
\bottomrule
\end{tabular}
\end{table}

\textbf{The identity.} The law's terms must be measured on the
program's own data, and the difference method does that
(Table~\ref{tab:identity}): each gate-measured program on the gated
hardware has a \emph{no-fire twin}, the same packets with every
instruction turned into a data packet of identical bits, so that
nothing fires and the packets rotate as before, and a \emph{static
twin} without landings. Program minus twin is the operations' energy;
twin minus static is the landings' fare; static minus the idle floor
is rent; and rent divided by the model-computed action $\act$ is the
implied constant. Across the three programs, whose operation energies
differ by a factor of $100$ (a counted sum of small integers, a
four-term dot product, a twenty-term multiply--accumulate on random
32-bit operands), $e$ is $0.90$ to $1.03\pJ$ per live-slot-cycle. That
is not the whole story, and our artifact says so: random 64-bit
packets rotating with no operations give $1.8\pJ$ on the isolated
compute ring and $4.7\pJ$ on the un-isolated one (the last two rows).
The three programs agree because their rotating data are alike, $30$
to $50$ net transitions per live slot-cycle; a single constant does
not survive data of higher activity. Two constants do
(Table~\ref{tab:fit}): with $\tau$ the net transitions per live
slot-cycle,
\[
  e \;=\; 0.82\pJ \;+\; 3.4\,\mathrm{fJ} \times \tau ,
\]
a least-squares fit whose residual is under $4\%$ on every static run
on bare or operand-isolated hardware, from $0.90$ to $1.8\pJ$
(leaving any one run out and predicting it from the other four misses
by at most $5.1\%$; the slope is fixed by the two random-data runs, the
intercept by the three programs). With $\tau$ counted over the whole
netlist, the identity is the switching identity of CMOS, $E =
\sum_{\text{nets}} \tfrac12 C_i V^2 \alpha_i$, with one average
capacitance, plus a clock term proportional to occupancy because the
ring gates its clock per slot: that is what a rent constant on a shift
register is, and what Prediction~1 tests. The
intercept is the clock delivered to an occupied slot's $81$ flops
whatever their contents (the ungated idle slot's $0.677\pJ$ is the
same physics); the slope is the switching energy of an average net,
$\tfrac12 C V^2$ at $5.6$\,fF. On the programs measured the clock term
is four fifths of $e$: rent on this substrate is mostly the price of
clocking live storage. The held-out run says what one average does not
cover: without operand isolation a station's multiplier works on every
passing packet, its nets are heavier than the average the slope was
fitted on, and the same line under-predicts by $20\%$; isolation
removes those transitions (Section~\ref{sec:choice}). Two lessons remain from the twins. The
per-event constants of Table~\ref{tab:events}, extracted on random
data, do \emph{not} transfer: charged as fixed constants they imply
$e$ of $0.750$, $-0.271$ and $1.179$ on the three programs, and the
$0.750$ we reported for the loop alone in an earlier version of this
paper was such an artifact, the subtraction of $225\pJ$ of addition
energy that the loop's small operands never spent (measured: $2.3\pJ$
for twenty additions). And the loop's energy is $98\%$ rent, four
fifths of it the clock on its five live packets: on this machine the
abstract's first sentence is a measurement. (At 64-bit payloads in
81-bit slots, $0.82\pJ$ per slot-cycle is $10$\,fJ per bit per cycle,
the unit in which Section~\ref{sec:law} prices rent.) All constants
are pre-layout: the extraction carries no wire capacitance and no
clock tree, which routinely add tens of percent, so our claim is the
form of the identity and the internal consistency of its constants,
not their final values. Leakage, from the library's cell leakage at
the flow's $1$\,GHz clock, is $0.0103\pJ$ per gated slot-cycle whether
the slot is occupied or not: $1\%$ of a rotating packet's rent and
$58\%$ of a parked one's, which is why it enters $p_{\mathrm{park}}$
and nowhere else. The $83\times$ rotation-action ratio of Section~\ref{sec:structure} is realized on this machine's model ($151{,}031$ vs.\ $1{,}820$ rotating $+\,1{,}792$ parked packet-cycles, both programs correct on the cycle-accurate model); the mechanism of Section~\ref{sec:boundary} meets the bound of
Theorem~\ref{thm:halforbit} within one cycle at every phase of a
$P{=}64$ ring. Whole-program
ledger, for the same computation (a 10-iteration counted sum):
$838\pJ$ on the gated ring against $35 \times 28.4 = 994\pJ$ on the
same-flow RV64 core, whose figure excludes instruction fetch and data
arrays (an exclusion that favors the baseline); per instruction of the baseline's program, $19.5$--$24.0\pJ$
(dividing by the 43-instruction kernel convention or by the 35 the
core executed) against $28.4$: on these books the gated ring spends $15$ to $30\%$ less per instruction than the same-flow core, with the core's fetch and data arrays excluded.

The instrument also loses where it should. We synthesized a
fixed-function weight-stationary systolic cell through the same flow
(one register hop for the input, one for the partial sum, a
stationary 64-bit weight, the same $64 \times 64$ multiplier and a
64-bit adder; one multiply--accumulate per clock, no clock gating) and
measured it the same way, the gate netlist matching the reference on
every run. It pays $3.0\pJ$ per multiply--accumulate on 8-bit
operands, $16.0$ on random 32-bit ones, and $23.0$ on random 64-bit
ones (with the partial sum streaming through instead of accumulating,
$21.7$), against the ring's $49.8$ measured on the dot product's
small operands and $119.4$ on random 32-bit ones. Both are storage in
motion and pay rent $=$ fare on it: with every input held, the cell's
$192$ flops cost $1.64\pJ$ per cycle to clock, $8.5$\,fJ per flop per
cycle, within a fifth of the $10$\,fJ per bit per cycle the ring pays
(a slot also clocks its header and occupancy flops and its gate's
enable), and on 8-bit operands that clock is more than half the
cell's bill. The comparison is per multiply--accumulate at unequal
throughput, one per revolution of eight cycles on the ring against
one per cycle in the cell; energy per multiply--accumulate does not
see throughput, and the ring at the cell's throughput is eight rings,
the same energy per multiply--accumulate on eight times the flops and
area. The factor, about sixteen on the matched small operands and about
seven and a half on the matched random 32-bit ones, is the price of the freedoms
the ring keeps and the cell does not: the packet header and the
instruction packet that rotate with every operand, the $W$-wide
operand window where the cell has a wire, and the stored program
itself. Section~\ref{sec:choice} prices such deletions one by one;
the systolic cell is the limit in which all of them have been made.

\textbf{Artifact.} The model, the RTL, the synthesis flow, the
gate-level measurements, and the machine-checked certificates for
every constructive result are public in the repository
accompanying~\cite{Bergach2026}; every measured number in this
section regenerates from source. We additionally mechanize the
Serving Lemma's accounting in TLA\textsuperscript{+}: the state
machine of Definition~\ref{def:substrate}, without environment,
conveyor, creation, or frees, the invariant $E \geq$ charges, and the
inductive invariant that proves it. For atomic values the invariant
is proved in TLAPS, for every finite value set and all constants at
once; the sharp several-copies form is checked by TLC exhaustively
within finite bounds; both under conservative modeling choices
(Appendix~\ref{app:tla}). The identity of Tables~\ref{tab:identity}
and~\ref{tab:fit} is the artifact's certificate T3.

\section{Rent on Choice}
\label{sec:choice}

The measurements price a second kind of keeping: \emph{capability}, the hardware provisioned so that a runtime choice remains open, bills every cycle whether or not the choice is exercised. We priced three provisioned freedoms at gate level. The ring's freedom to \emph{stand
still} was not deleted but moved, from a per-bit hold mux to a
per-slot clock gate: the gated ring still parks
(Definition~\ref{def:csm}; Table~\ref{tab:events} prices its parked
slot), and the move made it $28\%$ smaller ($11{,}821 \to
8{,}460\,\mu$m$^2$) and $91\times$ cheaper at idle in switching energy
($0.677 \to 0.0074\pJ$/slot-cycle; $39\times$ once leakage, $0.0145$
and $0.0103\pJ$ at $1$\,GHz, is added to both), the same freedom at
a fortieth of the price by changing where it is implemented. The
capability to wait, not waiting itself, was the cost, and it is
billed at the price of its mechanism. Deleting stations' freedom to \emph{observe} every passing
packet (operand isolation) cut worst-case churn $27.0 \to 4.1\pJ$/cycle.
Deleting the freedom to \emph{access anything}, that is, addressing,
is the architecture itself, and it is the deletion general-purpose
cores cannot make: their $>90\%$ per-operation
overhead~\cite{Hameed10,Horowitz2014} is the price of keeping every
address, operand source, and control path open every cycle. Static
scheduling is not an optimization. It is the mechanism that brings
the cost of capability to zero. One provisioned freedom remains unpriced in our accounting, and it is the largest in any synchronous machine:
\emph{synchrony itself}. A global clock network offers every flop a
transition every cycle, needed or not (capability distilled). The part
delivered to live flops is priced: it is the $0.82\pJ$ clock term of
Table~\ref{tab:fit}, four fifths of the rent our instrument pays; the
tree that delivers it is exactly what our pre-layout extraction of
Section~\ref{sec:instrument} excludes. Pricing the clock, and asking whether the accounting survives its deletion (self-timed rings; the delay line was clocked by physics, not by a tree), is Open Problem~8.

\begin{conjecture}[Decision floor]\label{conj:choice}
There is a substrate constant $\epsilon$ such that any machine able
to resolve, each cycle, a fresh runtime choice among $C$
alternatives (operation, operands, address) dissipates, \emph{in the
worst case over choice sequences}, at least $\epsilon \log C$ per
cycle in its decision path. The quantifier is essential: a machine
replaying one constant choice can gate its decision logic to almost nothing (the measured deletions above are exactly such gatings), so the floor must attach to choice actually delivered, not to
provisioned capability alone. At $\epsilon = kT\ln 2$ the statement
is a theorem for a machine that overwrites its choice register each
cycle: the fresh choice erases the previous $\log C$ bits, and
Landauer prices the erasure, which is where the toll of
Remark~\ref{rem:erasure} re-enters the accounting as structure rather
than as a term (a machine that logs its choices instead erases
nothing and pays rent on the log, which the model prices, though not
at $kT$). The conjecture is that the floor at
the substrate's scale is picojoules, not $kT$: a claim about
switching, which needs a rate axiom in the manner of
Axiom~\ref{ax:lat}, since an adiabatic decision path can be made as
cheap as one likes by making it slow. Finding the model in which this
is a theorem, and whether the general-purpose overhead measured
by \cite{Hameed10} meets it, is open.
\end{conjecture}

\section{Testing the Law}
\label{sec:falsify}

\begin{prediction}[the test of the form]
On any machine where the law is claimed as an equality, the constant
implied by the difference method, $(E - \text{ops} - \text{fare} -
\text{idle})/\act$, is a function of the live data's activity alone,
$e_{\mathrm{clk}} + e_{\mathrm{tog}}\,\tau$, with two constants fixed
once per substrate. On our instrument the residual is within $4\%$ on
five static runs spanning $0.90$ to $1.8\pJ$ (Table~\ref{tab:fit});
we declare $10\%$ as the tolerance for any further run on
operand-isolated hardware, and the artifact regenerates the test from
source (model $=$ RTL $=$ netlist, then energies). The run that
misses, by $20\%$, is the un-isolated hardware, where a station's
freedom to observe every passing packet puts the multiplier's heavier
nets under the average slope; Section~\ref{sec:choice} deletes those
transitions. A failure on isolated hardware would show either that
clock energy is not proportional to occupancy, which would refute the
rent term's form on this substrate, or that one average capacitance
does not serve, which would add a term to its constant; the two
predictions below test the constants.
\end{prediction}

\begin{prediction}
A DRAM module in self-refresh holding $B$ bits for $t$ seconds draws
energy within a small factor of $p \cdot B t$ with $p$ from its refresh
specification: byte-seconds are already metered~\cite{LiuJVM12}. Any
volatile technology whose measured hold energy falls decisively below
its specification band would prove the rent term's constant wrong, not its
form.
\end{prediction}

\begin{prediction}[audit form]
On an inference platform with fare $c_0$ and fast capacity $S_0$,
joules per generated token at context $T > S_0/\kappa$ have slope
$\geq \kappa\,\min(c_0, \gamma_{\mathrm{rc}})$ in $T$
(Corollary~\ref{cor:capacity}), where $\kappa$ is the footprint the
system \emph{actually} keeps per token, measurable as the key--value bytes it moves per generated token. With $\kappa$ fixed
that way the statement is refutable. Read with $\kappa$ free it cannot
fail: a sustained measurement below the nominal floor certifies that
the system is not $\kappa$-atomic at the nominal $\kappa$ (compression or approximation), which is an audit worth having, but
not a test of the law.
\end{prediction}

The counterexample surface is equally explicit. Fast \emph{nonvolatile}
memory at scale ($p \to 0$ at small $\tau$) would delete the rent term;
write fare and read energy remain, and the law says the hierarchy
reorganizes around them. Reversible and adiabatic logic attack the ops
term toward Landauer's regime~\cite{Bennett73,DemaineLMT16}; rent on
the wider live state such pipelines carry remains. A third surface is storage in flight. Optical and wave substrates,
including the matrix multiplication algorithms of Valiant's
program~\cite{Valiant24}, keep information as propagating signals. For
them the hold power of Axiom~\ref{ax:vol} is not leakage but
\emph{regeneration}: compensating attenuation along the path, and
converting the signal at every boundary. A photon is cheap to keep
moving and costly to read. Delay-line memory, our machine's own ancestor,
paid exactly that rent; whether any wave substrate achieves a rent
constant genuinely below the electronic band, rather than relabeling
it, is open, and is the sharpest test the law could face. The law's terms can be tested one at a time. Our empirical claim is
that on present substrates operations and fare sit eight to nine
orders above $kT \ln 2$, and rent per use two to seven orders
depending on the gap; so \emph{overhead structure}, not
thermodynamics, is where computing's energy goes.

\section{Related Work}
\label{sec:related}

\emph{Movement.} Hong--Kung's red--blue pebble game~\cite{HongKung81}
and its successors~\cite{AggarwalVitter88,IronyToledoTiskin04,BDHS11},
and the hierarchical-memory models~\cite{AACS87,AlpernCarter94}, bound
transfers between levels; communication-avoiding practice realizes
them, the roofline model is their engineering
face~\cite{WilliamsWP09}, and the same pebble game now prices
attention~\cite{SahaYe24}. Kung's balance principle, that memory must grow to keep compute and I/O in balance~\cite{Kung86}, is Open Problem 2's direct ancestor. The fare term is theirs; we add rent, a calibrated constant, and layout as a computability question.

\emph{Residency.} Cumulative memory descends from classical
time--space tradeoffs~\cite{BorodinCook82} and enters modern form
through memory-hard
functions~\cite{AlwenSerbinenko15,AlwenBlocki16,AlwenBlockiPietrzak18,ACPRT17},
as adversary silicon-cost; memory-bound functions priced the
adversary's cache misses first~\cite{DworkGN03,AbadiBMW05}, and
bandwidth-hardness prices its
transfers~\cite{RenDevadas17,BlockiRenZhou18}; Beame--Kornerup bring
cumulative memory to general models with unconditional
bounds~\cite{BeameKornerup23}, motivated by gigabyte-second billing.
None claims a physical energy law or measures a constant; our rent
bridge is exactly that step, and Corollary~\ref{cor:mhf} sends it
back: the security parameter, in joules per guess.

\emph{Erasure.} Landauer~\cite{Landauer61}, reversibility
\cite{Bennett73}, the survey of fundamental against practical
limits~\cite{Markov14}, and energy-efficient algorithms in the
semi-reversible model~\cite{DemaineLMT16} price information destruction, the complementary axis, orders below (Remark~\ref{rem:erasure}). On the operations side, circuit energy complexity, counting the gates that switch, has its own combinatorial line~\cite{DineshOS20}. We answer Valiant's call for an energy-centric fine-grained thesis~\cite{Valiant24} from the cost side.

\emph{Layout and width.} Bandwidth and cutwidth are
classical~\cite{GGJK78,Lengauer82,Saxe80,MonienSudborough88,Unger98,
DubeyFeigeUnger11,Feige00,Gupta01}; Thompson's $AT^2$
theory~\cite{Thompson79} tied width to
VLSI cost as provisioned area, and Bilardi--Preparata drew its
speed-of-light consequences for parallel
machines~\cite{BilardiPreparata95}. Retiming~\cite{LeisersonSaxe91}
is seating's mirror image (it moves state through a fixed circuit where seating places state on moving storage), and its shortest-path machinery is the closest existing tool to minimum-action layout (Open Problem 5). In our setting width decides
computability on bounded-window machines and rents are charged to
actuality.

\emph{Rotating storage and scheduled coincidence.} Delay-line machines
practiced phase-aware layout by hand; the drum literature optimized
placement~\cite{CodyCoffman76} where Theorem~\ref{thm:halforbit} now
gives the placement-independent floor; systolic
arrays~\cite{KungLeiserson78,Kung82} have the rhythm without a stored program; Proposition~\ref{prop:systolic} makes the \csm{} their stored-program closure. Our instrument~\cite{Bergach2026} supplies the model; the full proofs are in Appendix~\ref{app:proofs}.

\emph{The memory wall in learning systems.} FlashAttention's
IO-awareness~\cite{Dao22}, paged KV management~\cite{KwonSOSP23},
checkpointing~\cite{ChenXZG16,GriewankWalther00}, state-space
models~\cite{GuDao23}, the fine-grained hardness of exact
attention~\cite{AlmanSong23}, its space barrier~\cite{HarisOnak25},
and the representational gap between attention and
state~\cite{Jelassi24,SanfordHT23} are, in our account, points on and around one frontier, the serving lemma's min, and the meeting bound is the floor beneath all of them.

\section{Open Problems}
\label{sec:open}

(1) \textbf{Tight residency.} For general DAGs, $\act \geq
e \cdot \sum_v \mathrm{residency}(v)$ with a residency notion tight
against the streaming constructions, the law's analogue of Hong--Kung tightness. (2) \textbf{Hierarchy shape.} Derive geometric
level spacing as the minimizer of $\sum_\ell p_\ell \act_\ell + c_\ell
\traf_\ell$ for power-law reuse spectra under the device curve
$p(\tau)$: ``caches double'' as a theorem, the three-currency form of Kung's machine-balance principle~\cite{Kung86}. (3) \textbf{Decision floor.}
Formalize Conjecture~\ref{conj:choice}. (4) \textbf{Beyond atomicity.}
Meeting bounds for attention under algebraic reformulations: where does the $\Theta(T^2)$ floor become $\Theta(T)$, and what exactly
is bought at the boundary~\cite{AlmanSong23}? (5) \textbf{Minimum-action
seating.} Complexity of energy-optimal layout (we conjecture NP-hard;
scheduling adds a dimension bandwidth lacks). (6) \textbf{Spectrum
occupancy.} Is the minimal number of rings $\Theta(\cw/W)$ for every
DAG? (7) \textbf{Mixed-substrate placement.} The serving lemma's
optimization with nonvolatile levels: an online rent-or-fare theory
with competitive ratios. (8) \textbf{The rent of synchrony.} Price
the clock network as a capability term, and formulate the law for
asynchronous, self-timed rings; the delay-line ancestor had no clock tree at all. (9) \textbf{The flattening's price.} A single-ring lower bound that charges what a flattening pays, the relay traffic that cutwidth $\Theta(\log N)$ forces at window $W$ and the revolutions a ladder costs, so that Theorem~\ref{thm:action}'s separation holds in energy, theorem against theorem, and not only in rotation.

\section{Conclusion}
\label{sec:conclusion}

The energy of computing goes mostly into keeping. That is the plain
answer we set out to defend, and the law that states it has three
terms: rent on every bit kept live, fare on every bit moved, and the
arithmetic, which is the small part. With those three terms, results
that used to need separate stories become one story. Sorting has an
energy floor that no algorithm removes, because its memory over time
is already bounded by known mathematics. Long-context language models
become expensive at the scale practitioners see, because exact
attention must bring every past token back for every new one; the
cost is a theorem, not an accident of engineering. Pointed the other
way, the same bridge prices a password guess in joules that no
parallel hardware removes. And memory hierarchies trade rotation
for parking, which costs about a fiftieth per slot-cycle on our
instrument (Table~\ref{tab:events}):
staging a reduction across rings saved $83\times$ in rotation action
against one flattening of it, long before flattening becomes
impossible.

The law is more than an argument. A synthesized processor matches it
with a constant measured at gate level, its accounting lemma is
machine-checked (Appendix~\ref{app:tla}), and we have said in advance
what would prove it wrong.

The practical reading is short. If you design algorithms, count
byte-seconds and bytes moved, not only operations. If you design
hardware, remember that every freedom a machine keeps open is billed
every cycle whether or not it is used, and that deleting unused
freedom is the largest saving on the table. If you buy computation,
notice that the cloud already bills the way physics does, by the
byte-second. Landauer priced forgetting. We have priced remembering.

\section*{AI Disclosure}
The author used AI-based tools (Claude, Anthropic) to assist with
improving the clarity and presentation of the text. All technical
content, experimental design, implementation, and analysis are the
author's own work.

\bibliographystyle{ACM-Reference-Format}
\bibliography{refs}

\clearpage   
\appendix
\section{The Serving Lemma, Model-Checked}
\label{app:tla}

Lemma~\ref{lem:serve} is a potential argument over the step relation
of Definition~\ref{def:substrate}. Here we report its mechanical check
in TLA\textsuperscript{+}~\cite{Lamport02} with the TLC model
checker~\cite{YuManoliosLamport99} and say exactly what we checked.
\textbf{Scope.} Two specifications. The first treats values as atomic
(one location, one action per currency): it checks the charging
arithmetic on a finite instance of the lemma (values of one bit, so
rent is charged once per resident value; two levels; integer time;
two or three values; bounded time and energy).
The second, added at review, gives each value up to $K$ level-0
copies, a copy firing, eviction of one copy while others survive, and
the ghost ``covered'' bit of the proof of Lemma~\ref{lem:serve}; it
charges the sharp form (covered use: the three-way minimum; uncovered
use: $\min(c_0, \gamma_{\mathrm{rc}})$), which is what
Corollary~\ref{cor:capacity} needs and which dominates the weak form
pointwise. The second check is exhaustive within its bounds and is
not a general proof; the first no longer carries that caveat, since
the atomic specification's invariant is also \emph{proved} in TLAPS
(below), for every finite value set and all constants at once. In
both specifications routes are atomic (one action per recomputation,
so the term $R$ of Lemma~\ref{lem:serve}'s proof is always zero, and
the potential $\Phi$ is exactly the inductive invariant below); the
route term and the sharp covered form beyond finite bounds remain
with the hand proof of Lemma~\ref{lem:serve}.

We specify the substrate in TLA\textsuperscript{+} as a state machine over variables $(t, \mathit{loc}, \mathit{last}, E,
\mathit{Ch})$: time, each value's level, a ghost record of each value's
previous use, the energy spent, and the ghost sum of serving charges.
The actions implement the step kinds; a use adds the lemma's charge \emph{unconditionally}, $\min(P_0 \cdot \mathrm{gap}, C_0, \gamma_{\mathrm{rc}})$, however the use was actually served:

{\small
\begin{verbatim}
Hold         == /\ t < TMAX
                /\ t' = t + 1
                /\ E' = E + P0 * Cardinality(L0Set)
MoveIn(v)    == /\ loc[v] = "L1"
                /\ Cardinality(L0Set) < S0
                /\ loc' = [loc EXCEPT ![v] = "L0"]
                /\ E' = E + C0
MoveOut(v)   == /\ loc[v] = "L0"     \* eviction: FREE
                /\ loc' = [loc EXCEPT ![v] = "L1"]
Recompute(v) == /\ loc[v] = "L1"     \* no sources needed
                /\ Cardinality(L0Set) < S0
                /\ loc' = [loc EXCEPT ![v] = "L0"]
                /\ E' = E + GRC
Read(v)      == /\ loc[v] = "L0"     \* op energy: 0
                /\ Ch' = Ch +
                     Min3(P0*(t-last[v]), C0, GRC)
                /\ last' = [last EXCEPT ![v] = t]

Pending      == SUM over v in L0Set of
                  Min3(P0*(t-last[v]), C0, GRC)
ServingBound == E >= Ch                \* the lemma
IndInv       == /\ \A v \in Vals : last[v] =< t
                /\ E >= Ch + Pending   \* inductive
\end{verbatim}
}

\noindent
(\textsc{unchanged} clauses elided. The specification identifies
``absent'' with level 1, without loss: level-1 rent is zero and
eviction is free.) The module with copies replaces $\mathit{loc}[v]$
by a count $\mathit{cnt}[v] \in 0..K$ of level-0 copies and adds the
ghost $\mathit{cov}[v]$ (the count has not reached $0$ since the
previous use), a $\mathit{Copy}(v)$ action (needs a resident copy,
costs $W_0 \geq 0$), eviction of one copy at a time (the last
eviction clears $\mathit{cov}$), recomputation allowed even while
copies exist, rent charged once per resident value, and a
$\mathit{Read}$ that charges $\mathit{Min3}$ when covered and
$\mathit{Min2}(C_0, \mathit{GRC})$ otherwise; its $\mathit{Pending}$
sums the same per-value term:

{\small
\begin{verbatim}
Term(v) == IF cov[v] THEN Min3(P0*(t-last[v]), C0, GRC)
                     ELSE Min2(C0, GRC)      \* copies module
\end{verbatim}
}

\noindent
Every modeling choice bends against the
invariant: the slow level charges no rent, eviction is free,
recomputation needs no resident sources, copies are free at $W_0 =
0$, the served read's own operation energy is zero, and initial
placement at level 0 is free.

\textbf{The invariant that proves the lemma.} $\mathit{ServingBound}$
is the lemma's inequality but is not inductive: the typed state $t =
5$, $\mathit{loc}[v] = \mathrm{L0}$, $\mathit{last}[v] = 0$, $E =
\mathit{Ch} = 0$ satisfies it, and $\mathit{Read}(v)$ violates it.
That state is unreachable, which is why a reachable-state check passes and why such a check is not a proof. $\mathit{IndInv}$ adds
the \emph{pending} serving minimum of every resident value and is
inductive: $\mathit{Hold}$ raises $E$ by $P_0\,|\mathrm{L0Set}|$ and
each pending term by at most $P_0$; $\mathit{MoveIn}$
($\mathit{Recompute}$) raises $E$ by $C_0$ ($\gamma_{\mathrm{rc}}$)
and adds one term $\leq C_0$ ($\gamma_{\mathrm{rc}}$);
$\mathit{MoveOut}$ deletes a term; $\mathit{Read}$ moves exactly the
pending term of $v$ into $\mathit{Ch}$; $\mathit{Init}$ has every term
zero; and $\mathit{IndInv} \Rightarrow \mathit{ServingBound}$ because
the sum is non-negative, which is what the first conjunct secures.
This is the lemma's proof in the finite model, and, for the atomic
specification, no longer only there:
\texttt{tlaplus/ServingLemmaProof.tla} transcribes it to
TLA\textsuperscript{+}'s proof language and TLAPS discharges all 492
obligations (tlapm 1.5.0): $\mathit{TypeOK} \wedge \mathit{IndInv}$,
its pending sum written in cardinality form, is inductive, and
$\mathit{Spec} \Rightarrow \Box\,(E \geq \mathit{Ch})$, for an
arbitrary finite value set and arbitrary constants in $\mathbb{N}$,
with no time horizon and no energy bound. The cardinality form counts
one (value, unit) pair per unit of pending charge, so finite-sum
algebra becomes the cardinality lemmas of the TLAPS library (a hold
is an injection into $\mathrm{L0Set} \times 1..P_0$; a read removes
exactly the read value's slice); TLC checks the two forms equal
(\textit{PendAgree}) on every reachable state of every atomic
configuration. What is checked but not proved: the copies module's
sharp covered form (TLC, within bounds), and the route term $R$ of
Lemma~\ref{lem:serve}'s proof, which lives in the hand proof only.

\textbf{Checks.} TLC explores every interleaving within the bounds (an
energy cap and a time horizon): eviction churn under capacity
pressure, recompute chains, free-rider initial placements. Four kinds
of run (Table~\ref{tab:tlc}): both invariants on the reachable states
of the two base configurations; the inductiveness of
$\mathit{IndInv}$, starting from \emph{every} typed state that
satisfies it (with $E$ and $\mathit{Ch}$ bounded) and checking it
after one step; all six orderings of the three currencies; and, for
the module with copies, reachable states of two configurations and
inductiveness for two.

\begin{table}[H]
\caption{TLC runs (OpenJDK 21; exhaustive within bounds; every run
passes its invariants, the atomic configurations now also checking
\textit{PendAgree}). Rows 1--2: reachable states of the base
configurations (2 values, $S_0{=}1$, $T \leq 5$; 3 values, $S_0{=}2$,
$T \leq 4$). Rows 3--4: inductiveness of $\mathit{IndInv}$ on the same
instances, from every typed state satisfying it. Rows 5--9: the
remaining currency orderings on the first instance. Rows 10--13: the
module with copies ($K = 2$ copies per value; capacity 2, resp.\ 3,
copies; copies free, $W_0 = 0$, except the last row, $W_0 = 1$):
reachable states of two configurations and inductiveness of its
invariant for two.}
\label{tab:tlc}
\small
\setlength{\tabcolsep}{3.5pt}
\begin{tabular}{@{}llrrr@{}}
\toprule
check & $(P_0, C_0, \gamma_{\mathrm{rc}})$ & initial & generated & distinct \\
\midrule
reachable, 2 values & $(2,3,4)$ & 3 & 110{,}103 & 34{,}865 \\
reachable, 3 values & $(3,2,1)$ & 7 & 835{,}823 & 152{,}214 \\
inductive, 2 values & $(2,3,4)$ & 283{,}334 & 1{,}907{,}814 & 452{,}583 \\
inductive, 3 values & $(3,2,1)$ & 830{,}340 & 9{,}650{,}492 & 1{,}549{,}527 \\
ordering & $(2,4,3)$ & 3 & 110{,}103 & 34{,}865 \\
ordering & $(3,2,4)$ & 3 & 94{,}564 & 29{,}714 \\
ordering & $(3,4,2)$ & 3 & 94{,}564 & 29{,}714 \\
ordering & $(4,2,3)$ & 3 & 95{,}227 & 29{,}911 \\
ordering & $(4,3,2)$ & 3 & 95{,}227 & 29{,}911 \\
copies, reachable, 2 values & $(2,3,4)$ & 4 & 2{,}883{,}757 & 584{,}330 \\
copies, reachable, 3 values & $(3,2,1)$ & 8 & 27{,}018{,}926 & 3{,}714{,}125 \\
copies, inductive, 2 values & $(2,3,4)$ & 792{,}004 & 7{,}356{,}697 & 1{,}367{,}861 \\
copies, inductive, $W_0{=}1$ & $(3,2,1)$ & 912{,}060 & 7{,}920{,}232 & 1{,}462{,}367 \\
\bottomrule
\end{tabular}
\end{table}

No violation of any invariant (or of the type invariant) exists in
any of the state spaces. The specifications, the TLAPS proof module,
and the thirteen configurations, \texttt{tlaplus/} in our artifact,
are reproduced and re-checked by \texttt{make tla} (TLC, then tlapm).

\section{Proofs of the Structural Theorems}
\label{app:proofs}

We prove here the seating theory of Section~\ref{sec:structure} and
the port of Theorem~\ref{thm:hk}. Theorem~\ref{thm:action} and
Corollary~\ref{cor:trees} are stated in the main text; the remaining
statements, with the definitions they need, appear below. Four
points deserve notice, because writing the proofs out in full forced
them, and we say so at each: the constants of
Theorem~\ref{thm:necessity} are the sharper ones; the layout lemma
behind Theorem~\ref{thm:nphard} no longer assumes single-leaf hairs,
because the hardness it composes with cannot live there
(Appendix~\ref{app:nphard}); the stage size of
Theorem~\ref{thm:action} is fixed by the window rather than by $\log
N$, because a post-order segment of a balanced tree has strand load
logarithmic in its length (Appendix~\ref{app:action}); and
Proposition~\ref{prop:systolic} embeds an $n$-cell array on $n$ rings
rather than one (Appendix~\ref{app:systolic}).

\subsection{Per-input cumulative memory (Corollary~\ref{cor:sortjoules})}
\label{app:perinput}
The cumulative memory of~\cite{BeameKornerup23} is a property of the
program, $\sum_t \log_2 |L_t|$ with $L_t$ the set of nodes at level
$t$; our rent is paid by the run on the input at hand. Their proof of
Theorem 3.3 gives a per-input form, with three changes, which we
record: the union bound runs over encodings rather than widths, the
memory is the run's own node lengths, and the time branch is the
run's own output span. Let $P$ be a leveled branching program whose
nodes carry two-part encodings, pairs of strings as produced by
Lemma~\ref{lem:sim}, distinct within a level; write $|v|$ for the
total length of the
encoding of $v$, $v_t(x)$ for the node the run on $x$ visits at level
$t$, and $\mathrm{cm}_P(x) = \sum_t |v_t(x)|$; the program of
Lemma~\ref{lem:sim} has $|v_t(x)| \leq |\mathrm{live}_t(x)|$, its
empty pad nodes length $0$, and it emits each output as one item,
output words being units (Definition~\ref{def:substrate}), which is
what the counting below counts.

\begin{proposition}\label{prop:perinput}
Let $P$ be a leveled branching program whose nodes carry two-part
encodings distinct within each level, apart from at most one outputless empty
node per level, computing $\mathrm{Rank}_{n,n^2}$ correctly on every
input, and for an input $x$ let $t^*(x)$ be the level at which the
run on $x$ produces its last output. Then for at least $15/16$ of the
inputs $x$ under the uniform distribution $\mu$ on lists of $n$
distinct integers from $[n^2]$, $t^*(x) = \Omega(n^2/\log^2 n)$ or
$\mathrm{cm}_P(x) = \Omega(n^2/\log n)$; for a program correct with
probability $n^{-c}$ the set has probability $> \tfrac{15}{16}
n^{-c}$. Sorting $n$ items from $[n^3]$ reduces to ranking with the
same node lengths and output levels: their Proposition 3.1(a) turns
a program for $\mathrm{Sort}_{n,nN}$ into one for
$\mathrm{Rank}_{n,N}$ without restructuring it, by running it on the
input whose $i$-th item is the pair $\langle y_i, i \rangle \in [N]
\times [n]$, and $nN = n^3$ gives $N = n^2$, the domain above; the
sorting inputs on which the bound holds are therefore the images of
the good ranking inputs, fifteen sixteenths of the lists $(\langle
y_i, i \rangle)_i$ with $y$ uniform over lists of $n$ distinct
integers from $[n^2]$, which is how Corollary~\ref{cor:sortjoules}
states it.
\end{proposition}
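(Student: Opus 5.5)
I will follow the structure of Beame--Kornerup's proof of their Theorem~3.3, changing only the counting step, as the statement announces. Their argument partitions the program's levels into blocks and shows that a single node at the start of a block cannot correctly produce many outputs of $\mathrm{Rank}_{n,n^2}$ within the block except with small probability over the remaining input. The block size is chosen against the memory at the block's start. The key lemma, for ranking, bounds this probability exponentially in the number of outputs produced. They then take a union bound over the nodes at the block's start.

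\textbf{Counting over encodings.} I will replace the union over the level's width $|L_t|$ by a union over nodes with encoding length exactly $s$. Two-part encodings distinct within a level, plus one empty node, give at most $(s{+}1)2^s+1$ such nodes: there are $s{+}1$ splits and $2^s$ strings, and the bound should also charge the prefix-free split. The empty node is outputless, so it contributes nothing. The per-node failure probability is $2^{-\Omega(k)}$ for a node that emits $k$ outputs in a block of length $\ell$ with $k$ large compared with $\ell\log n$-ish memory. So for each $s$, the probability that the run visits some node of length $s$ that then emits $k \geq C(s+\log n)$ correct outputs in the block is at most $2^{-\Omega(k)}$. Summing over $s$ and over blocks (polynomially many), the probability of any such event is at most $1/16$, once the constant $C$ is chosen to absorb $\log$ of the number of blocks.

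\textbf{Per-input accounting.} I then fix an input $x$ outside the bad event and use the run's own nodes. Every block emits $O(|v_{\mathrm{start}}(x)|+\log n)$ outputs. The run emits all $n$ outputs by $t^*(x)$. Blocks of length $\ell\approx n/\log n$ levels (their choice) cover $[0,t^*(x)]$ in about $t^*(x)/\ell$ blocks. Summing gives $n \lesssim \sum_{\text{blocks}} (|v_{\mathrm{start}}|+\log n)$. Their averaging over block offsets converts $\sum_{\text{block starts}} |v|$ into $\mathrm{cm}_P(x)/\ell$. I will average over offsets per input: for each of $\ell$ shifts the bad event has probability at most $1/(16\ell)$, so a union over shifts keeps the $1/16$. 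The result is either $t^*(x)\log n/\ell \gtrsim n$, giving $t^*(x)=\Omega(n^2/\log^2 n)$, or $\mathrm{cm}_P(x)/\ell \gtrsim n$, giving $\mathrm{cm}_P(x)=\Omega(n^2/\log n)$. This makes the time branch the run's own output span, as required.

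\textbf{Randomized case and reduction.} For success probability $n^{-c}$ I intersect with the success event: the good set contains the successful inputs outside a bad event of probability below $\tfrac1{16}n^{-c}$. This requires the union-bound constant to be tuned to $c$, so I tighten $C$ by an additive $O(c\log n)$, which changes only constants. The sorting statement is then direct. Their Proposition~3.1(a) reuses the same program on the inputs $\langle y_i,i\rangle$, so node lengths and output levels are unchanged.

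\textbf{Main obstacle.} I expect the main obstacle to be the key per-node lemma: checking that their conditional-probability bound for a fixed node holds when conditioned on reaching that node through the run on $x$, not averaged over a level. I would first try to reuse their statement verbatim, since it is already per node.
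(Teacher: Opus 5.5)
\textbf{Gap: the exponent of the per-node lemma.} Your architecture is the paper's: blocks of bounded height, a per-node Borodin--Cook bound, a union bound over the $(s+1)2^s$ pair-encoded nodes of total length $s$, and per-input accounting on the run's own node lengths and output span. The key quantitative step, however, fails as written.

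The lemma you plan to reuse verbatim is Beame--Kornerup's Proposition~3.2, as the paper applies it. It bounds the probability that a program of height at most $\alpha n$ produces at least $k$ correct outputs of $\mathrm{Rank}$ by $2^{-k/\lceil \log_2 n\rceil}$, and only for $k \le 2\alpha n$. It is not $2^{-\Omega(k)}$. Against $(s+1)2^s$ nodes, your threshold $k \geq C(s+\log n)$ leaves a union bound of order $(s+1)\,2^{\,s - C(s+\log n)/\lceil\log_2 n\rceil}$, which grows with $s$. The threshold must carry the factor $\lceil\log_2 n\rceil$, as the paper's $k(s) = \lceil\log_2 n\rceil\,(s + 3\lceil\log_2(s+2)\rceil + \lceil\log_2(32TH)\rceil)$ does.

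Once it does, blocks of $n/\log n$ levels are too short. Each block may then emit up to $\lceil\log_2 n\rceil(\sigma + O(\log n))$ correct outputs, and your accounting gives only $\mathrm{cm}_P(x) = \Omega(n^2/\log^2 n)$ or $t^*(x) = \Omega(n^2/\log^3 n)$. The stated bounds need blocks of $\Theta(n)$ levels, which is the lemma's full height allowance. The paper cuts the levels into intervals of length between $H$ and $2H$, with $H = \lfloor \alpha n/4\rfloor$. It roots each subprogram at the interval's least-length level and cuts it at height $4H \leq \alpha n$. Then $\sum_{i \le \ell_x} k(\sigma_i(x)) \geq n$ yields exactly the two branches. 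With that re-parameterization, your average over shifts would serve as well as the paper's least-length levels.

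\textbf{Smaller points.} First, your union over ``polynomially many'' blocks presupposes bounded height. The paper first truncates $P$ at level $n^3$, which costs nothing: an input whose run outputs beyond that level is already in the time branch.

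Second, because the lemma needs $k \leq 2\alpha n$, nodes longer than about $\alpha n/(2\log_2 n)$ cannot enter the union bound at all. The paper disposes of them directly: one such node at an interval's least-length level already gives $\mathrm{cm}_P(x) \geq H\,\alpha n/(2\log_2 n) = \Omega(n^2/\log n)$. On your good shift the same holds, since its block starts sum to at most $\mathrm{cm}_P(x)/\ell$.

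Third, the obstacle you anticipate is not one. Call $x$ bad at $v$ when the run visits $v$ \emph{and} the subprogram rooted at $v$, cut at the block height, produces at least $k(|v|)$ correct outputs on $x$. That event lies inside the event that the subprogram, run as a standalone program of height at most $\alpha n$ on uniform $x$, does so. The lemma bounds the latter with no conditioning.

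The randomized case and the sorting reduction are as in the paper; in the randomized case the extra $c\log_2 n$ goes inside the parentheses of $k(s)$.
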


\begin{proof}
Their Proposition 3.2, from~\cite{BorodinCook82}: there is $\alpha >
0$ such that for every branching program $B$ of height $\leq \alpha
n$ and every $k \leq 2\alpha n$, $\Pr_{x \sim \mu}[B$ produces
$\geq k$ correct outputs of $\mathrm{Rank}$ on $x] \leq 2^{-k/\lceil
\log_2 n \rceil}$. Truncate $P$ at level $n^3$: an input whose run
produces an output beyond that level has $t^*(x) \geq n^3$ and is in
the time branch, and no other run changes, so we may assume $T \leq
n^3$. Let $H = \lfloor \alpha n/4 \rfloor$ and cut the levels, level
$0$ and its root included, into
$\ell \leq T$ intervals $I_1, \dots, I_\ell$ of lengths between $H$
and $2H$. For an input $x$ let $\ell_x$ be the index of the interval
containing $t^*(x)$ and, for $i \leq \ell_x$, let $t_i(x)$ be a level
of $I_i$ at which $|v_t(x)|$ is least and $\sigma_i(x)$ that length;
every level of $I_{i+1}$ lies fewer than $4H \leq \alpha n$ levels
ahead of $t_i(x)$, as does $t^*(x)$ ahead of $t_{\ell_x}(x)$, and
$\mathrm{cm}_P(x) \geq H \sum_{i \leq \ell_x} \sigma_i(x)$. If some
$\sigma_i(x) > \alpha n/(2\log_2 n)$ we are done, so assume not. For
a node $v$ at a level of $I_i$ with $|v| = s \leq \alpha n/(2 \log_2
n)$ (the only nodes at which badness is needed below), call $x$
\emph{bad at $v$} if the run on $x$ visits $v$ and the subprogram of
$P$ rooted at $v$, cut at height $4H$, produces at least
\[
  k(s) = \lceil \log_2 n \rceil\,\big(s + 3\lceil \log_2 (s+2) \rceil
  + \lceil \log_2(32\,T H) \rceil\big)
\]
correct outputs on $x$. By Proposition 3.2 (the subprogram has height
$\leq \alpha n$, and $k(s) \leq 2\alpha n$ for these $s$),
$\Pr_x[x \text{ bad at } v] \leq 2^{-s}(s+2)^{-3}/(32\,TH)$; at most
$2H \cdot (s+1) 2^s$ pair-encoded nodes of total length $s$ lie on
the levels of $I_i$ (a pair of total length $s$ has $s+1$ splits),
and the outputless empty nodes are never bad (they root no outputs),
so summing over the $s$ in range and over the $\ell \leq T$
intervals, $\Pr_x[x \text{ bad somewhere}] \leq \sum_{s \geq 0}
(s+2)^{-2}/16 = (\pi^2/6 - 1)/16 < 1/16$. For an $x$ that is bad
nowhere, the outputs
produced between $t_i(x)$ and $t_{i+1}(x)$, and those between
$t_{\ell_x}(x)$ and $t^*(x)$, are produced by the subprogram rooted
at $v_{t_i(x)}(x)$, respectively $v_{t_{\ell_x}(x)}(x)$, cut at
height $4H$, so fewer than $k(\sigma_i(x))$ of them are correct, and
fewer than $k(0)$ before $t_1(x)$ (the root has length $0$, as
Lemma~\ref{lem:sim} encodes it); as $P$
is correct on $x$ and all $n$ outputs appear by $t^*(x)$,
$\sum_{i=0}^{\ell_x} k(\sigma_i(x)) \geq n$ with $\sigma_0 = 0$, that
is,
\[
  \sum_{i \leq \ell_x} \sigma_i(x) \;\geq\; \frac{n}{\lceil \log_2 n
  \rceil} - (\ell_x+1)\,\OO(\log n) .
\]
If $t^*(x) \leq \beta n^2/\log^2 n$ for a small enough constant
$\beta$, so that $\ell_x + 1 = \OO(\beta n/\log^2 n)$, the subtracted
term is at most half the first and $\mathrm{cm}_P(x) \geq
H\,n/(2\lceil \log_2 n\rceil) = \Omega(n^2/\log n)$; otherwise
$t^*(x) = \Omega(n^2/\log^2 n)$. For a program correct with
probability $n^{-c}$, replace $32TH$ by $32THn^{c}$ in $k(s)$: the
bad set then has probability $< n^{-c}/16$, and the inputs on which
$P$ is correct and bad nowhere have probability $> \tfrac{15}{16}
n^{-c}$.
\end{proof}

\subsection{Links, routes, and cutwidth necessity}
\label{app:necessity}

\begin{definition}[Links, crossing, routed value-cutwidth]
\label{def:vcw}
Fix a cyclic order of $G$'s operations and seeds and, for each value
$v$ and each of its consumers, one of the two arcs from $v$'s producer
(or seed) to that consumer, its \emph{route}. Value $v$ \emph{crosses}
a cut position $c$, a point between two consecutive positions, if $c$
lies on one of $v$'s routes. The \emph{routed cyclic value-cutwidth}
$\cw_\circ(G)$ is the minimum, over cyclic orders and route choices,
of the maximum number of values crossing a cut; the \emph{linear
value-cutwidth} $\cw(G)$ is the same over linear orders, where each
dependence has one route. (When every value has one consumer, $\cw$
is the classical edge cutwidth.) In a single-ring seating each read of
a value $v$ is a \emph{link}, the cyclic interval from the reading
instruction's slot to the copy's slot, of length $\leq W$; a chain of
links through relay copies carries $v$ from producer to consumer along
one arc, and the seating's order of packets with its chains is such an
order with routes.
\end{definition}

\begin{lemma}[Wall]\label{lem:wall}
No chain of relay links serving a value $v$ passes through $\geq W$
consecutive slots occupied by packets foreign to $v$ (neither copies
of $v$ nor instructions that read $v$).
\end{lemma}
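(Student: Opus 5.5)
The plan is a counting argument on a single cycle of the ring, built on the window constraint of Definition~\ref{def:csm}. A link is one read. It connects a reading packet (an instruction reading a copy of $v$, or a relay copying $v$) to a copy of $v$ at most $W$ slots ahead. A chain of relay links serving $v$ is a sequence of such links. Each link runs from a packet that is either a copy of $v$ or an instruction reading $v$, to a copy of $v$. So every endpoint of every link in the chain is a packet that is \emph{not} foreign to $v$: either a copy of $v$ or a reader of $v$. Relays that copy $v$ produce copies of $v$, so the intermediate endpoints are copies and count as non-foreign. I will state this first, reading the relay's written packet as a copy of $v$ by definition of a chain serving $v$.

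Next, suppose for contradiction that the chain passes through a block $B$ of $\geq W$ consecutive slots, all holding packets foreign to $v$. Passing through means that some link's cyclic interval covers $B$ strictly inside, with both endpoints outside $B$. No endpoint can lie in $B$, since endpoints are non-foreign. The chain advances along one arc, and consecutive links share endpoints. So the first point at which the chain enters $B$ and the point at which it leaves $B$ must be bridged by a single link. That link's interval contains all of $B$ plus its two endpoints outside $B$. Its length is therefore at least $|B| + 1 \geq W + 1$ slot-steps, counted as the distance from the reader's slot to the copy's slot. This contradicts the link length bound $\leq W$ of Definition~\ref{def:vcw}.

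The main obstacle is the off-by-one convention: whether "distance $\leq W$ ahead" counts slots strictly between the endpoints or includes one of them. Reading it as "copy at offset $1,\dots,W$ from the reader", an interval jumping over $W$ consecutive foreign slots has offset at least $W+1$, which is what the argument needs. I would fix this convention explicitly from the operand-read rule ("reads operands $\leq W$ slots ahead of itself"). A secondary point is that the slots must be read at one instant. Since all packets of a ring co-rotate, relative offsets are invariant, so the wall is a static configuration. There is one more case to rule out: slots changing occupancy between links of the chain. Within a single ring without transfers, relative positions are fixed. I would restrict the lemma to a fixed configuration, with the block occupied throughout the chain's settling, and note that this restriction is exactly how the lemma is used in the flattening by relay ladders.
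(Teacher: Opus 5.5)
Your proof is correct and follows the paper's own argument. Both endpoints of every link are devoted to $v$, one a reader and the other a copy, so no link ends inside the wall. Since consecutive links share endpoints, a chain crossing the wall would need a single link spanning it, of length $> W$, which contradicts the window bound. Your extra care about the offset convention and about reading the wall in a fixed configuration is consistent with the paper's setting, where a seating is a static assignment of packets to co-rotating slots.
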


\begin{proof}
A link's endpoints (reader and copy) are $\leq W$ apart and are slots
devoted to $v$'s service, so a link neither ends inside the wall,
whose slots are all foreign, nor spans it, which would need length $>
W$.
\end{proof}

\begin{lemma}[Support]\label{lem:support}
For any cut $c$ of a single-ring seating, every value crossing $c$
has a serving link with both endpoints in the $2W$-window
$(c-W, c+W]$.
\end{lemma}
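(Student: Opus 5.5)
The plan is to argue directly from the Wall lemma (Lemma~\ref{lem:wall}) and the link structure of Definition~\ref{def:vcw}. Fix a cut $c$ and a value $v$ crossing it. Then $c$ lies on one of $v$'s routes, which is the arc traversed by a chain of links, running from a copy of $v$ (or its seed) through relay copies to a reading instruction. Each link is a cyclic interval of length at most $W$, and consecutive links of the chain share an endpoint, a relay copy of $v$. So the chain covers the route arc with overlapping intervals, and some link $[a,b]$ of the chain contains the cut $c$ in its interior.

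That link is the one we exhibit. Its endpoints $a$ and $b$ lie on opposite sides of $c$, and $b-a\le W$ cyclically. Hence $a \ge b - W > c - W$ and $b \le a + W \le c + W$, and a fortiori $a > c-W$ and $b \le c+W$. Both endpoints therefore lie in the window $(c-W, c+W]$.

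The main obstacle is the boundary convention: which half-open endpoint is excluded, and whether a cut point sitting exactly between positions makes the inequalities strict. Since $c$ is strictly between two slot positions, $a < c < b$ with integer slots. This gives $a \ge \lceil c\rceil - W > c - W$ and $b \le \lfloor c\rfloor + W < c + W$, so the window closes either way.

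A second subtlety is that a link's endpoints are the reader's slot and the copy's slot. Whichever is the reader, the interval is the same, so the argument is symmetric in direction.

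If the route arc wraps the whole ring, the chain still consists of links of length at most $W$, and one of them straddles $c$. The Wall lemma is not actually needed for this statement. It only matters when one counts distinct supporting links for the cutwidth bound of Theorem~\ref{thm:necessity}.
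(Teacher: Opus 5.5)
Your link case is correct and matches the paper's first case: if the cut lies strictly inside a link $[a,b]$ of length at most $W$, then $a > c-W$ and $b < c+W$. The gap is the step before it. You claim the links alone cover the route because ``consecutive links of the chain share an endpoint, a relay copy of $v$,'' and that is false.

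Under Definition~\ref{def:vcw} a link runs from the \emph{reading instruction's} slot to the copy it reads. A relay at slot $r_i$ reads copy $z_{i-1}$ and writes copy $z_i$. Its link is therefore $[r_i, z_{i-1}]$, and the next reader's link is $[r_{i+1}, z_i]$. These share no endpoint; what joins them is the write span $[r_i, z_i]$, which is not a link. The route also starts at the producer's own slot, and the producer's write to its first copy is not a link either.

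So a cut on the route may lie in no link at all. Take $W \geq 5$, a producer at slot $0$ that writes $v$ at slot $5$, and a consumer at slot $3$ that reads it. The walk covers $[0,5]$, so the route is the arc $[0,3]$, and the only link is $[3,5]$. A cut between slots $1$ and $2$ crosses $v$ but meets no link. The lemma still holds there, but your argument has no straddling link to exhibit.

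The missing idea is the paper's second case. Cover the route by the whole walk: the producer, its copy, then alternately each relay's read and its write, ending with the consumer's read. Each step spans at most $W$, and consecutive steps \emph{do} share an endpoint, so $c$ lies inside some step. If that step is a read, your argument applies. If it is a write from an instruction behind $c$ to a copy at $z$, then $c < z < c+W$. The read of $z$ that follows in the walk is a link whose reader lies in $[z-W, z)$, hence above $c-W$. Both endpoints of that link therefore lie in $(c-W, c+W]$.

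This case uses the direction convention, that reads and writes both reach ahead. Your remark that the argument is symmetric in direction holds only for the link case. With the write case added the proof closes, and you are right that the Wall lemma plays no role.
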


\begin{proof}
Extend the chain serving the consumer to a walk from the producer to
the consumer: the producer, its copy, then alternately the reader
and the written copy of each relay, ending with the consumer's read.
Each step spans at most $W$ (a copy is written at most $W$ ahead of
its instruction; a reader is at most $W$ behind the copy it reads)
and consecutive steps share an endpoint, so the union of the spans
is a connected set of positions containing producer and consumer:
it contains one of the two arcs between them, and that arc is the
value's route for this consumer (Definition~\ref{def:vcw}). A cut
$c$ on the route therefore lies inside some step's span. If that
step is a read, it is a link whose endpoints are within the span's
$W$ of $c$ on either side. If it is a write, its copy sits at a
position $z$ with $c < z < c + W$ (the span starts behind $c$ and
has length at most $W$), and the read that follows it in the walk is
a link from a reader at position $\geq z - W > c - W$ to $z$. In
both cases the link's endpoints lie in $(c-W, c+W]$.
\end{proof}

\begin{theorem}[Cutwidth necessity]\label{thm:necessity}
If a dataflow graph $G$ has a correct single-ring seating with window $W$, at any period and with any number of relays, then its routed cyclic
value-cutwidth satisfies $\cw_\circ(G) \leq 2W$, and its linear
value-cutwidth $\cw(G) \leq 4W$.
\end{theorem}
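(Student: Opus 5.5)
The plan is to read a cyclic order with routes off the seating itself, as Definition~\ref{def:vcw} permits, and to bound the number of values crossing any cut by counting slots in a short window. Take the cyclic order of the ring's slots restricted to the packets of $G$: producers, seeds and consumers at their seats, relay copies contracted away. For each value and consumer, take as route the arc swept by the chain of links that serves that consumer. The proof of Lemma~\ref{lem:support} shows that the union of the chain's spans contains one of the two arcs between producer and consumer, and we take that arc as the route. The routed cyclic value-cutwidth is then at most the maximum, over cuts $c$, of the number of values crossing $c$ under these routes.

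Fix a cut $c$. By Lemma~\ref{lem:support}, every value crossing $c$ has a serving link with both endpoints in the $2W$-window $(c-W, c+W]$. Each such link has a copy endpoint, and that copy is a slot devoted to the value: a copy of $v$ holds $v$ and no other value, since a slot holds one packet. Choose one copy slot per crossing value; these slots are distinct because distinct values occupy distinct packets. The window has $2W$ slots, so at most $2W$ values cross $c$, and $\cw_\circ(G) \leq 2W$.

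The care goes into the charging step. A link's reader endpoint may be an instruction reading two values, so readers cannot be charged injectively. Charging the copy endpoint avoids this, and I will verify from the definition of links that the copy endpoint is always a packet holding exactly that value. This is where the windowing of Lemma~\ref{lem:support} matters: the copy, not merely the link, must lie in the window. The same step must also hold when the relay copies of the chain are interleaved across the cut.

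For the linear bound I will cut the cyclic order at one cut $c_0$ and linearize. In the linear order each value has a single route, namely the interval between producer and consumer. A linear cut $c$ is crossed by a value only if the value's cyclic route crosses $c$, or if that route wraps across $c_0$ (its linear interval then being the complementary arc). Both families are bounded by $2W$ using the window argument at $c$ and at $c_0$ respectively, so $\cw(G) \leq 4W$. The main obstacle is making the "at most one copy per value per window" charge airtight when one value has several consumers whose routes differ. For that case I will charge per value, not per consumer, since crossing is defined per value.
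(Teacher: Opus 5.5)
Your proposal is correct and follows the paper's proof: the paper likewise charges each value crossing a cut to the copy endpoint of the serving link that Lemma~\ref{lem:support} places in the $2W$-window, and gets injectivity from the fact that a slot holds one packet and a copy holds one value. For the linear bound the paper cuts the circle at a minimum-load position and re-routes that cut's at most $2W$ crossers along the other arc, which adds at most $2W$ to every cut; this is your two-family union bound, and your remark that any cut $c_0$ suffices (since every cut has at most $2W$ crossers) is a harmless simplification.
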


\begin{proof}
Fix a cut $c$. By Lemma~\ref{lem:support} each crossing value owns a
serving link with both endpoints in the $2W$-window; charge the value
to that link's copy endpoint. A slot holds one packet and a copy holds
one value, so the charge is injective: at most $2W$ values cross any
cut. Cutting the circle at a minimum-load position and re-routing its
$\leq 2W$ crossers along the other arc adds at most $2W$ to every cut
of the unrolled line: $\cw(G) \leq 4W$. (The same argument with a
redundant case gives $4W$ and $8W$; the constants here are the sharp
ones.)
\end{proof}

\subsection{The cutwidth of the reduction (Corollary~\ref{cor:trees})}
\label{app:cw}
In the balanced $N$-leaf reduction every value has one consumer, so a
value crossing a cut is an edge of the complete binary tree $T_h$, $h
= \log_2 N$, crossing it, and $\cw(G_N)$ is the edge cutwidth of
$T_h$. Cutwidth dominates vertex separation: at any cut of a linear
order, each left-side vertex with a right-side neighbour contributes
at least one crossing edge, and an edge witnesses one such vertex, so
the separator is no larger than the edge load. Vertex separation
equals pathwidth~\cite{EllisSudboroughTurner94}, and for trees Ellis,
Sudborough and Turner show that a vertex with three branches of vertex
separation $\geq k$ forces vertex separation $\geq k+1$; a child of
the root of $T_h$ has three such branches with $k =
\mathrm{vs}(T_{h-2})$ (its two subtrees, and the branch through the
root, which contains its sibling's $T_{h-1}$), so $\mathrm{vs}(T_h)
\geq \mathrm{vs}(T_{h-2}) + 1$ and, with $\mathrm{vs}(T_1) =
\mathrm{vs}(T_2) = 1$, $\mathrm{vs}(T_h) \geq \lceil h/2 \rceil$.
Hence $\cw(G_N) \geq \tfrac12 \log_2 N$; the matching upper bound is
Lengauer's~\cite{Lengauer82}. With
$\cw \leq 4W$ from Theorem~\ref{thm:necessity}, no single-ring seating
exists once $\tfrac12 \log_2 N > 4W$, that is, once $N > 2^{8W}$.

\subsection{Theorem~\ref{thm:lattice} (strand lattice) and its proof}
\label{app:lattice}

\begin{theorem}[Strand lattice; sufficiency]\label{thm:lattice}
If $G$ has a linearization whose strand load (earlier values needed
by the operation or after it, the immediately preceding result
excluded only as the operation's own operand: needed again later, it
is a strand) is everywhere $\leq
\lambda \leq \lfloor (W-5)/2 \rfloor$, then $G$ seats on one ring of
period $\OO(|G|)$: place cells at stride $4 + 2\lambda$, each holding
instruction, result, two seeds, and a relay pair per crossing strand.
Counting gives the converse ceiling $\lfloor (W-3)/2 \rfloor$; the
one-strand gap is open. At $W = 8$ the construction's capacity,
$\lambda = 1$, is exactly the discipline the implemented compiler
realizes.
\end{theorem}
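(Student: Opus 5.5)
The plan is constructive for sufficiency and counting for the ceiling. Fix the linearization $v_1, \dots, v_m$ of $G$ with strand load $\leq \lambda$. Give operation $v_i$ a \emph{cell} of $4 + 2\lambda$ consecutive slots: its instruction packet, its result slot, two seed slots, and $\lambda$ relay pairs, each pair a relay instruction together with the copy it writes. Place the cells in linearization order around one ring, so the period is $(4+2\lambda)m = \OO(|G|)$.

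A value that is still needed past the next cell is a strand. Each strand is carried forward one cell per hop by the relay pair it occupies in every intermediate cell. Strand load $\leq \lambda$ guarantees that every cell has enough relay pairs for all strands crossing it. Assign strands to pairs consistently, for instance by keeping each strand in a fixed pair index while it is alive and reusing freed indices greedily; this is an interval-colouring of strands with $\lambda$ colours.

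The main step is to verify that every read and every write stays within the window $W$. There are three kinds of access:
\begin{enumerate}
\item the instruction reads its own seeds and result slot, which lie inside its cell;
\item it reads the immediately preceding result, which sits in the previous cell;
\item it reads a strand from the copy in its own cell, or a relay reads the copy in the previous cell and writes its own.
\end{enumerate}
Every such link spans at most one cell length plus a constant offset. So I would fix an explicit in-cell ordering (relay pairs first, then seeds, then instruction and result) and compute the worst link, which should be $2\lambda + 5 \leq W$. This is exactly the condition $\lambda \leq \lfloor (W-5)/2 \rfloor$.

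Timing follows from the settle semantics. Relays re-fire every revolution, so each hop settles within a bounded number of revolutions, and correctness follows by induction along the linearization. I would cite the machine-checked certificate for the $W = 8$, $\lambda = 1$ instance as the base example.

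For the counting ceiling, take an instruction whose crossing strands number $\lambda$. Each strand must own a distinct copy slot within $W$ of the reader, as in the injective charging of Theorem~\ref{thm:necessity}. The reader also needs its own slot, its result and its immediate operand. Slot counting inside the window therefore gives $2\lambda + 3 \leq W$: strands and their relay copies occupy two slots each, because a strand passing a cell must both be read there and be rewritten ahead of it, or else it is walled off by Lemma~\ref{lem:wall}.

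\textbf{The main obstacle is this ``two slots per crossing strand'' step.} It has to show that a passing strand costs a reader plus a copy, not just a copy. My first attempt would be to apply Lemma~\ref{lem:support} at the cuts on either side of the instruction and count endpoints. The remaining one-strand gap is left open, as the statement says.

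Finally, at $W = 8$ the formula gives $\lfloor 3/2 \rfloor = 1$. I would match this against the compiler's one-strand discipline by inspection of the artifact.
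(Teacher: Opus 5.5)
Your sufficiency argument is the paper's construction: cells at stride $\sigma = 4+2\lambda$, each strand kept at one pair index from its first relay to its last consumer, and a binding link of $\sigma+1 = 2\lambda+5 \le W$. That link is met both by the instruction reading the previous result and by a relay reading the previous cell's copy.

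Because the window is one-sided, the layout details you defer are where the constraint lives, and the paper fixes them:
\begin{itemize}
\item The cells descend, $B_i = C - \sigma i$, so earlier cells lie ahead and the station meets them first.
\item Within a cell the instruction sits at the base. Its result is at $B_i+1$, its seeds at $B_i+2$ and $B_i+3$, and relay $j$ at $B_i+4+2j$ with its copy just ahead at $B_i+5+2j$.
\end{itemize}
Your ordering ``relay pairs first, then seeds, then instruction and result'' is this layout only if read front to back in the direction of travel, with the result just ahead of the instruction. Read the other way, the instruction's seeds and strand copies fall behind it. Your list of accesses also omits the pickup of a new strand: the next cell's relay reads it from the result slot, at offset $\sigma-3-2j \in [3,W]$, not from a copy. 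Your settle argument is fine; the descending layout in fact gives correctness in one revolution.

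The converse ceiling is where the proposal goes wrong. In the paper it is a count within the \emph{pair-per-window accounting}. A firing binary operation with two fresh seeds needs, among the $W$ slots ahead of it, those two seeds, its result, and a relay and a copy for each strand maintained across it, so $3+2k \le W$. The two slots per strand are the premise of that accounting, not something derived. The step you flag as the main obstacle is therefore one the paper never takes.

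Your route would not supply it as sketched either:
\begin{itemize}
\item Lemma~\ref{lem:support} places a serving link somewhere in the $2W$ window around a cut, extending on both sides of the instruction. It does not place it among the $W$ slots ahead of the instruction.
\item Lemma~\ref{lem:wall} guarantees only one devoted slot in each $W$ consecutive slots, not two.
\item A bound over arbitrary seatings would also have to relate a linearization's strand load to the routes crossing a given window, which you do not address.
\end{itemize}

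Finally, your count of the other three slots is wrong, even though the total comes out right. The reader's own slot is not among the $W$ slots ahead of it, since reads and writes are at offsets $1,\dots,W$. Your list also omits the second operand. Counted correctly, your items give $2\lambda+2 \le W$. The three slots that give $W-3$ are the two non-strand operands (the two fresh seeds in the extremal case) and the result.
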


\emph{Sufficiency.} Let $\lambda \leq \lfloor (W-5)/2 \rfloor$ and
set the stride $\sigma = 4 + 2\lambda \leq W - 1$. Number the
operations $1, \dots, m$ in the linearization and seat operation $i$
in a \emph{cell} at base $B_i = C - \sigma i$, descending from an
arbitrary top position $C$; the ring advances one slot per cycle, so
a station visits higher slots first, that is, the cells in
linearization order. The cell holds the instruction at $B_i$; its
result at $B_i + 1$; its first-use seed operands, at most two, at
$B_i + 2$ and $B_i + 3$; and, for each strand alive across the cell,
a relay instruction at $B_i + 4 + 2j$ and its copy at $B_i + 5 + 2j$,
where $j \in [0, \lambda)$ is the strand's index, one it keeps from
its first relay to its last consumer, so that a cell's relay reads
the preceding cell's copy at the same index (at most $\lambda$
strands are alive at any cell, so such indices exist). Every read
and write is ahead of its
instruction and within $W$: the instruction reads its seeds at
offsets $2$ and $3$, writes its result at offset $1$, reads the
preceding operation's result, which lies at $B_{i-1} + 1$, at offset
$\sigma + 1 \leq W$, and reads the $j$-th strand's value at the
cell's own copy, offset $5 + 2j \leq \sigma - 1$; the relay of strand
$j$ reads the preceding cell's copy at offset $\sigma + 1 \leq W$ and
writes its own copy at offset $1$; and a strand that starts at cell
$i$ (a result needed beyond cell $i{+}1$) is picked up by the next
cell's relay at offset $\sigma - 3 - 2j \in [3, W]$. The window of a
cell therefore holds its seeds, its result, and one relay pair per
crossing strand, which is what strand load $\leq \lambda$ promises.
All slots are distinct by the descending construction, the seating
uses $\sigma m = \OO(|G|)$ slots, and under settle semantics the fixed
point is the correct evaluation whatever the firing order; since each
cell's operands are written by cells the station visits earlier in
the same revolution, one revolution suffices.

\emph{Counting.} In the pair-per-window accounting a firing binary
operation with two fresh seeds needs, within the $W$ slots ahead of
it, those two seeds, its result, and a relay and a copy for every
strand maintained across it: $3 + 2k \leq W$, so $k \leq \lfloor
(W-3)/2 \rfloor$. Whether the construction's $\lfloor (W-5)/2 \rfloor$
can be raised to meet it is the open one-strand gap.

\subsection{Proof of Theorem~\ref{thm:nphard} (layout is inapproximable)}
\label{app:nphard}
Throughout, $D(G)$ is the dataflow of a graph $G = (V, \mathcal{E})$
with $n = |V|$ (one seed $x_v$ per vertex; per edge $uv$ one binary
operation $e_{uv}$ reading $x_u, x_v$ and writing an output $r_{uv}$),
and $W^*(D)$ is the least window admitting a relay-free single-ring
seating of $D$ on a ring of period $10 n W$, large enough that no
seating wraps (Lemma~\ref{lem:rank}); $\mathrm{bw}(G)$ denotes the
bandwidth of $G$.

\begin{theorem}[Layout is inapproximable]\label{thm:nphard}
The minimal window $W^*$ of a relay-free single-ring seating (at a
period large enough that no seating wraps, $10|V|W$) is NP-hard to
approximate within any constant factor. The reduction seats a tree's
dataflow along any bandwidth layout and extracts ranks back,
sandwiching, for every tree $G$,
\[
  \mathrm{bw}(G) + 1 \;\leq\; W^*(D(G)) \;\leq\; 5\,\mathrm{bw}(G) + 1,
\]
and composes with the any-constant inapproximability of caterpillar
bandwidth~\cite{DubeyFeigeUnger11,Unger98}. Register
sufficiency~\cite{Sethi75} is the
ancestral form of the strand parameter; the classical layout
lineage~\cite{GGJK78,MonienSudborough88,Feige00,Gupta01} here decides
computability and energy, not wire length.
\end{theorem}

An upper layout for caterpillars
with single-leaf hairs alone cannot suffice: the bandwidth of caterpillars with hairs of length at most
two is polynomial~\cite{AssmannPSZ81} and becomes NP-complete at hair
length three~\cite{Monien86}, so the hard instances
of~\cite{DubeyFeigeUnger11} must have longer hairs. The lemma below covers
every tree, with a better constant.

\begin{lemma}[Layout from bandwidth]\label{lem:treelayout}
For every tree $G$, $W^*(D(G)) \leq 5\,\mathrm{bw}(G) + 1$.
\end{lemma}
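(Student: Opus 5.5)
Fix a bandwidth layout $\pi : V \to \{1,\dots,n\}$ of the tree $G$ with stretch $b = \mathrm{bw}(G)$, so that $|\pi(u)-\pi(v)| \leq b$ for every edge $uv$. We build from it a relay-free seating of $D(G)$ on one ring of period $10nW$ with window $W = 5b+1$. The approach is to lay the ring out in \emph{blocks}, one per vertex and in the order $\pi$. Each block holds the seed $x_v$ together with the edge operations charged to $v$. Since $G$ is a tree, we root it and charge each edge $uv$ to its child endpoint. Every vertex then carries at most one charged operation, namely the one to its parent, and its block needs at most three slots: the seed, the instruction $e_{uv}$, and the output $r_{uv}$. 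We take fixed-width blocks of five slots. The two spare slots give slack, so that reads and writes always point ahead of the instruction regardless of whether the parent lies before or after the child in $\pi$.

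The seating is descending, as in the proof of Theorem~\ref{thm:lattice}. Vertex $v$'s block sits at base $C - 5\pi(v)$, so that a station meets vertices in $\pi$ order. We then place the seed and output strictly ahead of the instruction inside the block. The key computation concerns an edge charged to $v$ with parent $u$: the instruction $e_{uv}$ reads the seed $x_u$. That seed lies in a block at most $b$ blocks away, so within $5b$ slots plus an in-block offset. The step to check is the direction of the read. The window rule of Definition~\ref{def:csm} requires operands to lie \emph{ahead} of the instruction. For a parent placed before the child we therefore put the instruction at the low end of the child's block. For a parent placed after the child we cannot simply read backward; instead we seat $e_{uv}$ in the \emph{parent's} block, which is the later block, and read the child's seed ahead of it.

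This re-charging is where the tree property matters. A vertex may absorb many child instructions, and a fixed five-slot block cannot hold them all. This is the main obstacle. Our first attempt is to charge each edge to whichever endpoint comes first in $\pi$ (lower-indexed), which makes the other operand lie ahead, but this breaks the bound of one instruction per vertex. To get around it we would use the standard fact that an oriented tree has an orientation of in-degree at most one with respect to the root. We then orient every edge so that its instruction is seated at the endpoint whose block position lets the read point forward, and absorb multiplicity by spreading instructions into neighbouring free slots. This spreading costs at most the bandwidth again in distance, and that is where the constant $5$, rather than $3$, would be spent.

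Finally we verify three things: that all slots are distinct; that the period $10nW$ exceeds the $5n$ slots used, so the seating never wraps, as Lemma~\ref{lem:rank} requires; and that under settle semantics every operation reads seeds that are present from cycle 0, so the fixed point is correct. These checks are routine once the offsets are fixed. The bound $W^*(D(G)) \leq 5b+1$ follows from maximum offset $\leq 5b$ plus one for the instruction's own slot.
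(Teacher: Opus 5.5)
There is a genuine gap, and it is at the step you yourself call the main obstacle.

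With seeds seated in $\pi$ order, the instruction $e_{uv}$ must lie behind both $x_u$ and $x_v$. So the endpoint at which it sits is dictated by $\pi$, not by any rooting. The in-degree-one orientation therefore cannot be imposed, and a vertex must host an instruction for every neighbour on one side of it in $\pi$, up to $b$ of them. For the same reason, the two spare slots of your first paragraph cannot make a read point ahead when the other seed's block lies wholly behind; your second paragraph concedes this.

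Fixed five-slot blocks leave no slack for that multiplicity. Seeds of vertices $b$ apart in $\pi$ are already about $5b$ slots apart. An instruction reading a neighbour at $\pi$-distance $b-j$ therefore has only about $5j$ admissible slots behind the later seed, and those slots compete with the neighbouring blocks' own instructions and results. Whether such a packing always exists is a Hall-type question you do not address. ``Spreading into neighbouring free slots'' at a cost of ``the bandwidth again in distance'' is asserted, not shown. As stated, that cost adds to the $5b$ seed separation rather than fitting inside $5b+1$, so your closing offset count does not follow.

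The missing idea is to keep the charging you abandoned as your ``first attempt'' and give up the fixed width instead. Lay the blocks so that $\pi$ increases in the direction of `ahead' (the mirror of your descending layout), and charge each edge to its left endpoint. Give vertex $w$ a block of $1 + 2d(w)$ slots, where $d(w) \le b$ is its number of right neighbours: each such edge's operation followed by its result, then $x_w$ at the top.

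An operation in $w$'s block then writes at offset $1$ and reads $x_w$ at offset at most $2d(w)$. It reads $x_u$ at offset at most the total size of the blocks of the at most $b+1$ vertices from $w$ to $u$: $b+1$ seed slots plus two slots per right-going edge out of them. Those edges have both endpoints among the at most $2b+1$ consecutive vertices starting at $w$. These induce a forest, so there are at most $2b$ such edges, and every offset is at most $(b+1) + 4b = 5b+1$.

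No bound on instructions per vertex is needed, only on their total over a window of $2b+1$ vertices. That forest count is where the tree hypothesis enters and where the $5$ comes from.
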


\begin{proof}
Let $\pi$ be a layout of bandwidth $b$ and orient every edge from its
left endpoint (smaller $\pi$) to its right endpoint; $d(w)$ denotes
the number of edges leaving $w$ to the right, at most $b$, since all
right neighbours lie within $b$ positions. Seat the vertices in the
order of $\pi$, each as a block of $1 + 2d(w)$ consecutive slots in
the ring's forward direction: the operations of $w$'s rightward edges,
each followed by its result slot, and then $x_w$ at the top of the
block. An operation in $w$'s block writes its result at offset $1$,
reads $x_w$ at offset at most $2d(w) \leq 2b$, and reads $x_u$, for
$u$ a right neighbour of $w$, at an offset bounded by the total size
of the blocks of the vertices $w'$ with $\pi(w) \leq \pi(w') \leq
\pi(u)$: at most $b + 1$ vertices, contributing $b + 1$ seed slots,
plus two slots per edge leaving them to the right. Those edges have
both endpoints among the at most $2b + 1$ vertices $w'$ with $\pi(w)
\leq \pi(w') \leq \pi(w) + 2b$, which induce a forest with at most
$2b$ edges. So every offset is positive and at most $(b + 1) + 4b =
5b + 1$, the slots are distinct by construction, and the seating uses
$n + 2(n-1) < 3n \leq 10 n W$ slots.
\end{proof}

\begin{lemma}[Rank extraction]\label{lem:rank}
Let $G$ be connected and let $D(G)$ have a relay-free seating with
window $W$ on a ring of period $P \geq 10 n W$. Then $\mathrm{bw}(G)
\leq W - 1$.
\end{lemma}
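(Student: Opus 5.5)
The plan is to read a linear layout of $G$ directly off the positions of the seed packets $x_v$ in the seating. Relay-free means each edge operation $e_{uv}$ reads $x_u$ and $x_v$ itself, not through relay copies. By the window rule of Definition~\ref{def:csm}, both seeds then sit in the $W$ slots ahead of $e_{uv}$, the cyclic interval $(\mathrm{pos}(e_{uv}), \mathrm{pos}(e_{uv})+W]$. So every edge $uv$ of $G$ comes with a cyclic arc $A_{uv}$ that has the two seed slots as endpoints, lies inside that window, and has length at most $W-1$.

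The first and main step is to get rid of the wrap-around, so that positions on the ring can be turned into ranks on a line.
\begin{itemize}
\item Consider the union $U$ of the arcs $A_{uv}$ over all edges of $G$.
\item Since $G$ is connected, $U$ is a connected subset of the circle: arcs of adjacent edges share a seed slot, and there is a path in $G$ between any two edges.
\item A connected union of $n-1$ or more arcs, each covering at most $W$ slots, covers at most $|\mathcal{E}|(W-1)+1$ slots.
\item Hypothesis (ii) counts only $n-1$ edges when $G$ is a tree, which is the case Theorem~\ref{thm:nphard} uses. For general connected $G$, I restrict to the arcs of a spanning tree to get connectivity. I then note that every other edge's arc has both endpoints in the covered set. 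Its length is at most $W-1 < P/2$, so it lies along the covered stretch unless it wraps the complementary gap, and the gap is longer than $W$.
\item Since $P \geq 10nW > (n-1)(W-1)+1$, some slot $g$ is not covered. Cut the ring at $g$ and unroll it into a line. Every arc $A_{uv}$ becomes an ordinary interval of the line.
\end{itemize}
This is the step I expect to need the most care: one must check that each arc really is the one on the uncut side. That is exactly why the period bound $10nW$ is imposed.

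The second step is to define the layout $\pi$: rank the $n$ seeds by their position along the unrolled line. Seeds occupy distinct slots, since a seating is injective. For an edge $uv$, the interval $A_{uv}$ spans at most $W$ slots, so at most $W$ seeds lie in it, endpoints included. Hence $|\pi(u)-\pi(v)| \leq W-1$.

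Taking the maximum over all edges gives $\mathrm{bw}(G) \leq W-1$. Combined with Lemma~\ref{lem:treelayout}, this yields the sandwich of Theorem~\ref{thm:nphard}.
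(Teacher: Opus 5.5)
Your proof is correct and follows essentially the paper's route. You use connectivity to confine all seeds to a short arc (the paper bounds each seed's cyclic distance from a fixed seed by $(n-1)(W-1)$; you take the union of spanning-tree arcs), cut the ring outside that arc, rank seeds linearly, and count at most $W$ seeds in each edge's window. Your handling of non-tree edges via the gap length is a slightly more explicit version of the paper's remark that cyclic distances on an arc shorter than $P/2$ are linear distances.
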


\begin{proof}
Each edge $uv$ forces $x_u$ and $x_v$ into the $W$ slots ahead of
$e_{uv}$, hence to cyclic distance $\leq W - 1$. By connectivity and
the triangle inequality every seed lies within cyclic distance $(n -
1)(W - 1)$ of a fixed one, so all seeds lie in an arc of length $< 2
n W \leq P/5$: the arrangement cannot wrap, cyclic distances on the
arc are linear distances, and an origin outside the arc exists. Let
$\rho(v)$ be the rank of $x_v$ in linear order from that origin.
Between two seeds at distance $\leq W - 1$ lie at most $W - 2$ seeds,
so $|\rho(u) - \rho(v)| \leq W - 1$ for every edge: $\rho$ is a layout
of bandwidth $\leq W - 1$.
\end{proof}

\begin{proof}[Proof of Theorem~\ref{thm:nphard}]
For every tree, Lemmas~\ref{lem:treelayout} and~\ref{lem:rank}
sandwich the optimum: $\mathrm{bw}(G) + 1 \leq W^*(D(G)) \leq
5\,\mathrm{bw}(G) + 1$. A polynomial-time $c$-approximation of $W^*$
on dataflows of trees therefore approximates tree bandwidth within
$5c + o(1)$: a seating within factor $c$ yields, by
Lemma~\ref{lem:rank}, a layout of bandwidth $\leq c\,W^* - 1 \leq
c\,(5\,\mathrm{bw} + 1) - 1$, and an estimate of the value is squeezed
the same way. Approximating the bandwidth of caterpillars, which are
trees, within any constant factor is NP-hard~\cite{DubeyFeigeUnger11}
(announced for maximum degree $3$ in~\cite{Unger98}); so is
approximating $W^*$, and exact computation is the case $c = 1$.
\end{proof}

\begin{remark}[Fixed window]\label{rem:saxe}
For fixed $W$, seatability is decidable in $n^{\OO(W)}$ time by a
Saxe-style window automaton~\cite{Saxe80}; we sketch it and claim no
more. Scan the slots in order, fixing the rotation
by the packet at slot $0$; a state records the ordered contents of
the last $W$ slots together with each seated packet's unmet
obligations (readers not yet seated, operands not yet seated); a
packet with an obligation older than $W$ slots can never be served,
so such states are dead; there are $n^{\OO(W)}$ states, a transition
seats one packet or leaves a hole, and seatability is reachability of
the state in which every obligation is met and the scan wraps
consistently.
\end{remark}

\subsection{Proof of Theorem~\ref{thm:action}, upper bound}
\label{app:action}
Let $\lambda = \lfloor (W-5)/2 \rfloor \geq 1$ and $b = 2^{\lambda +
1}$, so $b = 4$ at $W = 8$. The post-order linearization of a
balanced tree with $b$ leaves has strand load exactly $\log_2 b - 1 =
\lambda$: at a join, the values produced earlier and still needed are
the pending left results of its ancestors and its own left child's,
less the one that immediately precedes it, and the count peaks at
the last join of the deepest right spine (for $b = 8$, in post-order
$a, b, c{=}a{+}b, d, e, f{=}d{+}e, g{=}c{+}f$: at $f$ the values $c$
and $d$ are earlier and still needed, load $2$). (A natural first construction cuts a
post-order linearization of the whole reduction into segments of $b
\geq \alpha (W + \log N)$ operations. A segment of that length has
strand load about $\log_2 b$, above $\lambda$ for fixed $W$ once $N$
is large, so that construction does not seat. The one below does; it
is the one our compiler implements.)

\emph{The hierarchy.} Cut the $N$-leaf reduction into $N/b$ subtrees
of $b$ leaves and combine their roots by a tree in which every node
again has $b$ children that are roots of the level below: a tree of
$N/b + N/b^2 + \dots = \OO(N/b)$ subtrees, each a balanced tree with
$b$ leaves and $b - 1$ operations. Give each subtree its own ring. By
Theorem~\ref{thm:lattice} its post-order seats on a ring of period
$\OO(b)$ with its $b$ leaves as seeds, and the ring also carries one
transfer instruction at its root's cell. The leaves of a bottom
subtree are seeded; the leaves of a higher subtree are the roots of
its $b$ children, landed by their transfer stations into the seed
slots of the cells that consume them (a settle seating is
rotation-invariant, so each receiving ring is rotated until every
landing falls within the transfer's destination offset, which is the
stage-rotation step of our compiler).

\emph{The run.} Process the subtrees in post-order of the tree of
rings, one at a time; every other ring is parked. An active ring runs
one revolution to settle (Appendix~\ref{app:lattice}) and at most one
more to bring its root to its transfer station: $\OO(b)$ cycles with
$\OO(b)$ live packets, $\OO(b^2)$ rotation action, and one transfer.
Over $\OO(N/b)$ subtrees the rotation action is $\OO(N b)$, the
transfers number $\OO(N/b)$, and the run lasts $\OO(N)$ cycles. The
parked action is at most the number of packets, $\OO(N)$, times the
run length: $\OO(N^2)$. With the lower bound $\act \geq N^2/(4s)$ of
the main text, the rotation-action ratio is $\Omega(N/(s\,b))$,
linear in $N$ for fixed $W$; pricing the parked action at
$p_{\mathrm{park}}$, the hierarchy's bill is $e\,\OO(Nb) +
p_{\mathrm{park}}\,c_2 N^2$ with $c_2$ the construction's parked
constant, which the main text reports on the instrument.

\subsection{Proof of Theorem~\ref{thm:hk} on ring hierarchies}
\label{app:hk}
Take a two-level \csm{} whose compute stations sit on fast-level rings
of total capacity $S$ slots and whose input matrices reside on the
slow level or in the environment; a transfer is a crossing of the
boundary. \emph{Lower bound.} Partition the execution into maximal
segments during which at most $S$ transfers occur. During a segment
the fast level holds at most $2S$ distinct values of each of the three
classes, $A$-entries, $B$-entries and $C$-contributions: at most $S$
resident at its start and at most $S$ imported, with partial results
that are evicted and re-imported accounted exactly as in the classical
proof. Every elementary product is computed at a fast-level station
from fast-resident operands, since a station reads only its own ring,
so by the Loomis--Whitney argument
of~\cite{HongKung81,IronyToledoTiskin04} a segment computes at most
$(2S)^{3/2}$ elementary products. There are therefore at least
$n^3/(2S)^{3/2}$ segments, every full segment performs $S$ transfers,
and the transfers number at least $S\,n^3/(2S)^{3/2} - S =
\Omega(n^3/\sqrt S)$; with $w$-bit words and fare $c_0$ per bit the
energy is $c_0 w\,\Omega(n^3/\sqrt S)$. \emph{Upper bound.} Promote
$\sqrt{S/3} \times \sqrt{S/3}$ tiles of $A$ and $B$ to fast rings,
evaluate the corresponding $C$-tile contributions by streaming inner
products on fixed-period compute rings (our instrument's primitive,
Section~\ref{sec:instrument}), demote, and repeat: $\Theta(S)$ words
move per tile pair and there are $\Theta(n^3/S^{3/2})$ tile pairs,
$\OO(n^3/\sqrt S)$ transfers in all.

\subsection{Proposition~\ref{prop:systolic} (systolic embedding) and its proof}
\label{app:systolic}

\begin{proposition}[Systolic embedding]\label{prop:systolic}
Every unidirectional linear systolic algorithm with $n$ cells whose
inputs and results both stream (a moving-results design in the
classical taxonomy~\cite{Kung82,KungLeiserson78}) embeds on a chain of
$n$ rings, one operation station and one transfer station each, with
window $W = \OO(1)$ on every ring and constant slowdown: a cell is a
station, a beat is a constant-size block carrying its instruction
alongside its operands, a stationary cell constant is a station
constant, and a delay between streams is the phase offset of a
scheduled transfer. A one-cell design embeds on a single ring.
\end{proposition}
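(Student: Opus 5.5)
The plan is a direct construction: lay the systolic array's cells one per ring, and check that each ring's local work fits a constant window and that the inter-cell delays become scheduled transfers. First I would fix the classical normal form of a unidirectional moving-results linear array, following \cite{KungLeiserson78,Kung82}. It has $n$ cells $C_1,\dots,C_n$. Each cell holds an optional stationary constant $a_i$ and a constant-size combinational function $g_i$. At beat $k$ the cell reads its input stream registers and result register, and emits new values to $C_{i+1}$ after fixed integer delays $\delta_{i,j}\ge 1$, with $j$ ranging over the $O(1)$ streams. By retiming~\cite{LeisersonSaxe91} I may assume every delay is a constant, since unidirectionality means all edges point from $C_i$ to $C_{i+1}$.

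Cell $C_i$ then goes on ring $R_i$ with one operation station computing $g_i$, with $a_i$ as a station constant. Beats are realized as a fixed-size block of consecutive slots, a \emph{beat block}. The block holds one instruction packet, the $O(1)$ operand slots for the incoming stream values, and one result slot. Because the block has constant size $B$, every read and write the instruction makes lies at offset $<B$ ahead of it, so $W=B=O(1)$ by the same offset bookkeeping as the cell construction of Theorem~\ref{thm:lattice}, with no strands, since a beat's values are consumed within the beat. Ring $R_i$ has period equal to some constant multiple of $B$. A beat block passing the station fires once, so the cell's throughput is one beat per $B$ cycles, which gives constant slowdown.

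Next come the inter-cell links. Ring $R_i$'s transfer station moves each output of a beat block into the operand slot of the beat block on $R_{i+1}$ that is scheduled to consume it. The delay $\delta_{i,j}$ becomes the phase offset of that scheduled transfer: $R_{i+1}$ is rotated by $B\delta_{i,j}$ relative to $R_i$, which is the stage-rotation step used in Appendix~\ref{app:action}. Different streams have different delays, so one rotation cannot serve all of them. I would handle this by encoding the residual differences $\delta_{i,j}-\delta_{i,1}$ as fixed slot offsets inside the destination block, or as a bounded number of extra beat blocks held on the ring. Both options keep $W$ constant, because the delays are constants. Correctness then follows by induction over beats and cells: the value landing in the block of beat $k$ on $R_{i+1}$ is exactly what the array's register would hold at beat $k$. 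For a one-cell design there are no transfers, and the single ring with its beat blocks is the embedding.

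The main obstacle is the multi-stream phase alignment: showing that one transfer station per ring can deliver all streams with their distinct delays, at constant window and without collisions on the landing port. I would try first to serialize the $O(1)$ streams into distinct slots of one block, so that one transfer per beat block carries a multi-word packet. If the transfer primitive moves only one packet, I would instead spread them over $O(1)$ consecutive cycles, at a constant-factor cost in period.
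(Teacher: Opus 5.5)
Your construction is essentially the paper's: one ring per cell with an operation station followed by a transfer station, beats as constant-size blocks so that the window is $\OO(1)$ and a beat costs a constant number of cycles, stationary weights as station constants, and each stream's delay realized as a fixed destination offset of a scheduled transfer, which is the same for every block because all rings advance in lockstep with the same block structure. The paper resolves your ``main obstacle'' as in your fallback. Each block carries one transfer instruction per forwarded stream, placed behind the beat's instruction with the result slot ahead of both, so the value is written before it is forwarded; each transfer lands in block $k+\delta_j$ of the next ring, upstream of that ring's operation station. Distinct delays are therefore distinct destination offsets, whole blocks apart rather than offsets inside one block, and no retiming step is needed.
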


\noindent
(A single ring cannot carry $n > 1$ cells at constant window: an
instruction packet reads the same slots at every station it passes,
so the per-cell delays that make an array systolic have nowhere to
live but in the phase between rings.)

A unidirectional linear systolic algorithm with $n$ cells and moving
inputs and results is a schedule of beats: at beat $k$, cell $c$
reads a bounded number $r$ of values of its input streams, at beat
offsets in a fixed set $\{0, \dots, \rho\}$ behind the current beat,
reads at most one cell constant (a stationary weight), and writes one
value of its output stream, which cell $c{+}1$ reads $\delta_c$ beats
later, a constant delay; unidirectional means that the streams flow
from cell $c$ to cell $c{+}1$ only. Give cell $c$ a ring with one
operation station followed, in the ring's direction of travel, by one
transfer station, so that a block is computed before it is forwarded,
and lay its beats along the ring as consecutive blocks of constant
size $\beta$, at most $r$ plus the number of streams the cell
forwards plus three:
the beat's instruction, one transfer instruction per stream the cell
forwards, its $r$ stream operands (landed by the previous cell, or
seeded for cell $1$), its constant if it travels, and its result slot,
the result slot ahead of both instructions that touch it and the
transfer instruction behind the beat's instruction, so that it
reaches the transfer station, downstream of the operation station,
after the result it reads has been written.
The ring advances one slot per cycle, so the blocks ahead of the
instruction are the earlier beats: every read and write of the
instruction lies in its own block or in the $\rho$ blocks ahead, and
$W \leq (\rho + 1)\beta = \OO(1)$. A transfer instruction of block
$k$ fires when the block passes the transfer station and lands the
value it carries in block $k + \delta$ of the next ring, at a landing
position upstream of that ring's operation station so that the value
arrives before the beat that needs it, and at a destination offset
that is the same for every block of its stream, because all rings
advance in lockstep with the same block structure and period: the
delay $\delta_c$ of the result stream is the phase of the next ring
relative to this one, chosen once, and a cell that forwards two
streams at different speeds (inputs and results moving at different
rates, as in Kung's designs) carries two transfer instructions per
block whose two delays are two destination offsets, one per stream.
The operation station fires once per passing block and the transfer
station once per stream the block forwards, so a beat costs $\beta$
cycles on every ring at once: constant slowdown. A cell
constant that must not travel is a station constant, as it is in the
array. For $n = 1$ there is one ring and no transfer, which is our
artifact's streaming tap; its two-stage pipeline kernel is the chain
with $n = 2$.

\end{document}